\newtheorem{thm}{Theorem}
\newtheorem{lem}{Lemma}[section]
\newtheorem{rem}{Remark}
\newcommand{\eg}{{\it e.g., }}
\newcommand{\ie}{{\it i.e., }}
\newcommand{\E}{{\mathbb E }}
\newcommand{\yv}{{\bf y}}
\newcommand*{\Scale}[2][4]{\scalebox{#1}{$#2$}}%
\begin{document}

\title{Optimal Dynamic Cloud Network Control}

\author{
\IEEEauthorblockN{
Hao Feng\IEEEauthorrefmark{1}\IEEEauthorrefmark{2}, Jaime Llorca\IEEEauthorrefmark{2}, Antonia M. Tulino\IEEEauthorrefmark{2}\IEEEauthorrefmark{3}, Andreas F.  Molisch\IEEEauthorrefmark{1}
}\\
\IEEEauthorblockA{
\IEEEauthorrefmark{1}University of Southern California, Email: \{haofeng, molisch\}@usc.edu\\
\IEEEauthorrefmark{2}Nokia Bell Labs, Email: \{jaime.llorca, a.tulino\}@nokia-bell-labs.com\\
\IEEEauthorrefmark{3}University of Naples Federico II, Italy. Email: \{antoniamaria.tulino\}@unina.it
}
\vspace{-0.15cm}}

\maketitle

\begin{abstract}
Distributed cloud networking enables the deployment of a wide range of services in the form of interconnected software functions instantiated over general purpose hardware at multiple cloud locations distributed throughout the network.
We consider the problem of optimal service delivery over a distributed cloud network, in which nodes are equipped with both communication and computation resources.
We address the design of distributed online solutions that drive flow processing and routing decisions, along with the associated allocation of cloud and network resources. 
For a given set of services, each described by a chain of service functions, we characterize the cloud network capacity region and design a family of dynamic cloud network control (DCNC) algorithms that
stabilize the underlying queuing system, while achieving arbitrarily close to minimum cost with a tradeoff in network delay.
The proposed DCNC algorithms make local decisions based on the online minimization of linear and quadratic metrics obtained from an upper bound on the Lyapunov drift-plus-penalty of the cloud network queuing system.
Minimizing a quadratic vs. a linear metric 
is shown to
improve the cost-delay tradeoff at the expense of increased computational complexity.
Our algorithms are further enhanced with
a shortest transmission-plus-processing distance bias that 
improves delay performance without compromising throughput or overall cloud network cost.
We provide throughput and cost optimality guarantees, convergence time analysis, and
extensive simulations in representative cloud network scenarios.
\end{abstract}

\section{Introduction}
\vspace{-0.25cm}
{\let\thefootnote\relax\footnote{Partial results have been presented in conference papers \cite{Infocom_workshop}, \cite{icc_2016}. This work
has been supported by NSF grant \# $1619129$ and CCF grant \# $1423140$.}}

Distributed cloud networking builds on network functions virtualization (NFV) and software defined networking (SDN)
to enable the deployment of network services in the form of elastic virtual network functions instantiated over general purpose servers at multiple cloud locations and interconnected via a programmable network fabric \cite{nfv_sdn,fxbook}. In
this evolved virtualized environment, \emph{cloud network} operators can host a variety of highly adaptable services over a common
physical infrastructure, reducing both capital and operational expenses, while providing quality of service guarantees.

Together with the evident opportunities of this attractive scenario, there come a number of technical challenges. Critical among them is
deciding where to execute each network function among the
various servers in the network. The ample opportunities for running network functions at multiple locations opens an interesting and challenging space for optimization. In addition, placement decisions must be coordinated with routing decisions that steer the network flows to the appropriate network functions, and with resource allocation decisions that determine the amount of resources (\eg virtual machines) allocated to each function.

The problem of placing virtual network functions in distributed cloud networks was first addressed in \cite{vnf}.
The authors formulate the problem as a generalization of facility location and generalized assignment,
and provide algorithms with bi-criteria approximation guarantees.
However, the model in \cite{vnf} does not capture \emph{service chaining}, where flows are required to go through a given sequence of service functions, nor flow routing optimization.
Subsequently, the work in [5] introduced a flow based model that allows optimizing the distribution (function placement and flow routing) of services with arbitrary function relationships (\eg chaining) over capacitated cloud networks. Cloud services are described via a directed acyclic graph and the function placement and flow routing is determined by solving a minimum cost network flow problem with service chaining constraints.

These studies, however, focus on the design of centralized solutions that assume global knowledge of service demands and network conditions.
With the increasing scale, heterogeneity, and dynamics inherent to both service demands and the underlying cloud network system, we argue that proactive centralized solutions 
must be complemented with distributed online algorithms that enable rapid adaptation to changes in network conditions and service demands, while providing global system objective guarantees.


In this work, we address the service distribution problem in a dynamic cloud network setting, where service demands are unknown and time-varying.
We provide the first characterization of a cloud network's capacity region and design 
throughput-optimal \emph{dynamic cloud network control} (DCNC) algorithms that drive local transmission, processing, and resource allocation decisions with global performance guarantees.
The proposed 
algorithms are based on applying the \emph{Lyapunov drift-plus-penalty} (LDP) control methodology \cite{Neely_book}-\cite{Neely_book2} to a cloud network queuing system that 
captures both the transmission and processing of service flows, consuming network and cloud resources.
We first propose DCNC-L, a control algorithm based on the minimization of a linear metric extracted from an upper bound of a quadratic LDP function of the underlying queuing system.
DCNC-L is a distributed joint flow scheduling and resource allocation algorithm that guarantees overall cloud network stability,
while achieving arbitrarily close to the minimum average cost with a tradeoff in network delay.
We then design DCNC-Q, an extension of DCNC-L that uses a quadratic metric derived from the same upper bound expression of the LDP function.
DCNC-Q preserves the throughput optimality of DCNC-L, and can significantly improve the cost-delay tradeoff at the expense of increased computational complexity.
Finally, we show that network delay can be further reduced by introducing a \emph{shortest transmission-plus-processing distance (STPD) bias} into the optimization metric.
The generalizations of DCNC-L and DCNC-Q obtained by introducing the shortest STPD bias are referred to as EDCNC-L and EDCNC-Q, respectively.


Our contributions can be summarized as follows:
\begin{itemize}
\item We introduce a queuing system for a general class of \emph{multi-commodity-chain (MCC)} flow problems that include the distribution of network services over cloud networks.
In our MCC queuing model, the queue backlog of a given commodity builds up, not only from receiving packets of the same commodity, but \emph{also} from processing packets of the preceding commodity in the service chain. 

\item For a given set of services, we characterize the capacity region of a cloud network in terms of the set of exogenous input flow rates that can be processed by the required service functions and
delivered to the required destinations, while maintaining the overall cloud network stability.
    Importantly, the cloud network capacity region depends on both the cloud network topology and the service structure. 

\item We design a family of throughput-optimal DCNC algorithms that {\em jointly schedule computation and communication resources} for flow processing and transmission without knowledge of service demands. The proposed algorithms allow pushing total resource cost arbitrarily close to minimum with 
a $[O(\epsilon),O(1/\epsilon)]$ cost-delay tradeoff, 
and they converge to within $O(\epsilon)$ deviation from the optimal solution in time $O(1/\epsilon^2)$.

\item Our DCNC algorithms make local decisions via the online minimization of 
linear and quadratic metrics extracted from
an upper bound of the cloud network LDP function.
Using a quadratic vs. a linear metric
is shown to improve the cost-delay tradeoff at the expense of increased computational complexity.
In addition, the use of a STPD bias yields enhanced algorithms that 
can further reduce average delay without compromising throughput
or cost performance.
\end{itemize}

The rest of the paper is organized as follows. We review related work in Section \ref{sec: related_work}. Section \ref{model} describes the system model and problem formulation. Section \ref{sec: network_capacity_region} is devoted to the characterization of the cloud network capacity region. We present the proposed DCNC algorithms in Section \ref{sec: algorithm description}, and analyze their performance in Section \ref{sec: performance_analysis}. Numerical experiments are presented in Section \ref{sec: simulation}, and possible extensions are discussed in Section \ref{sec: extensions}. Finally, we summarize the main conclusions in Section \ref{sec: conclusions}.

\section{Related Work}
\label{sec: related_work}

The problem of dynamically adjusting network resources in response to unknown changes in traffic demands has been extensively studied in previous literature in the context of stochastic network optimization.
In particular, Lyapunov drift control theory is particularly suitable for studying the stability properties of queuing networks and similar stochastic systems.
The first celebrated application of Lyapunov drift control in multi-hop networks is the backpressure (BP) routing algorithm \cite{backpressure}. The BP algorithm achieves throughput-optimality without ever designing an explicit route or having knowledge of traffic arrival rates, hence being able to adapt time-varying network conditions.
By further adding a penalty term (\eg related to network resource allocation cost) to the Lyapunov drift expression, \cite{Neely_book}-\cite{Neely_book2} developed the Lyapunov drift-plus-penalty control methodology. LDP control preserves the throughput-optimality of the BP algorithm while also minimizing average network cost.

LDP control strategies have shown effective in optimizing traditional multi-hop communication networks (as opposed to computation networks). 
Different versions of LDP-based algorithms have been developed. Most of them are based on the minimization of a linear metric obtained from an upper bound expression of the queueing system LDP function \cite{Neely_book}-\cite{Neely_book2}.
Subsequently, the inclusion of a bias term, indicative of network distance, into this linear metric was shown to reduce network delay (especially in low congested scenarios) \cite{Neely_DIVBAR}, \cite{Neely_2005}. 
Furthermore, \cite{Sucha_second_moment} proposed a control algorithm for single-commodity multi-hop networks based on the minimization of a quadratic metric from the LDP upper bound, shown to improve delay performance in the scenarios explored in \cite{Sucha_second_moment}.

In contrast to these prior works, this paper extends the LDP methodology for the dynamic control of network service chains over distributed cloud networks.
The proposed family of LDP-based algorithms 
are suitable for a general class of MCC flow problems that 
exhibit the following key features: (i) {\em Flow chaining:} a commodity, representing the flow of packets at a given stage of a service chain, can be processed into the next commodity in the service chain via the corresponding service function; (ii) {\em Flow scaling:} the flow size of a commodity can differ from the flow size of the next commodity in the service chain after service function processing; (iii) {\em Joint computation/communication scheduling:} different commodities share and compete for both processing and transmission resources, which need to be jointly scheduled.
To the best of our knowledge, this is the first attempt to address the service chain control problem in a dynamic cloud network setting.

%
%

\section{Model and Problem Formulation}
\label{model}


\subsection{Cloud Network Model}
We consider a cloud network modeled as a directed graph $\mathcal G=(\mathcal V,\mathcal E)$ with $|\mathcal V|=N$ vertices and $|\mathcal E|=E$ edges representing the set of network nodes and links, respectively. 
In the context of a cloud network, a node represents a distributed cloud location, in which virtual network functions {(VNFs)} can be instantiated in the the form of, \eg virtual machines (VMs) over general purpose servers, 
while an edge represents a logical link (\eg IP link) between two cloud locations.
We denote by $\delta^{+\!}(i)\in\mathcal V$ and $\delta^{-\!}(i)\in\mathcal V$ the set of outgoing and incoming neighbors of $i\in\mathcal V$ in $\cal G$, respectively.
We remark that in our model, cloud network nodes may represent large datacenters at the core network level, smaller edge datacenters at the metro and/or aggregation networks, or even fog \cite{fog} or cloudlet \cite{cloudlet} nodes at the access network.

We consider a time slotted system with slots normalized to integer units $t\in\{0,1,2,\cdots\}$, 
and characterize the cloud network resource capacities and costs as follows:





\begin{itemize}
\item $\mathcal K_{i}=\{0,1,\cdots,K_{i}\}$: the set of processing resource allocation choices at node $i$
\item $\mathcal K_{ij}=\{0,1,\cdots,K_{ij}\}$: the set of transmission resource allocation choices at link $(i,j)$
\item $C_{i,k}$: the capacity, in \emph{processing flow units} (\eg operations per timeslot), resulting from the allocation of $k$ processing resource units (\eg CPUs) at node $i$
\item $C_{ij,k}$: the capacity, in \emph{transmission flow units} (\eg packets per timeslot), resulting from the allocation of $k$ transmission resource units (\eg bandwidth blocks) at link $(i,j)$
\item $w_{i,k}$: the cost of allocating $k$ processing resource units at node $i$
\item $w_{ij,k}$: the cost of allocating $k$ transmission resource units at link $(i,j)$
\item $e_{i}$: the cost per processing flow unit at node $i$
\item $e_{ij}$: the cost per transmission flow unit at link $(i,j)$
\end{itemize}

\vspace{-0.1cm}

\subsection{Service Model}

A network service $\phi\in\Phi$ is described by a chain of VNFs.
We denote by $\mathcal M_\phi=\{1,2,\cdots,M_\phi\}$ the ordered set of VNFs of service $\phi$.
Hence, the pair $(\phi,m)$, with $\phi\in\Phi$ and $m\in\mathcal M_{\phi}$, identifies the $m$-th function of service $\phi$.
We refer to a client as a source-destination pair $(s,d)$, with $s,d\in\mathcal V$.
A client requesting service $\phi\in\Phi$ implies the request for the packets originating at the source node $s$ to go through the sequence of VNFs specified by $\mathcal M_\phi$ before exiting the network at the destination node $d$.


We adopt a \emph{multi-commodity-chain} (MCC) flow model, in which a commodity identifies the packets 
at a given stage of a service chain for a particular destination. Specifically, we use the triplet $(d, \phi, m)$ to identify the packets that are output of the $m$-th function of service $\phi$ for destination $d$. The source commodity of service $\phi$ for destination $d$ is denoted by $(d, \phi, 0)$ and the final commodity that are delivered to destination $d$ by $(d, \phi, M_{\phi})$, as illustrated in Fig. \ref{service_chain}.

Each VNF has (possibly) different processing requirements. 
We denote by $r^{(\phi,m)}$ the processing-transmission flow ratio of VNF $(\phi,m)$ in processing flow units per transmission flow unit (\eg operations per packet).
We assume that VNFs are fully {\em parallelizable}, in the sense that if the total processing capacity allocated at node $i$, $C_{i,k}$, 
is used for VNF $(\phi,m)$, 
then $C_{i,k}/r^{(\phi,m)}$ packets can be processed in one timeslot. 

In addition, our service model also captures the possibility of flow scaling. We denote by $\xi^{(\phi,m)}>0$ the scaling factor of VNF $(\phi,m)$, in output flow units per input flow unit. That is, the size of the output flow of VNF $(\phi,m)$ is
$\xi^{(\phi,m)}$ times as large as its input flow. We refer to a VNF with $\xi^{(\phi,m)}>1$ as an expansion function, and to a VNF with $\xi^{(\phi,m)}<1$ as a compression function.\footnote{We assume arbitrarily small
packet granularity such that arbitrary positive scaling factors can be defined.}


We remark that our service model applies to a wide range of services that go beyond NFV services, and that includes, for example, Internet of Things (IoT) services, expected to largely benefit from the proximity and elasticity of distributed cloud networks \cite{iot, iotcloud_jsac16}.


\begin{figure}
\centering
\includegraphics[width=8.8cm]{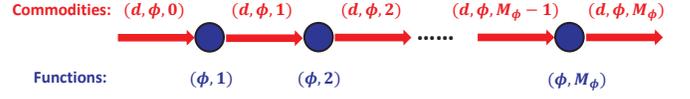}
\caption{A network service chain $\phi\in\Phi$. Service $\phi$ takes the source commodity $(d,\phi,0)$ and delivers the final commodity $(d,\phi,M_\phi)$ after going through the sequence of functions $\{(\phi,1),\cdots,(\phi,M_\phi)\}$.  VNF $(\phi,m)$ takes commodity $(d,\phi,m-1)$ and generates commodity $(d,\phi,m)$. }
\label{service_chain}
\vspace{-0.5cm}
\end{figure}

\subsection{Queuing Model}
\label{dynamic}


We denote by $a_i^{(d,\phi,m)}(t)$ the exogenous arrival rate of commodity $(d,\phi,m)$
at node $i$ during timeslot $t$, 
and by $\lambda_i^{(d,\phi,m)}$ its expected value. 
We assume that $a_i^{(d,\phi,m)}(t)$ is independently and identically distributed (i.i.d.) across timeslots, and that
$a_i^{(d,\phi,m)}(t)=0$ for $0<m\le M_\phi$, \ie there are no exogenous arrivals of intermediate commodities in a service chain.\footnote{The setting in which $a_i^{(d,\phi,m)}(t)\neq 0$ for $0<m\le M_\phi$, while of little practical relevance, does not affect the mathematical analysis in this paper.}

At each timeslot $t$, every node buffers packets according to their commodities and makes transmission and processing
{\bf flow scheduling}
decisions on its output interfaces.
Cloud network queues build up from the transmission of packets from incoming neighbors and from the local processing of packets via network service functions.
We define:
\begin{itemize}
\item $Q_i^{(d,\phi,m)}(t)$: the number of commodity $(d,\phi,m)$ packets in the queue of node $i$ at the beginning of timeslot $t$
\item $\mu_{ij}^{(d,\phi,m)}(t)$: the assigned flow rate at link $(i,j)$ for commodity $(d,\phi,m)$ at time $t$
\item $\mu_{i,\text{pr}}^{(d,\phi,m)}(t)$: the assigned flow rate from node $i$ to its processing unit for commodity $(d,\phi,m)$ at time $t$
\item $\mu_{\text{pr},i}^{(d,\phi,m)}(t)$: the assigned flow rate from node $i$'s processing unit to node $i$ for commodity $(d,\phi,m)$ at time $t$
\end{itemize}

The resulting {\em queuing dynamics} satisfies: 
\begin{align}
& \Scale[0.97]{ Q_i^{\!(d,\phi,m)}\!(t\!+\!1) \leq\! \left[ Q_i^{\!(d,\phi,m)}\!(t) \!- \!\! \displaystyle{\sum_{j\in\delta^{\!+\!}(i)}} \mu_{ij}^{\!(d,\phi,m)}\!(t) - \mu_{i,\text{pr}}^{\!(d,\phi,m)}\!(t) \! \right]^{\!+}} \notag \\
&  \Scale[0.97]{\qquad\qquad + \displaystyle{\sum_{j\in\delta^{\!-\!}(i)}} \mu_{ji}^{\!(d,\phi,m)}\!(t) + \mu_{\text{pr},i}^{\!(d,\phi,m)}\!(t) } + a_i^{\!(d,\phi,m)}\!(t),
\label{eq_queueing_dynamic}
\end{align}
where $[x]^+$ denotes $\max\{x,0\}$, and $Q_d^{(d,\phi,M_{\phi})}(t)=0, \forall d,\phi,t$.
The inequality in \eqref{eq_queueing_dynamic} is due to the fact that the actual number of packets transmitted/processed is  the minimum between the locally available packets and the  assigned flow rate.



\begin{figure}
        \includegraphics[height=3.9cm]{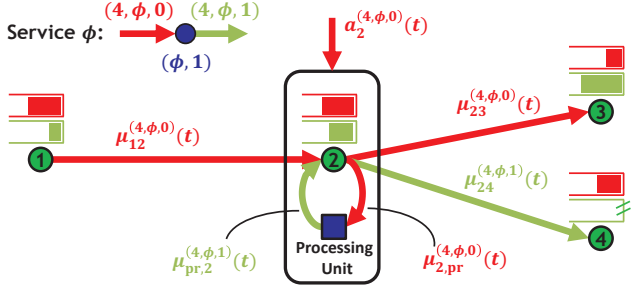}
        \centering{
        \caption{Cloud network queuing model for the delivery of a single-function service for a client with source node $2$ and destination node $4$.
        Packets of both commodity $(4,\phi,0)$ and commodity $(4,\phi,1)$ can be forwarded across the network, where they are buffered in separate commodity queues. In addition, cloud network nodes can process packets of commodity $(4,\phi,0)$ into  packets of commodity $(4,\phi,1)$, which can exit the network at node $4$.}
        \label{queuing_model}
        }
        \vspace{-0.5cm}
\end{figure}

We assume that the processing resources of node $i$ 
are co-located with node $i$ and hence the packets of commodity $(d,\phi,m-1)$ processed during timeslot $t$ are available at the queue of commodity $(d,\phi,m)$ at the beginning of timeslot $t+1$.
We can then describe the {\em service chaining dynamics} at node $i$ as follows:
\begin{align}\label{chaining}
& \mu_{\text{pr},i}^{(d,\phi,m)}(t) = \xi^{(\phi,m)} \mu_{i,\text{pr}}^{(d,\phi,m\Scale[0.7]{-1})}(t),\quad\,\,\,\, \forall d,\phi, m\!>\!0.
\end{align}

The service chaining constraints in \eqref{chaining} state that, at time $t$, the rate of commodity $(d,\phi,m)$ arriving at node $i$ from its processing unit is equal to the rate of commodity $(d,\phi,m-1)$ leaving node $i$ to its processing unit, scaled by the scaling factor $\xi^{(\phi,m)}$. Thus, Eqs. \eqref{eq_queueing_dynamic} and \eqref{chaining} imply that the packets a commodity $(d,\phi,m-1)$ processed during timeslot $t$ are available at the queue of commodity $(d,\phi,m)$ at the beginning of timeslot $t+1$.

As an example, the cloud network queuing system of an illustrative $4$-node cloud network is shown in Fig. \ref{queuing_model}.


%
%

In addition to processing/transmission flow scheduling decisions, at each timeslot $t$, cloud network nodes can also make transmission and processing {\bf resource allocation} decisions. We use the following binary variables to denote the resource allocation decisions at time $t$:
\begin{itemize}
\item $y_{i,k}(t)=1$ if $k$ processing resource units are allocated at node $i$ at time $t$; $y_{i,k}(t)=0$, otherwise
\item $y_{ij,k}(t)=1$ if $k$ transmission resource units are allocated at link $(i,j)$ at time $t$; $y_{ij,k}(t)=0$, otherwise
\end{itemize}



\subsection{Problem Formulation}

The goal is to design a dynamic control policy, defined by a flow scheduling and resource allocation action vector $\{\bm \mu(t), \yv(t)\}$, that supports all average input rate matrices
$\bm \lambda \triangleq \{\lambda_i^{(d,\phi,m)}\}$ that are interior to the \emph{cloud network capacity region} (as defined in Section \ref{sec: network_capacity_region}),
while minimizing the total average cloud network cost.
Specifically, we require the cloud network to be \emph{rate stable} (see Ref. \cite{Neely_book}), \ie
\begin{align}
&  \lim_{t\rightarrow\infty} \frac{Q_i^{(d,\phi,m)}(t)}{t} = 0 \quad \text{with prob. 1}, \qquad \forall i, d,\phi,m. \label{stab}
\end{align}

The dynamic cloud network control problem 
can then be formulated as follows:
\begin{subequations}\label{opt2}
\begin{align}
&\text{min} && \!\! \underset{t\rightarrow\infty}{\limsup} \quad  \frac{1}{t} \, \sum\nolimits_{\tau=0}^{t-1} \, \mathbb{E}\left\{h(\tau)\right\} \label{obj2}\\
&\text{s.t.} && \!\! \text{The cloud network is rate stable}, \label{stab2}\\
&&& \!\! \mu_{\text{pr},i}^{(d,\phi,m)}\!(\tau)\! =\! \xi^{(\phi,m)}\! \mu_{i,\text{pr}}^{(d,\phi,m\Scale[0.6]{-1})}\!(\tau),\ \ \ \forall i, d, \phi, m, \tau, \! \label{chain2}\\
&&& \!\!\!\!\! \sum_{(d,\phi,m)} \!\! \mu_{i,\text{pr}}^{(d,\phi,m)}\!(\tau) r^{(\phi,m\Scale[0.6]{+1})} \! \leq \! \sum_{k\in\mathcal K_i}\!\! C_{i,k} \, y_{i,k}(\tau), \, \forall i,\tau, \! \label{cappr2} \\
&&& \!\!\!\!\! \sum_{(d,\phi,m)} \!\! \mu_{ij}^{(d,\phi,m)}(\tau) \! \leq \! \sum_{k\in\mathcal K_{ij}} \!\! C_{ij,k} \, y_{ij,k}(\tau), \,\,\,\, \forall (i,j), \tau, \! \label{captr2}\\
&&& \!\! \mu_{i,\text{pr}}^{(d,\phi,m)}\!(\tau),\ \mu_{\text{pr},i}^{(d,\phi,m)}\!(\tau),\ \mu_{ij}^{(d,\phi,m)}\!(\tau)\! \in {\mathbb R}^+,  \notag\\
&&& \!\! \qquad\qquad\qquad\qquad\qquad\qquad\,\,\, \forall i,(i,j), d, \phi, m, \tau, \label{fs}\\
&&& \!\! y_{i,k}(\tau),\ y_{ij,k}(\tau) \in\{0,1\},   \quad\,\,\,  \forall i,(i,j), d,\phi, m, \tau, \! \label{ys}
\end{align}
\end{subequations}
where $h(\tau)\triangleq\sum_{i\in \mathcal V}{h_i(\tau)}$, with 
\begin{align}
\label{obj}
& h_i(\tau) = \sum\limits_{k \in {{\cal K}_{i}}}\! {{w_{i,k}}} {y_{i,k}}(\tau) + {e_i} \!\! \sum_{(d,\phi,m)} \!\!\! \mu_{i,\text{pr}}^{(d,\phi,m)}\!(\tau) r^{(\phi,m\Scale[0.65]{+1})} \notag \\
& \quad +\!\!  \sum\limits_{j\in \delta^{\!+\!}(i)} \!\left[ \sum\limits_{k \in \mathcal K_{ij}} w_{ij,k} y_{ij,k}(\tau) + e_{ij} \!\! \sum_{(d,\phi,m)} \!\! \mu_{ij}^{(d,\phi,m)}\!(\tau) \right] \!,\!
\end{align}
denotes the cloud network operational cost at time $\tau$.



In \eqref{opt2},
Eqs. \eqref{chain2},  \eqref{cappr2}, and \eqref{captr2} describe instantaneous service chaining, processing capacity, and transmission capacity constraints, respectively. 

\begin{rem}
As in Eqs. \eqref{chain2},  \eqref{cappr2}, \eqref{obj}, throughout this paper, it shall be useful to establish relationships between consecutive commodities and/or functions in a service chain. For ease of notation, unless specified, we shall assume that any expression containing a reference to $m-1$ will only be applicable for $m>0$ and any expression with a reference to $m+1$ will only be applicable for $m<M_{\phi}$.
\end{rem}

In the following section, we characterize the cloud network capacity region in terms of the average input rates that can be stabilized by any control algorithm that satisfies constraints \eqref{stab2}-\eqref{ys}, as well as the minimum average cost required for cloud network stability.


%
%



\section{Cloud Network Capacity Region}
\label{sec: network_capacity_region}

The cloud network capacity region $\Lambda(\mathcal G, \Phi)$ is defined as the closure of all average input rates $\bm \lambda$  that can be stabilized by a cloud network control algorithm, whose decisions conform to the cloud network and service structure $\{\mathcal G, \Phi\}$. 

\begin{thm}\label{thm: network_capacity_region}
The cloud network capacity region $\Lambda(\mathcal G, \Phi)$ consists of all average input rates $\bm \lambda$ for which, for all $i,j,k,d,\phi,m$, there exist MCC flow variables $f_{ij}^{(d,\phi,m)}$, $f_{\text{pr},i}^{(d,\phi,m)}$, $f_{i,\text{pr}}^{(d,\phi,m)}$, together with probability values $\alpha_{ij,k}$, $\alpha_{i,k}$, $\beta_{ij,k}^{(d,\phi,m)}$, $\beta_{i,k}^{(d,\phi,m)}$ 
such that

\begin{subequations}
\begin{align}
& \sum\limits_{j \in \delta^{\!-\!} \left( i \right)} \!\!\! {f_{ji}^{\left( {d,\phi ,m} \right)}} \!\! + f_{\emph{pr},i}^{\left( {d,\phi ,m} \right)} \!\! +
\lambda_i^{(d,\phi,m)} \! \le \!\!\sum\limits_{j \in \delta^{\!+\!} \left( i \right)} \!\!\! {f_{ij}^{\left( {d,\phi ,m} \right)}}  \!\! + f_{i,\emph{pr}}^{\left( {d,\phi ,m} \right)},  \label{eq_thm1_stability} \\
& f_{\emph{pr},i}^{\left( {d,\phi ,m} \right)}=\xi^{(\phi,m)}f_{i,\emph{pr}}^{\left( {d,\phi ,m-1} \right)}, 
\label{eq_thm1_processing_conservation}\\
& f_{i,\emph{pr}}^{\left( {d,\phi ,m} \right)} \le \frac{1}{r^{(\phi ,m\Scale[0.7]{+1})}} \sum\nolimits_{k \in {{\cal K}_{i}}} {{\alpha _{i,k}} \beta _{i,k}^{\left( {d,\phi ,m} \right)} {C_{i,k}}},
 \label{eq_thm1_processing_capacity_constraint} \\
& f_{ij}^{\left( {d,\phi ,m} \right)} \le \sum\nolimits_{k \in {{\cal K}_{ij}}} {{\alpha _{ij,k}}\beta _{ij,k}^{\left( {d,\phi ,m} \right)}{C_{ij,k}}}, \label{eq_thm1_flow_capacity_constraint}\\
& f_{i,\emph{pr}}^{\left( {d,\phi ,{M_\phi }} \right)} = 0,\ f_{\emph{pr},i}^{\left( {d,\phi ,0} \right)} = 0,\ f_{dj}^{\left( {d,\phi ,{M_\phi }} \right)} = 0, \label{eq_thm1_flow_boundary_conditions} \\
& f_{i,\emph{pr}}^{\left( {d,\phi,m} \right)} \geq 0,\ f_{ij}^{\left( {d,\phi ,m} \right)} \geq 0,  \label{eq_thm1_flow_positve_conditions} \\
& \sum\nolimits_{k \in {{\cal K}_{ij}}} {{\alpha _{ij,k}}} \le 1, \sum\nolimits_{k \in {{\cal K}_{i}}} {{\alpha _{i,k}}} \le 1,  \label{alphas}  \\
& \sum\nolimits_{\left( {d,\phi ,m} \right)} {\beta _{ij,k}^{\left( {d,\phi ,m} \right)}}  \le 1, \sum\nolimits_{\left( {d,\phi ,m} \right)} {\beta _{i,k}^{\left( {d,\phi ,m} \right)}}  \le 1.  \label{betas}
\end{align}
\end{subequations}

Furthermore, the minimum average cloud network cost required for network stability is given by
\begin{align}
\label{eq_thm1_minimum_cost}
\overline h^* = \min_{\{\alpha_{ij,k}, \alpha_{i,k}, \beta_{ij,k}^{(d,\phi,m)}, \beta_{i,k}^{(d,\phi,m)}\}} \overline h,
\end{align}
 where
\begin{align}
\label{eq:underline}
\overline h &= \sum\limits_{i} \sum\limits_{k \in {\cal K}_{i}} {\alpha _{i,k}} \left( w_{i,k} + e_i C_{i,k} \!\! \sum\limits_{\left( {d,\phi ,m} \right)} \!\! \beta _{i,k}^{\left( {d,\phi ,m} \right)} \right) \nonumber\\
&+\sum\limits_{(i,j)} {\sum\limits_{k \in {{\cal K}_{ij}}} {{\alpha _{ij,k}}\left( {{w_{ij,k}} + {e_{ij}}{C_{ij,k}} \! \sum\limits_{\left( {d,\phi ,m} \right)} \! {\beta _{ij,k}^{\left( {d,\phi ,m} \right)}} } \right)} }.
\end{align}
\end{thm}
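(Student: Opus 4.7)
The plan is to prove both directions of the characterization of $\Lambda(\mathcal{G},\Phi)$ and then deduce \eqref{eq_thm1_minimum_cost}, using the standard stationary-randomized-policy technique of Lyapunov drift theory adapted to the multi-commodity-chain queuing model with its chained commodities, scaling factors, and joint processing/transmission resources.

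For the necessity (converse) direction, I would assume rate stability \eqref{stab} holds under some admissible policy and define, along a suitable subsequence, time-averaged flow and resource variables:
\[
f_{ij}^{(d,\phi,m)} = \lim_{T\to\infty}\frac{1}{T}\sum_{\tau=0}^{T-1}\mathbb{E}\{\mu_{ij}^{(d,\phi,m)}(\tau)\},
\]
and analogously for $f_{i,\text{pr}}^{(d,\phi,m)}$ and $f_{\text{pr},i}^{(d,\phi,m)}$; set $\alpha_{i,k}$ as the long-run fraction of timeslots in which $k$ processing units are allocated at node $i$, and $\beta_{i,k}^{(d,\phi,m)}$ as the expected fraction of the resulting capacity $C_{i,k}$ devoted to commodity $(d,\phi,m)$ (weighted by $r^{(\phi,m+1)}$), with analogous definitions on links. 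Then \eqref{eq_thm1_stability} drops out of \eqref{eq_queueing_dynamic} after dividing by $T$, taking $T\to\infty$, and invoking $Q/t\to 0$; \eqref{eq_thm1_processing_conservation} is the time average of \eqref{chaining}; \eqref{eq_thm1_processing_capacity_constraint}--\eqref{eq_thm1_flow_capacity_constraint} follow by averaging \eqref{cappr2}--\eqref{captr2}; the boundary conditions \eqref{eq_thm1_flow_boundary_conditions} are built into the queuing model; and \eqref{alphas}--\eqref{betas} express that $\{\alpha\}$ and, conditionally, $\{\beta\}$ define probability distributions.

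For the sufficiency direction, given any $\{f,\alpha,\beta\}$ satisfying the stated constraints I would construct an i.i.d.\ stationary randomized policy: at every $\tau$, each node and each link independently samples an allocation $k$ with probability $\alpha_{i,k}$ (respectively $\alpha_{ij,k}$), and then splits the resulting capacity among commodities according to the $\beta$ weights, funneling processed output into the next-stage commodity queue as required by \eqref{chaining}. The mean flow rates coincide by construction with the $f$'s, so \eqref{eq_thm1_stability} says every queue sees non-positive expected net arrivals; a Foster--Lyapunov argument using the quadratic function $\tfrac{1}{2}\sum_i [Q_i^{(d,\phi,m)}(t)]^2$ (or an equivalent Loynes-type construction accounting for the $[\,\cdot\,]^+$ in \eqref{eq_queueing_dynamic}) then delivers rate stability, showing $\bm{\lambda}\in\Lambda(\mathcal{G},\Phi)$.

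The cost claim \eqref{eq_thm1_minimum_cost} follows by coupling the two directions: substituting the time-averaged variables obtained from any stable policy into \eqref{obj} and taking limits reproduces exactly the expression $\overline h$ in \eqref{eq:underline}, so the long-run average cost of any stable policy is bounded below by $\overline h^*$; conversely, the stationary randomized policy built from a minimizing $\{\alpha^{*},\beta^{*}\}$ attains $\overline h^{*}$ while preserving stability. The subtlest step, and the one I would treat most carefully, is the decomposition of time-averaged flows into the product form $\alpha_{i,k}\,\beta_{i,k}^{(d,\phi,m)}\,C_{i,k}$ in a way that simultaneously respects the chaining equality \eqref{eq_thm1_processing_conservation} with its scaling factor $\xi^{(\phi,m)}$, the per-packet loads $r^{(\phi,m+1)}$, and the simplex constraints \eqref{alphas}--\eqref{betas}; verifying that these definitions yield genuine conditional probability distributions, rather than mere nonneg\-ative numbers, is where the MCC structure genuinely departs from the classical multi-commodity setting of \cite{Neely_book}, and is the main obstacle. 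Everything else is a relabeling of the standard argument.
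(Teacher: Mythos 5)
Your overall strategy is the same as the paper's: time-average a stabilizing policy to extract the flow and probability variables for necessity, build an i.i.d.\ stationary randomized policy and run a quadratic Lyapunov drift argument for sufficiency, and couple the two directions to obtain the cost claim. The differences are in the technical execution, and one of them is a genuine soft spot. In your necessity argument you define $f_{ij}^{(d,\phi,m)}$ as the limiting time average of the \emph{assigned} rates $\mu_{ij}^{(d,\phi,m)}(\tau)$. Under the paper's null-packet convention, assigned rates are met with null packets that consume capacity but never enter a queue, so the time-averaged assigned inflow at a node can strictly exceed what the queue actually absorbs; with that definition the conservation inequality \eqref{eq_thm1_stability} can fail even for a perfectly stable policy (consider a policy that assigns full capacity to every commodity on every link at every slot). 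The paper avoids this by counting \emph{delivered} packets --- $F_{ij}^{(d,\phi,m)}(t)$, $F_{i,\text{pr}}^{(d,\phi,m)}(t)$, etc., the packets that eventually complete the entire chain and exit at $d$ --- which satisfy the exact per-node conservation law \eqref{eq_thm1_stabilityproof}, are dominated by the allocated capacities (giving \eqref{eq_thm1_processing_capacity_constraint}--\eqref{eq_thm1_flow_capacity_constraint}), and are tied to $\bm\lambda$ through stability. Relatedly, the limits you write down need not exist; the paper passes to a convergent subsequence via Bolzano--Weierstrass, chosen so that the average cost simultaneously converges to its liminf, which is what makes the cost lower bound go through.

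Two smaller points. In sufficiency, ``non-positive expected net arrivals'' gives only the bounded drift $\Delta(\mathbf Q(t))\le NB_0$, which yields mean rate stability but not the strong stability and probability-one statements the paper needs; the paper instead assumes the $\kappa$-slack $\bm\lambda+\kappa\mathbf 1$ is feasible and obtains the strictly negative drift $-\kappa\sum Q$ in \eqref{eq_stationary_LDP}, which is legitimate because the capacity region is a closure. Finally, the product decomposition $\alpha\,\beta\,C$ that you single out as the main obstacle is handled quite directly in the paper: the quantities $\alpha_{i,k}(t)/t$ and $\beta_{i,k}^{(d,\phi,m)}(t)\,r^{(\phi,m+1)}/(\alpha_{i,k}(t)C_{i,k})$ are empirical frequencies lying in $[0,1]$ by construction, so their subsequential limits are automatically valid (conditional) probabilities satisfying \eqref{alphas}--\eqref{betas}; the chaining equality \eqref{eq_thm1_processing_conservation} with its factor $\xi^{(\phi,m)}$ comes separately from the deterministic relation between $F_{\text{pr},i}^{(d,\phi,m)}(t)$ and $F_{i,\text{pr}}^{(d,\phi,m-1)}(t)$, not from the decomposition itself.
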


\begin{proof}
The proof of Theorem \ref{thm: network_capacity_region} is given by Appendix \ref{Proof of capacity_region_nessesary} .

\end{proof}


In Theorem \ref{thm: network_capacity_region},
\eqref{eq_thm1_stability} and \eqref{eq_thm1_processing_conservation} describe generalized computation/communication flow conservation constraints and service chaining constraints, essential for cloud network stability, while
\eqref{eq_thm1_processing_capacity_constraint} and \eqref{eq_thm1_flow_capacity_constraint} describe processing and transmission capacity constraints.
The probability values $\alpha_{i,k}$, $\alpha_{ij,k}$, $\beta_{i,k}^{(d,\phi,m)}$,  $\beta_{ij,k}^{(d,\phi,m)}$
define a \emph{stationary randomized policy} as follows: 
\begin{itemize}
\item  $\alpha_{i,k}$: the probability that $k$ processing resource units  are allocated at node $i$;
\item  $\alpha_{ij,k}$: the probability that $k$ transmission resource units are allocated at link $(i,j)$;
\item  $\beta_{i,k}^{(d,\phi,m)}$: the probability that node $i$ processes commodity $(d,\phi,m)$, conditioned on the allocation of $k$ processing resource units at node  $i$;
\item  $\beta_{ij,k}^{(d,\phi,m)}$: the probability that link $(i,j)$  transmits commodity $(d,\phi,m)$, conditioned on the allocation of $k$ transmission resource units at link $(i,j)$.
\end{itemize}


Hence, Theorem \ref{thm: network_capacity_region} demonstrates that, for any input rate $\bm \lambda \in \Lambda(\mathcal G, \Phi)$, there exists a stationary randomized policy 
that uses fixed probabilities to make transmission and processing decisions at each timeslot, which can support the given $\bm \lambda$, 
while minimizing overall average cloud network cost.
However, the difficulty in directly solving for the parameters that characterize such a stationary randomized policy and the requirement on the knowledge of
$\bm \lambda$, motivates the design of online dynamic cloud network control solutions with matching performance guarantees. 

\section{Dynamic Cloud Network Control Algorithms}
\label{sec: algorithm description}



In this section, we describe distributed DCNC strategies that account for both processing and transmission flow scheduling and resource allocation decisions. We first propose  DCNC-L, an algorithm based on minimizing a \emph{linear metric} obtained from an upper bound of the quadratic LDP function, where only linear complexity is required for making local decisions at each timeslot. We then propose DCNC-Q, derived from the minimization of a \emph{quadratic metric} obtained from the LDP bound. DCNC-Q allows simultaneously scheduling multiple commodities on a given transmission or processing interface at each timeslot, leading to a more balanced system evolution that can improve the cost-delay tradeoff at the expense of quadratic computational complexity.
Finally, enhanced versions of the aforementioned algorithms, referred to as EDCNC-L and EDCNC-Q, are constructed by adding a shortest transmission-plus-processing distance (STPD) bias extension that is shown to further reduce network delay in low congested scenarios.

\subsection{Cloud Network Lyapunov drift-plus-penalty}
\label{sec: LDP}

Let ${\bf Q}(t)$ represent the vector of queue backlog values of all the commodities at all the cloud network nodes. 
The cloud network \emph{Lyapunov drift} is defined as
\begin{equation}
\Delta \left( {{\bf{Q}}\left( t \right)} \right) \triangleq \frac{1}{2}\mathbb{E}\left\{ {\left. {{{\left\| {{\bf{Q}}\left( {t + 1} \right)} \right\|}^2} - {{\left\| {{\bf{Q}}\left( t \right)} \right\|}^2}} \right|{\bf{Q}}\left( t \right)} \right\},
\end{equation}
where $\|\cdot\|$ indicates Euclidean norm, and the expectation is taken over the ensemble of all the exogenous source commodity arrival realizations.

The one-step Lyapunov drift-plus-penalty (LPD) is then defined as 
\begin{align}
\label{eq_ldp}
\Delta &\left( {{\bf{Q}}\left( t \right)} \right)+ V\mathbb{E}\left\{ {\left. {{h}(t)} \right|{\bf{Q}}\left( t \right)} \right\} ,
\end{align}
where $V$ is a non-negative control parameter that determines the degree to which resource cost minimization is
emphasized.

After squaring both sides of \eqref{eq_queueing_dynamic} and following standard LDP manipulations (see Ref. \cite{Neely_book2}), the LDP can upper bound  as
\begin{align}
\label{eq_lypunov_bound1}
\Delta &\left( {{\bf{Q}}\left( t \right)} \right)+ V\mathbb{E}\left\{ {\left. {{h}(t)} \right|{\bf{Q}}\left( t \right)} \right\} \le V\mathbb{E}\left\{ {\left. {{h}(t)} \right|{\bf{Q}}\left( t \right)} \right\}+\nonumber\\
& \mathbb{E}\left\{\left. \Gamma(t) + Z(t) \right|{\bf Q}(t)\right\} + \sum_i\!\!\!\sum\limits_{\left( {d,\phi ,m} \right)}\!\!\!{\lambda_i^{(d,\phi,m)}Q_i^{d,\phi,m}(t)},
\end{align}
where
\begin{align}
&\Gamma(t) \triangleq\frac{1}{2}  \sum_i \! \sum\limits_{\left( {d,\phi ,m} \right)} \!{\left\{ {{{\left[ {\sum\limits_{j\in \delta^{\!+\!}(i)} {\mu _{ij}^{(d,\phi ,m)}(t)}  + \mu _{i,\text{pr}}^{(d,\phi ,m)}(t)} \right]}^2}} \right.} \nonumber\\
&\qquad\ \left. { + {{\left[ {\sum_{j\in \delta^{\!-\!}(i)}  {\mu _{ji}^{(d,\phi ,m)}\!(t)}  + \mu _{\text{pr},i}^{(d,\phi ,m)}\!(t) + a_i^{\left( {d,\phi ,m} \right)}\!\left( t \right)} \right]}^2}}\! \right\}\!,\nonumber\\
&Z(t)\triangleq \sum_i \! \sum\limits_{\left( {d,\phi ,m} \right)} {Q_i^{\left( {d,\phi ,m} \right)}\!\left( t \right)\left[ {\sum_{j\in \delta^{\!-\!}(i)}  {\mu _{ji}^{(d,\phi ,m)}(t)} } \right.} \nonumber\\
&\qquad\quad\,\,\,\, \left. { + \, \mu _{\text{pr},i}^{(d,\phi ,m)}\!(t) - \! \sum_{{j\in \delta^{\!+\!}(i)}  } \! {\mu _{ij}^{(d,\phi ,m)}(t)}  - \mu _{i,\text{pr}}^{(d,\phi ,m)}(t)} \right]\!\!.\nonumber
\end{align}

Our DCNC algorithms extract different metrics from the right hand side of \eqref{eq_lypunov_bound1}, 
whose minimization leads to a family of throughput-optimal flow scheduling and resource allocation policies with different cost-delay tradeoff performance.

\subsection{Linear Dynamic Cloud Network Control (DCNC-L)}
\label{subsec: alg_dcnc_l}

DCNC-L is designed to minimize, at each timeslot, the linear metric $Z(t)+Vh(t)$ obtained from the right hand side of \eqref{eq_lypunov_bound1}, equivalently expressed as 
\begin{subequations}\label{algor}
\begin{align}
&\text{min} \!\!   &&\!\!\sum\limits_{i \in {\cal V}}\! {\left[\! {V{h_i}(t)\!- \!\!\! \sum\limits_{\left( {d,\phi ,m} \right)} {\!\!\! \left(\sum\limits_{j \in \delta^{\!+\!}(i)} {Z_{ij,\text{tr}}^{\left( {d,\phi ,m} \right)}(t) + Z_{i,\text{pr}}^{\left( {d,\phi ,m} \right)}(t)}\!\! \right)} } \!\right]} \label{obj3}\\
&\text{s.t.} && \eqref{cappr2}-\eqref{ys},
\end{align}
\end{subequations}
where,
\begin{align}
& \Scale[1.01]{ Z_{ij,\text{tr}}^{\left( {d,\phi ,m} \right)}\!(t) \!\triangleq\! \mu _{ij}^{(d,\phi ,m)}\!(t)\!\left[ {Q_i^{(d,\phi ,m)}\!(t) - Q_j^{(d,\phi ,m)}\!(t)} \right]  },  \notag  \\
& \Scale[1.01] {Z_{i,\text{pr}}^{\left( {d,\phi ,m} \right)}\!(t) \!\triangleq\! \mu _{i,\text{pr}}^{(d,\phi ,m)}\!(t)\!\!\left[ {Q_i^{(d,\phi ,m)}\!(t) - {\xi ^{(\phi,m\Scale[0.6]{+1} )}}Q_i^{(d,\phi ,m\Scale[0.6]{+1})}\!(t)}\! \right]}\!. \notag
\end{align}

The goal of minimizing \eqref{obj3} at each timeslot is to greedily push the cloud network queues 
towards a lightly congested state, while minimizing cloud network resource usage regulated by the control parameter $V$. Observe that \eqref{obj3} is a linear metric with respect to $\mu_{i,\text{pr}}^{(d,\phi,m)}(t)$ and $\mu_{ij}^{(d,\phi,m)}(t)$, and hence \eqref{algor} can be decomposed into the implementation of \emph{Max-Weight-Matching} \cite{Algorithm_Kleignberg} at each node, leading to the following distributed flow scheduling and resource allocation policy:

\underline{Local processing decisions:}
At the beginning of each timeslot $t$, each node $i$ observes its local queue backlogs and performs the following operations:

\begin{enumerate}

\item Compute the \emph{processing utility weight} of each processable commodity, $(d,\phi,m), m<M_{\phi}$:
\begin{align}
\Scale[1]{\!\! W_i^{(d,\phi,m)}\! (t) \!=\!\! \left[\!\frac{ Q_i^{(d,\phi,m)}\!(t) -  \xi^{(\phi,m\Scale[0.6]{+1})}Q_i^{(d,\phi,m+1)}\!(t) }{r^{(\phi,m+1)}}\!-\! V e_{i}  \!\right]^{\!+} }\!, \notag
\end{align}
and set $W_i^{(d,\phi,M_{\phi})}\!(t)=0, \forall d,\phi$.
$W_i^{(d,\phi,m)}(t)$ is indicative of the potential benefit of processing commodity $(d,\phi,m)$ into commodity $(d,\phi,m\!+\!1)$ at time $t$, in terms of the difference between local congestion reduction 
and processing cost per unit flow.


\item Compute the max-weight commodity: 
\begin{align}
\Scale[1]{ (d,\phi,m)^* =   \underset{(d,\phi,m)}{\arg\max} \left\{ W_i^{(d,\phi,m)}{(t)} \right\} }. \notag
\end{align}

\item
If $W_i^{(d,\phi,m)^*}\!\!{(t)}=0$, set, $k^*=0$.
Otherwise,
\begin{align}
\Scale[1]{ k^*  =  \underset{k}{\arg\max} \left\{ C_{i,k} W_i^{(d,\phi,m)^*}\!\!{(t)} - V w_{i,k}  \! \right\} }. \notag
\end{align}


\item Make the following resource allocation and flow assignment decisions:
\begin{align}
& y_{i,k^*}(t)=1,\notag\\
&y_{i,k}(t)=0,\quad \forall k\neq k^*, \notag\\
& \mu_{i,\text{pr}}^{(d,\phi,m)^*}(t) =
C_{i,{k^*}}\left/r^{(\phi ,m+1)^*}\right.,\notag\\
& \mu_{i,\text{pr}}^{(d,\phi,m)}(t) = 0,\quad\forall (d,\phi,m)\neq (d,\phi,m)^*. \notag
\end{align}

\end{enumerate}

\underline{Local transmission decisions:}
At the beginning of each timeslot $t$, each node $i$ observes its local queue backlogs and those of its neighbors, and performs the following operations for each of its outgoing links $(i,j)$, $j\in\delta^{\!+\!}(i)$:

\begin{enumerate}

\item Compute the \emph{transmission utility weight} of each commodity $(d,\phi,m)$:
\begin{align}
{W_{ij}^{(d,\phi,m)}\left(t\right) }  = \left[ Q_i^{(d,\phi,m)}(t) - Q_j^{(d,\phi,m)}(t) - V e_{ij}  \right]^+. \notag
\end{align}
\item Compute the max-weight commodity: 
\begin{align}
(d,\phi,m)^* =  \underset{(d,\phi,m)}{\arg\max} {\left\{ W_{ij}^{(d,\phi,m)}(t) \right\} }. \notag
\end{align}

\item If ${W_{ij}^{(d,\phi,m)^*}\left(t\right)}=0$, set, $k^*=0$.
Otherwise,
\begin{align}
k^* =  \underset{k}{\arg\max} \left\{ C_{ij,k} W_{ij}^{(d,\phi,m)^*} \!\! {(t)}  - V w_{ij,k}  \! \right\}. \notag
\end{align}

\item Make the following resource allocation and flow assignment decisions:
\begin{align}
& y_{ij,k^*}(t)=1,\notag\\
& y_{ij,k}(t)=0,\quad \forall k\neq k^*, \notag\\
& \mu_{ij}^{(d,\phi,m)^*}(t) = C_{ij,k^*}, \notag\\
& \mu_{ij}^{(d,\phi,m)}(t) = 0 \quad \forall (d,\phi,m)\neq (d,\phi,m)^*. \notag
\end{align}

\end{enumerate}

Implementing the above algorithm imposes low complexity on each node. Let $J$ denote the total number of commodities. We have $J\le N\sum\nolimits_\phi{(M_\phi+1)}$. Then, the total complexity associated with the processing and transmission decisions of node $i$ at each timeslot is $O(J+K_i + \sum\nolimits_{j\in \delta^+(i)}K_{ij})$, which is linear with respect to the number of commodities and the number of resource allocation choices.

\begin{rem}
Recall that, while assigned flow values can be larger than the corresponding queue lengths, a practical algorithm will only send those packets available for transmission/processing. However, as in \cite{Neely_book}-\cite{Sucha_second_moment}, in our analysis, we assume a policy that meets assigned flow values with null packets (\eg filled with idle bits) when necessary. Null packets consume resources, but do not build up in the network.
\end{rem}


\subsection{Quadratic Dynamic Cloud Network Control (DCNC-Q)}

DCNC-Q is designed to minimize, at each timeslot, the metric formed by the sum of the quadratic terms $(\mu_{ij}^{(d,\phi,m)}(t))^2$, $(\mu_{i,\text{pr}}^{(d,\phi,m)}(t))^2$, and $(\mu_{\text{pr},i}^{(d,\phi,m)}(t))^2$, extracted from
$\Gamma(t)$, and $Z(t)+Vh(t)$, on the right hand side of \eqref{eq_lypunov_bound1}, equivalently expressed as
\begin{subequations}\label{algor_Q}
\begin{align}
&\text{min}\!\!\!\!  && \displaystyle \sum\limits_{i \in {\cal V}} {\left\{ \displaystyle  \sum_{\left( {d,\phi ,m} \right)} \sum\limits_{j \in \delta^{\!+\!} (i)} {\left[ {{{\left( {\mu _{ij}^{(d,\phi ,m)}\!(t)} \right)}^2} - Z_{ij,\text{tr}}^{\left( {d,\phi ,m} \right)}\!(t)} \right]}   \right.} \nonumber\\
&&& +\!\!  \sum_{\left( {d,\phi ,m} \right)} \!{\left[ \frac{{1 \!+\! {{\left( {{\xi ^{(\phi ,m + 1)}}} \right)}^2}}}{2}{{\left( \! {\mu _{i,\text{pr}}^{(d,\phi ,m)}\!(t)} \! \right)}^{\!2}} \! - Z_{i,\text{pr}}^{\left( {d,\phi ,m} \right)}\!(t) \right]} \nonumber\\
&&& \Big.  + \, V{h_i}(t)  \Big\}
\label{obj_Q}\\
&\text{s.t.} && \eqref{cappr2}-\eqref{ys}.
\end{align}
\end{subequations}


The purpose of \eqref{algor_Q} is also to reduce the congestion level while minimizing resource cost. However, by introducing the quadratic terms $(\mu_{i,\text{pr}}^{(d,\phi,m)}(t))^2$ and $(\mu_{ij}^{(d,\phi,m)}(t))^2$, minimizing \eqref{obj_Q} results in a ``smoother'' and more ``balanced'' flow and resource allocation solution, which has the potential of improving the cost-delay tradeoff, with respect to the max-weight solution of DCNC-L that allocates either zero or full capacity to a single commodity at each timeslot.
Note that \eqref{algor_Q} can also be decomposed into subproblems at each cloud network node. Using the \emph{KKT conditions} \cite{optimization_boyd}, the solution to each subproblem admits a simple waterfilling-type interpretation. We first describe the resulting 
local flow scheduling and resource allocation policy and then provide its graphical interpretation.

\underline{Local processing decisions:} At the beginning of each timeslot $t$, each node $i$ observes its local queue backlogs and performs the following operations:
\begin{enumerate}
\item Compute the processing utility weight of each commodity. 
Sort the resulting set of weights 
in non-increasing order and form the list $\{ W_i^{(c)}(t)\}$, where $c$ identifies the $c$-th commodity in the sorted list.

\item
For each resource allocation choice $k\in \mathcal K_{i}$:

2.1) Compute the following {\em waterfilling rate threshold}:

\begin{equation}
G_{i,k}(t) \triangleq \left[ \frac{ \sum\limits_{s = 1}^{p_{k}} \frac{(r^{(s)})^2}{1+(\xi^{(s)})^2} W_{i}^{(s)}(t) - C_{i,k}} {\sum\limits_{s = 1}^{p_{k}} \frac{(r^{(s)})^2}{1+(\xi^{(s)})^2}} \right]^+, \notag
\end{equation}
where $p_{k}$ is the smallest commodity index that satisfies $H_{i}^{(p_{k})}(t) > C_{i,k}$, with $p_{k}=J$ if $C_{i,k}\geq H_i^{(J)}(t)$;
and
\begin{eqnarray}
    &H_i^{(c)} \! \left( t \right) \!\triangleq\! \sum_{s = 1}^c \!{\left[ {{W_i^{{{\left( s \right)}}}\left( t \right)}-{{W_{i}^{{{(c+1)}}}\left( t \right)}}} \right]} \frac{(r^{(s)})^2}{1+(\xi^{(s)})^2} \, , \notag 
    \end{eqnarray}
with $r^{(s)}$ and $\xi^{(s)}$ denoting the processing-transmission flow ratio and the scaling factor of the function that processes commodity $s$, respectively.

2.2) Compute the \emph{candidate} processing flow rate for each commodity, $1\leq c \leq J$: 
     \begin{align}
    &\mathord{\buildrel{\lower3pt\hbox{$\scriptscriptstyle\smile$}}
    \over \mu} _{i,\textrm{pr}}^{{{\left( c \right)}}}\!\left( k,t \right) = \frac{r^{(c)}}{1+(\xi^{(c)})^2} \left[W_{i}^{{{\left( c \right)}}} \!\left( t \right) - G_{i,k}(t)\right]^+ . \notag
    \end{align}

2.3) Compute the following optimization metric: 
    \begin{align}
    & \Psi _i(k,t) \triangleq \sum_{c=1}^{J} \left[ \frac{1+(\xi^{(c)})^2}{2} \left( \mathord{\buildrel{\lower3pt\hbox{$\scriptscriptstyle\smile$}}\over \mu}_{i,\text{pr}}^{(c)}(k,t) \right)^2 \right. \notag\\
    &\qquad\qquad\qquad \left. - \mathord{\buildrel{\lower3pt\hbox{$\scriptscriptstyle\smile$}}\over \mu}_{i,\text{pr}}^{(c)}(k,t) r^{(c)} W_{i,\text{pr}}^{(c)}(t) \right] + V w_{i,k}. \nonumber
    \end{align}
\item Compute the processing resource allocation choice: 
    \begin{equation}
    {k^*} = \mathop {\arg \min }\nolimits_{k \in {\mathcal K_{i}}} \left\{ {{\Psi _{i}}\left( k,t \right)} \right\}.\nonumber
    \end{equation}
\item Make the following resource allocation and flow assignment decisions:
    \begin{align}
    &y_{i,k^*}(t) = 1,\nonumber\\
    &y_{i,k^*}(t) = 0, {\quad \text{for}\ }k\ne k^*, \nonumber\\
    &\mu _{i,\text{pr}}^{(c)}(t) = \mathord{\buildrel{\lower3pt\hbox{$\scriptscriptstyle\smile$}}
    \over \mu} _{i,\text{pr}}^{(c)}(k^*,t). \notag 
    \end{align}

\end{enumerate}

\underline{Local transmission decisions:}
At the beginning of each timeslot $t$, each node $i$ observes its local queue backlogs and those of its neighbors, and performs the following operations for each of its outgoing links $(i,j)$, $j\in\delta^{\!+\!}(i)$:
\begin{enumerate}
\item

    Compute the transmission utility weight of each commodity. 
    Sort the resulting set of weights 
    in non-increasing order and form the list $\{ W_{ij}^{(c)}(t)\}$, where $c$ identifies the $c$-th commodity in the sorted list.


\item For each resource allocation choice $k\in \mathcal K_{i}$:

2.1) Compute the following {\em waterfilling rate threshold}:
    \begin{equation}
    G_{ij,k}(t)\triangleq \frac{1}{p_k}
    \left[ {\sum\nolimits_{s = 1}^{{p_{k}}}\! {W_{ij}^{\left( s \right)}\!\!\left( t \right)} -  2{C_{ij,k}}} \right]^+.\notag
    \end{equation}
    where $p_{k}$ is the smallest commodity index that satisfies $H_{ij}^{(p_{k})}(t) > C_{ij,k}$, with $p_{k}=J$ if $C_{ij,k}\geq H_i^{(J)}(t)$; and
    \begin{eqnarray}
    &\!\!\!\!H_{ij}^c\left( t \right) \!\triangleq\!
    \frac{1}{2}\sum\limits_{s = 1}^c {\left[ {W_{ij}^{{{\left( s \right)}}}\!\!\left( t \right)\! -\! W_{ij}^{{\left( c+1 \right)}}\!\!\left( t \right)} \right]}\!.\notag
    \end{eqnarray}

2.2) Compute the candidate transmission flow rate for each commodity, $1\leq c \leq J$: 
    \begin{align}
    &\mathord{\buildrel{\lower3pt\hbox{$\scriptscriptstyle\smile$}}
    \over \mu}_{ij}^{(c)}\!(k,t) = \frac{1}{2}\left[ W_{ij}^{{\left( c \right)}}\!\left( t \right) -  G_{ij,k}(t)\right]^{\!+}. \notag 
    \end{align}

2.3) Compute the following optimization metric: 
    \begin{align}
    &{{\Psi _{ij}}\!\left( k,t \right) \!\triangleq\!  Vw_{ij,k}\!+\!\sum\limits_{c=1}^{J}\!\! {\left[\! {{{\left(\! {\mathord{\buildrel{\lower3pt\hbox{$\scriptscriptstyle\smile$}}
    \over \mu} _{ij}^{(c)}\!(k,t)}\! \right)}^{\!2}}\!\! -\!\! \mathord{\buildrel{\lower3pt\hbox{$\scriptscriptstyle\smile$}}
    \over \mu} _{ij}^{(c)}\!(k,t)W_{ij}^{(c)}\!(t)}\! \right]}\!.}\nonumber
    \end{align}

\item  Compute the processing resource allocation choice: 
    \begin{equation}
    {k^*} = \mathop {\arg \min }\nolimits_{k \in {\mathcal K_{ij}}} \left\{ {{\Psi _{ij}}\left( k,t \right)} \right\}.\nonumber
    \end{equation}
\item Make the following resource allocation and flow assignment decisions:
    \begin{align}
    &y_{ij,k^*}(t) = 1,\nonumber\\
    &y_{ij,k}(t) = 0, \quad \forall k\ne k^*,\nonumber\\
    &\mu _{ij}^{(c)}(t) = \mathord{\buildrel{\lower3pt\hbox{$\scriptscriptstyle\smile$}}
    \over \mu} _{ij}^{(c)}(k^*,t).\nonumber
    \end{align}
\end{enumerate}

The total complexity is $O(J[\log_2 J + K_i + \sum\nolimits_{j\in \delta^+(i)}{K_{ij}}])$, which is quadratic respective to the number of commodities and the number of resource allocation choices.

As stated earlier, DCNC-Q admits a waterfilling-type interpretation, illustrated in Fig. \ref{water_filling}. We focus on the local processing decisions. Define a two-dimensional vessel for each commodity. The height of vessel $c$ is given by the processing utility weight of commodity $c$, $W_i^{(c)}(t)$, and its width by $\frac{(r^{(c)})^2}{1+(\xi^{(c)})^2}$. For each resource allocation choice $k\in\mathcal K_i$, pour mercury on each vessel up to height $G_{i,k}(t)$ given in step 2.1 (indicated with yellow in the figure). If available, fill the remaining of each vessel with water (blue in the figure). The candidate assigned flow rate of each commodity is given by the amount of water on each vessel (step 2.2), while to total amount of water is equal to the available capacity $C_{i,k}$.  Finally, step 3 is the result of choosing the resource allocation choice $k^*$ that minimizes $\eqref{obj_Q}$ with the corresponding assigned flow rate values. The local transmission decisions follow a similar interpretation that is omitted here for brevity.

\begin{figure}
        \includegraphics[height=4cm]{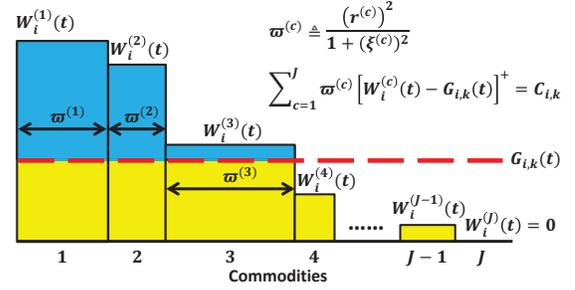}
        \centering{
        \caption{Waterfilling interpretation of the local processing decisions of DCNC-Q at time $t$.}
        \label{water_filling}}
        \vspace{-0.5cm}
\end{figure}

\vspace{-0.4cm}
\subsection{Dynamic Cloud Network Control with Shortest Transmi\-ssion-plus-Processing Distance Bias}

DCNC algorithms determine packet routes and processing locations according to the evolution of the cloud network commodity queues. However, queue backlogs have to build up before yielding efficient processing and routing configurations, which can result in degraded delay performance, especially in low congested scenarios.


In order to reduce average cloud network delay, we extend the approach used in \cite{Neely_DIVBAR}, \cite{Neely_2005} for traditional communication networks, which consists of incorporating a bias term into the metrics that drive scheduling decisions. In a cloud network setting, this bias is designed to capture the delay penalty incurred by each forwarding and processing operation. 

Let $\hat Q_i^{(d,\phi,m)}(t)$ denote the {\em biased} backlog of commodity $(d,\phi,m)$ at node $i$:
\begin{equation}
\hat Q_i^{(d,\phi,m)}(t) \triangleq Q_i^{(d,\phi,m)}(t) + \eta Y_i^{(d,\phi,m)},\label{eq_abstract_backlog}
\end{equation}
where $Y_i^{\left( {d,\phi ,m} \right)}$ denotes the \emph{shortest transmission-plus-processing distance bias} (STPD), and $\eta$ is a control parameter used to balance the effect of the bias and the queue backlog.
The bias term in \eqref{eq_abstract_backlog} is defined as
\begin{equation}
\label{eq_bias_term}
Y_i^{\left( {d,\phi ,m} \right)} \!\triangleq\!
\begin{cases}
1,\qquad\, {\text{if}}\  m \!<\! {M_\phi },\\
H_{i,d},\quad\, {\text{if}}\ m = {M_\phi },
\end{cases}
\qquad \forall i,d,\phi,
\end{equation}
where $H_{i,j}$ denotes the shortest distance (in number of hops) from node $i$ to node $j$. 
We note that $Y_i^{(d,\phi,m)}=1$ for all processable commodities because, throughout this paper, we have assumed that every function is available at all cloud network nodes.
In Sec. \ref{sec: extensions_subset}, we discuss a straight-forward generalization of our model, in which each service function is available at a subset of cloud network nodes, in which case, $Y_i^{\left( {d,\phi ,m} \right)}$ for each processable commodity is defined as the shortest distance to the closest node that can process commodity $(d,\phi,m)$.

The enhanced EDCNC-L and EDCNC-Q algorithms work just like their DCNC-L and DCNC-Q counterparts, but using $\hat Q_i^{(d,\phi,m)}(t)$ in place of $Q_i^{(d,\phi,m)}(t)$ to make local processing and transmission scheduling decisions.

\vspace{-0.1cm}
\section{Performance Analysis}
\label{sec: performance_analysis}

In this section, we analyze the performance of the proposed DCNC algorithms.
To facilitate the analysis, we define the following parameters:
\begin{itemize}
\item $A_{\max}$: the constant that bounds the aggregate input rate at all the cloud network nodes; specifically,
$\mathop{\max}\nolimits_{i \in \mathcal V}\mathbb{E}\{ [{\sum\nolimits_{\left( {d,\phi,m} \right)} {a_i^{\left( {d,\phi,m} \right)}\left( t \right)} }]^4 \} \le ({A_{\max }})^4$.

\item $C_{\text{pr}}^{\max}$: the maximum processing capacity among all cloud network nodes; \ie
$C_{\text{pr}}^{\max } \triangleq \mathop {\max }\nolimits_{i \in \mathcal V} \{ {{C_{i,{K_i}}}} \}$.

\item $C_{\text{tr}}^{\max}$: the maximum transmission capacity among all cloud network links; \ie
$C_{\text{tr}}^{\max } \triangleq \mathop {\max }\nolimits_{(i,j) \in \mathcal E} \{ {{C_{ij,{K_{ij}}}}} \}$.

\item $\xi_{\max}$: the maximum flow scaling factor among all service functions; \ie $\xi_{\max}\triangleq\max\nolimits_{(\phi,m)}\{\xi^{(\phi,m)}\}$.

\item $r_{\min}$: the minimum transmission-processing flow ratio among all service functions; \ie  $r_{\min}\triangleq\min\nolimits_{(\phi,m)}\{r^{(\phi,m)}\}$.


\item $\delta_{\max}$: the maximum degree among all cloud network nodes, \ie $\delta_{\max}\triangleq \mathop {\max }\nolimits_{i\in \mathcal V}\{\delta^{\!+\!}(i)+\delta^{\!-\!}(i)\}$.
\end{itemize}

\vspace{-0.3cm}
\subsection{Average Cost and Network Stability}


\begin{thm}
\label{thm: stability}
If the  average input rate matrix $\bm \lambda = (\lambda_i^{(d,\phi,m)})$ is interior to the cloud network capacity region $\Lambda(\mathcal G, \Phi)$,
then the DCNC algorithms stabilize the cloud network, while achieving arbitrarily close to minimum average cost $\overline h^*({\bm \lambda})$ with probability $1$ (w.p.$1$), \ie
\begin{align}
& \!\!{\limsup\limits_{t \rightarrow \infty} \! \frac{1}{t}  \sum\nolimits_{\tau=0}^{t-1}  h(\tau) \leq \overline h^*({\bm \lambda}) +\frac{N B}{V},} \ \ \  (w.p.1)\label{eq_average_cost} \\
& \!\!{\mathop {\lim \sup }\limits_{t \to \infty } \!\frac{1}{t}\!\sum\limits_{\tau  = 0}^{t - 1} \!{\sum\limits_{(d,\phi ,m),i} \!\!\! Q_i^{(d,\phi ,m)}\!(\tau )} \!\le\! \frac{{N\!B \!\!+\!\! V[{{\overline h}^*}\!(\bm \lambda  \!+\! \kappa{\bf{1}}) \!\!-\!\! {{\overline h}^*}\!(\bm \lambda )]}}{\kappa }}\!,\nonumber\\
&\qquad\qquad\qquad\qquad\qquad\qquad\qquad\qquad\qquad\quad (w.p.1)\label{eq_que}
\end{align}
where
\begin{equation}
B =
\begin{cases}
{{B_0},{\text{\ \ \ \ \ under\ DCNC\texttt{-}L\ and\ DCNC\texttt{-}Q,}}}\\
{{B_1},{\text{\ \ \ \ \ under\ EDCNC\texttt{-}L\ and\ EDCNC\texttt{-}Q,}}}
\end{cases}
\end{equation}
with $B_0$ and $B_1$ being positive constants determined by the system parameters $C_{\emph{pr}}^{\max}$, $C_{\emph{tr}}^{\max}$, $A_{\max}$, $\xi_{\max}$, and $r_{\min}$; and $\kappa$ is a positive constant
satisfying $\left({\bm \lambda}+\kappa \bf 1\right)\in \Lambda$. 

\end{thm}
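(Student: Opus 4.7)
The plan is to apply the standard Lyapunov drift-plus-penalty machinery, using the LDP upper bound already derived in Eq. \eqref{eq_lypunov_bound1} together with the randomized policy structure exposed by Theorem \ref{thm: network_capacity_region}. The overall strategy is to show that, at each slot, the DCNC action minimizes (or nearly minimizes) an upper bound on the one-step drift-plus-penalty among all feasible actions, so its expected LDP is no larger than the LDP incurred by a suitable stationary randomized comparison policy that achieves cost arbitrarily close to $\overline{h}^*(\bm\lambda)$ with a strictly negative drift of magnitude $\kappa$. Telescoping the resulting per-slot inequality will then deliver both \eqref{eq_average_cost} and \eqref{eq_que}.

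The first step is to derive the deterministic constant $B_0$ (resp.\ $B_1$). Starting from \eqref{eq_queueing_dynamic}, I would square both sides, use $([x]^+)^2\le x^2$, and bound the squared sums of assigned flows and arrivals using $C_{\text{pr}}^{\max}$, $C_{\text{tr}}^{\max}$, $A_{\max}$, $\xi_{\max}$, $r_{\min}$, and $\delta_{\max}$; this produces an $O(N)$ queue-independent term which, when divided by $2$ and assembled over all commodities/nodes, yields $\Gamma(t)\le N B_0$. For the EDCNC variants, the biased queues $\hat Q=Q+\eta Y$ inflate the additive cross-terms by at most $\eta\max_{i,d}H_{i,d}$, which is absorbed into a larger constant $B_1$. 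The second step is the comparison argument: by Theorem \ref{thm: network_capacity_region}, for any $\bm\lambda$ with $\bm\lambda+\kappa\mathbf{1}\in\Lambda(\mathcal G,\Phi)$, there is a stationary randomized policy $\pi^*$ whose per-slot expected actions satisfy \eqref{eq_thm1_stability}--\eqref{eq_thm1_flow_capacity_constraint} with slack $\kappa$ in \eqref{eq_thm1_stability} and whose expected instantaneous cost equals $\overline h^*(\bm\lambda+\kappa\mathbf{1})$. Substituting $\pi^*$ into \eqref{eq_lypunov_bound1} gives
\begin{equation}
\Delta(\mathbf{Q}(t))+V\,\mathbb{E}\{h(t)\mid\mathbf{Q}(t)\}\ \le\ NB + V\overline h^*(\bm\lambda+\kappa\mathbf{1}) - \kappa\!\!\sum_{i,(d,\phi,m)}\!\!Q_i^{(d,\phi,m)}(t).
\notag
\end{equation}
Because DCNC-L directly minimizes the linear slot term $Z(t)+Vh(t)$ extracted from the right-hand side of \eqref{eq_lypunov_bound1}, and DCNC-Q minimizes the quadratic relaxation obtained by dropping the non-negative cross products in $\Gamma(t)$ (those cross products being between distinct commodities sharing a link/processor, which the algorithm itself sets to products of its own choice), the DCNC action yields an LDP upper bound no larger than the one produced by $\pi^*$, so the displayed inequality continues to hold under DCNC with possibly a different $B\in\{B_0,B_1\}$.

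The third step is the telescoping and limit argument. Summing the displayed inequality over $\tau=0,\ldots,t-1$, dividing by $t$, using $\mathbb{E}\|\mathbf{Q}(0)\|^2=0$ and non-negativity of $\|\mathbf{Q}(t)\|^2$, and letting $\kappa\downarrow 0$ by continuity of $\overline h^*$ on the interior of $\Lambda(\mathcal G,\Phi)$ yields the expected-cost version of \eqref{eq_average_cost}; keeping $\kappa>0$ yields the expected-queue version of \eqref{eq_que}. Upgrading the in-expectation bounds to the w.p.$1$ statements in the theorem uses the fact that $h(\tau)$ and the per-slot queue changes have uniformly bounded fourth moments (via $A_{\max}$ and the capacity constants), so the strong law of large numbers for martingale differences applies, as in \cite{Neely_book,Neely_book2}. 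For EDCNC-L/Q, the same chain of inequalities is carried out with $\hat Q$ in place of $Q$; since $\hat Q - Q = \eta Y$ is deterministic and bounded, rate stability of $\hat Q$ implies rate stability of $Q$ and the cost bound is unchanged up to the enlarged constant $B_1$.

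The main obstacle I anticipate is the DCNC-Q case: one must verify rigorously that the quadratic metric in \eqref{obj_Q}, which drops the mixed-commodity cross products of $\Gamma(t)$, still upper-bounds the true LDP under DCNC-Q's own waterfilling decisions, so that the comparison against $\pi^*$ is valid. The key observation is that under the waterfilling rule, the commodities simultaneously scheduled on a given interface all receive non-negative flows, so the dropped cross terms $\mu^{(c)}\mu^{(c')}\ge 0$ are themselves non-negative and therefore the retained quadratic expression is a genuine upper bound on $\Gamma(t)$; this legitimizes the minimization and closes the argument.
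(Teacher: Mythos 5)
Your overall architecture matches the paper's: bound $\Gamma(t)$ by a deterministic constant, compare the per-slot minimized metric against the stationary randomized policy supporting $\bm\lambda+\kappa\mathbf{1}$ from Theorem~\ref{thm: network_capacity_region}, telescope, and upgrade to w.p.$1$ via bounded moments (the paper invokes \cite{Neely_prob_1} with exactly the three conditions you list: bounded second moment of $h$, nonnegativity of $h$, and bounded fourth moment of the one-step queue change). The DCNC-L and EDCNC branches are handled essentially as in the paper, where $B_1=B_0+\eta B_\Upsilon$ arises from lower-bounding the bias-weighted net-flow term $\Upsilon(t)$ by $-NB_\Upsilon$.

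However, your resolution of the DCNC-Q obstacle — the one step you correctly identify as the crux — has the inequality pointing the wrong way. Writing $\Gamma(t)=\Gamma_{\text{tr}}(t)+\Gamma_{\text{pr}}(t)+\Gamma'(t)$ with $\Gamma'(t)\ge 0$ the mixed-commodity cross products, dropping those nonnegative terms makes the retained diagonal expression a \emph{lower} bound on $\Gamma(t)$, not an upper bound: $\Gamma_{\text{tr}}+\Gamma_{\text{pr}}=\Gamma-\Gamma'\le\Gamma$. Consequently you cannot pass from $\mathbb{E}\{\Gamma(t)+Z(t)+Vh(t)\}$ to $\mathbb{E}\{\Gamma_{\text{tr}}(t)+\Gamma_{\text{pr}}(t)+Z(t)+Vh(t)\}$ and then invoke the minimality of DCNC-Q's metric; the first replacement goes the wrong direction, so your claim that ``the DCNC action yields an LDP upper bound no larger than the one produced by $\pi^*$'' does not follow as stated. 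The paper's fix is different and is what you need: keep $\Gamma'(t)$ evaluated at DCNC-Q's \emph{own} actions (it is never minimized), use the minimality of $\Gamma_{\text{tr}}+\Gamma_{\text{pr}}+Z+Vh$ only on the diagonal-plus-linear part to substitute the randomized policy's $\Gamma^*_{\text{tr}}+\Gamma^*_{\text{pr}}+Z^*+Vh^*$, and then show deterministically that
\begin{equation}
\Gamma'(t)+\Gamma^*_{\text{tr}}+\Gamma^*_{\text{pr}}\le NB_0,\notag
\end{equation}
which holds because every flow variable on either side is capped by the capacity constants $C_{\text{tr}}^{\max}$, $C_{\text{pr}}^{\max}$, $\delta_{\max}$, $\xi_{\max}$, $r_{\min}$, and $A_{\max}$ (this is the paper's Eq.~\eqref{eq_Q_second_moment_bound}, and it recovers the same $B_0$ as the DCNC-L branch). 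With that substitution your telescoping and limit arguments go through unchanged. A minor additional point: you assume $\mathbb{E}\{\|\mathbf{Q}(0)\|^2\}=0$, which is unnecessary — the initial-condition term vanishes after dividing by $t$ and taking $\limsup$ regardless of the starting backlog.
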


\begin{proof}
The proof of Theorem \ref{thm: stability} is given in Appendix \ref{appendix_stability}.
\end{proof}

Theorem \ref{thm: stability} shows that the proposed DCNC algorithms achieve the average cost-delay
tradeoff $[O(1/V),O(V)]$ with probability 1.\footnote{By setting $\epsilon=1/V$, where $\epsilon$ denotes the deviation from the optimal solution (see Theorem \ref{thm: convergence_rate}), the cost-delay tradeoff is written as $[O(\epsilon), O(1/\epsilon)]$.} Moreover, \eqref{eq_que} holds for any $\bm \lambda$ interior to $\Lambda$, which demonstrates the throughput-optimality of the DCNC algorithms.

\subsection{Convergence Time}

The convergence time of a DCNC algorithm indicates how fast its running time average solution approaches the optimal solution.\footnote{We assume that the local decisions performed by the DCNC algorithms at each timeslot can be accomplished within a reserved computation time within each timeslot, and therefore their different computational complexities are not taking into account for convergence time analysis.}
This criterion is particularly important for online scheduling in settings where the arrival process is non-homogeneous, \ie the average input rate $\bm \lambda$ is time varying. 
In this case, it is important to make sure that the time average solution evolves close enough to the optimal solution
much before the average input rate undergoes significant changes.

We remark that studying the convergence time of a DCNC algorithm involves studying how fast the average cost approaches the optimal value, as well as how fast the flow conservation violation at each node approaches zero.\footnote{Note that the convergence of the flow conservation violation at each node to zero
is equivalent to \emph{strong stability} (see \eqref{eq_que}), if $\bm \lambda$ interior to $\Lambda(\mathcal G,\Phi)$.}

Let $\tilde \mu_{i,\text{pr}}^{(d,\phi,m)}(t)$, $\tilde \mu_{\text{pr},i}^{(d,\phi,m)}(t)$, and $\tilde \mu_{ij}^{(d,\phi,m)}(t)$ denote the {\em actual} flow rates
obtained from removing all \emph{null} packets that may have been assigned when queues do not have enough packets to meet the corresponding assigned flow rates.
Define, for all $i,(d,\phi,m),t$,
\begin{align}
\label{eq_delta_f_def}
\!\! {\Delta f}_i^{(d,\phi ,m)}\!\!\left( t \right) \!\triangleq\! &\sum\limits_{j \in \delta^{\!-\!}\left( i \right)} \!{\tilde \mu _{ji}^{(d,\phi ,m)}\!\!\left( t \right)}  \!+\! \tilde \mu _{\text{pr},i}^{(d,\phi ,m)}\!\!\left( t \right) \!-\! a_i^{\left( {d,\phi ,m} \right)}\!\!\left( t \right)\nonumber\\
&-{\sum\limits_{j \in \delta^{\!+\!}\left( i \right)} {\tilde\mu _{ij}^{(d,\phi ,m)}\!\!\left( t \right)}  - \tilde\mu _{i,\text{pr}}^{(d,\phi ,m)}\!\!\left( t \right)}.
\end{align}
Then, the queuing dynamics is then given by
\begin{equation}
\label{eq_queueing_dynamic2}
Q_i^{\left( {d,\phi ,m} \right)}\left( {t + 1} \right) = Q_i^{\left( {d,\phi ,m} \right)}\left( t \right) +  {\Delta f}_i^{(d,\phi ,m)}\left( t \right).
\end{equation}

The convergence time performance of the proposed DCNC algorithms 
is summarized by the following theorem.
\begin{thm}
\label{thm: convergence_rate}
If the average input rate matrix $\bm \lambda = (\lambda_i^{(d,\phi,m)})$ is interior to the cloud network capacity region $\Lambda(\mathcal G, \Phi)$, then, for all $\epsilon>0$, whenever $t\ge\left.1\right/\epsilon^2$, the mean time average cost and mean time average actual flow rate achieved by the DCNC algorithms during the first $t$ timeslots satisfy: 
\begin{align}
\label{eq_cost_converge_rate}
&\frac{1}{t}\sum\nolimits_{\tau  = 0}^{t - 1} {\mathbb{E}\left\{ {h\left( \tau  \right)} \right\}}  \le {{\overline h}^*}\left( \lambda  \right) + O\left( \epsilon  \right),\\
\label{eq_flow_conserv_converge_rate}
&\frac{1}{t}\sum\nolimits_{\tau  = 0}^{t - 1} {{\mathbb{E}\!\left\{ {\Delta f}_i^{(d,\phi ,m)}\!\!\left( t \right)\right\}}} \le O\!\left( \epsilon  \right),\quad \forall i,(d,\phi,m).
\end{align}
\end{thm}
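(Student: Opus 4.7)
The plan is to leverage the Lyapunov drift-plus-penalty (LDP) machinery developed in Section~\ref{sec: LDP}. The key structural fact is that every DCNC variant is explicitly engineered so that, at each timeslot $t$, its flow scheduling and resource allocation minimize a metric extracted from the right-hand side of the LDP bound \eqref{eq_lypunov_bound1}. Consequently, for any alternative control action sequence $\Pi$ satisfying \eqref{cappr2}-\eqref{ys}, the conditional drift-plus-penalty under DCNC is upper bounded by that induced by $\Pi$. I would instantiate $\Pi$ as two different benchmarks furnished by Theorem~\ref{thm: network_capacity_region}: (i) the stationary randomized policy that attains $\overline{h}^*(\bm\lambda)$ and produces zero mean flow-conservation violation, and (ii) the stationary randomized policy tailored to the slightly inflated arrival rate $\bm\lambda + \kappa\mathbf{1}$, which additionally injects a negative drift term proportional to $\|{\bf Q}(t)\|_1$. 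Throughout, I set $V = \Theta(1/\epsilon)$ and take the horizon $T \ge 1/\epsilon^2$.

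For the cost bound \eqref{eq_cost_converge_rate}, I would apply benchmark (i) to the DCNC comparison to obtain the one-slot inequality $\Delta({\bf Q}(t)) + V\,\mathbb{E}\{h(t)\mid {\bf Q}(t)\} \le NB + V\,\overline{h}^*(\bm\lambda)$, where the constant $B$ absorbs the $\Gamma(t)$ term through the standard bounds involving $A_{\max}$, $C^{\max}_{\text{pr}}$, $C^{\max}_{\text{tr}}$, $\xi_{\max}$, and $r_{\min}$. Taking total expectation and telescoping over $\tau = 0, \ldots, T-1$ gives
\begin{equation*}
\mathbb{E}\{L({\bf Q}(T))\} - L({\bf Q}(0)) + V\!\!\sum_{\tau=0}^{T-1}\!\mathbb{E}\{h(\tau)\} \le T(NB + V\,\overline{h}^*(\bm\lambda)).
\end{equation*}
Dropping the non-negative $\mathbb{E}\{L({\bf Q}(T))\}$ and dividing by $TV$ yields $\frac{1}{T}\sum_\tau \mathbb{E}\{h(\tau)\} \le \overline{h}^*(\bm\lambda) + NB/V + L({\bf Q}(0))/(TV)$. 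With $V = \Theta(1/\epsilon)$ and $T\ge 1/\epsilon^2$, both residual terms are $O(\epsilon)$, which establishes \eqref{eq_cost_converge_rate}.

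For the flow-conservation bound \eqref{eq_flow_conserv_converge_rate}, telescoping the identity \eqref{eq_queueing_dynamic2} gives $\sum_{\tau=0}^{T-1}\mathbb{E}\{\Delta f_i^{(d,\phi,m)}(\tau)\} = \mathbb{E}\{Q_i^{(d,\phi,m)}(T)\} - Q_i^{(d,\phi,m)}(0)$, so the task reduces to proving a \emph{uniform-in-$T$} bound $\sup_T \mathbb{E}\{\|{\bf Q}(T)\|_1\} = O(V)$. This is the main obstacle: Theorem~\ref{thm: stability} only provides time-averaged strong stability, whereas here a pointwise expected-backlog bound is required. The plan is to invoke benchmark (ii): comparing DCNC's drift against $\Pi^*(\bm\lambda+\kappa\mathbf{1})$ produces a refined inequality of the form $\mathbb{E}\{L({\bf Q}(t+1)) - L({\bf Q}(t))\mid{\bf Q}(t)\} \le B'(V) - \kappa\,\|{\bf Q}(t)\|_1$, with $B'(V) = O(V)$. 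Because the per-slot queue increments are uniformly bounded (by constants dictated by $A_{\max}$, $C^{\max}_{\text{pr}}$, $C^{\max}_{\text{tr}}$, $\xi_{\max}$), a Foster-Lyapunov / Hajek-type argument applied to this negative-drift inequality yields $\sup_t \mathbb{E}\{\|{\bf Q}(t)\|_1\} \le \widetilde{B}(V) = O(V) = O(1/\epsilon)$. Substituting back, $\frac{1}{T}\sum_\tau\mathbb{E}\{\Delta f_i^{(d,\phi,m)}(\tau)\} \le \mathbb{E}\{Q_i^{(d,\phi,m)}(T)\}/T = O(V/T) = O(\epsilon)$ whenever $T \ge 1/\epsilon^2$. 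The EDCNC variants are handled by the same argument after noting that the biased queues $\hat{Q}$ differ from $Q$ by deterministic additive constants $\eta Y_i^{(d,\phi,m)} \le \eta\cdot \text{diam}(\mathcal{G})$, which merely inflates $B'(V)$ by a bounded amount and leaves the asymptotic orders unchanged.
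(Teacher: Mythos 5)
Your treatment of the cost bound \eqref{eq_cost_converge_rate} is essentially the paper's: compare against the stationary randomized policy achieving $\overline h^*(\bm\lambda)$, telescope the drift-plus-penalty inequality, drop the terminal Lyapunov term, and divide by $Vt$ with $V=\Theta(1/\epsilon)$. For the flow-conservation bound \eqref{eq_flow_conserv_converge_rate}, however, you take a genuinely different route. The paper never establishes (nor needs) a uniform-in-$t$ bound $\sup_t\E\{\|{\bf Q}(t)\|\}=O(V)$. Instead it proves Lemma \ref{lem: lemma} via Farkas' lemma --- there exists a multiplier vector ${\bm \rho}\succeq 0$ with $\overline h^*(\bm\lambda)-\overline h(t)\le {\bm \rho}^\dag\,\overline{\Delta {\bf f}}(t)$ --- and feeds this back into the telescoped inequality \eqref{eq_sum_LDP} to obtain a quadratic inequality in $\|\E\{{\bf Q}(t)\}\|$, whose solution gives $\|\E\{{\bf Q}(t)\}\|\le V\|{\bm \rho}\|+O(\sqrt{NBt})$. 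Dividing by $t\ge 1/\epsilon^2$ then yields $O(\epsilon)$ even though this bound grows like $\sqrt{t}$. Your route, if completed, would give a strictly stronger intermediate conclusion (a pointwise $O(V)$ backlog bound), but the paper's weaker $O(V+\sqrt{t})$ bound suffices for the theorem and is obtained using only second-moment information.

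The gap in your version is precisely the step you delegate to ``a Foster--Lyapunov / Hajek-type argument.'' Telescoping the inequality $\E\{L({\bf Q}(t+1))-L({\bf Q}(t))\,|\,{\bf Q}(t)\}\le B'(V)-\kappa\|{\bf Q}(t)\|_1$ only yields a \emph{time-averaged} backlog bound; upgrading it to $\sup_t\E\{\|{\bf Q}(t)\|_1\}=O(V)$ requires first showing that the norm process itself has negative drift outside a ball of radius $O(V/\kappa)$ and then invoking Hajek's lemma, which in its standard form needs increments that are almost surely bounded or have uniformly bounded exponential moments. The paper only assumes a fourth-moment bound on arrivals, $\E\{[\sum_{(d,\phi,m)}a_i^{(d,\phi,m)}(t)]^4\}\le (A_{\max})^4$, so your parenthetical claim that the per-slot queue increments are uniformly bounded by constants involving $A_{\max}$ is not warranted by the model; you would need either to strengthen the arrival assumptions or to use a polynomial-moment variant of the drift-to-steady-state bound. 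Either fix is plausible, but as written the argument for \eqref{eq_flow_conserv_converge_rate} is incomplete, whereas the paper's Farkas-lemma detour closes the loop using nothing beyond \eqref{eq_sum_LDP} and \eqref{eq_sum_net_rate}.
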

\begin{proof}
The proof is of Theorem \ref{thm: convergence_rate} given in Appendix \ref{appendix_converge_rate}.
\end{proof}


Theorem \ref{thm: convergence_rate} establishes that, under the DCNC algorithms, both the average cost and the average flow conservation at each node exhibit $O(1/\epsilon^2)$ convergence time to $O(\epsilon)$ deviations from the minimum average cost, and zero, respectively.

\section{Numerical Results}
\label{sec: simulation}
In this section, we evaluate the performance of the proposed DCNC algorithms via numerical simulations in a number of illustrative settings. 
We assume a cloud network based on the continental US Abilene topology shown in Fig. \ref{Abilene}. 
The $14$ cloud network links exhibit homogeneous transmission capacities and costs, while the $7$ cloud network nodes only differ in their processing resource set-up costs. Specifically,
the following two resource settings are considered:

{\em 1) ON/OFF resource levels:} each node and link can either allocate zero capacity, or the maximum available capacity; 
 \ie $K_i=K_{ij}=1$, $\forall i\in \mathcal V, (i,j)\in \mathcal E$. To simplify notation, we define $K\triangleq K_i+1 = K_{ij}+1$, $\forall i\in \mathcal V, (i,j)\in \mathcal E$.
    The processing resource costs and capacities are
    \begin{itemize}
    \item $e_i=1, \forall i\in \mathcal V$; $w_{i,0}=0, \forall i\in \mathcal V$; $w_{i,1}=440, \forall i\in \mathcal V \backslash\{5,6\}$; $w_{5,1}=w_{6,1}=110$.
    \item $C_{i,0}=0, C_{i,1}=440, \forall i\in \mathcal V$.\footnote{The maximum capacity is set to $440$ in order to guarantee that there is no congestion at any part of the network for the service setting considered in the following.}
    \end{itemize}
    The transmission resource costs and capacities are
    \begin{itemize}
    \item $e_{ij}=1, w_{ij,0}=0, w_{ij,1}=440, \forall (i,j)\in \mathcal E$.
    \item $C_{ij,0}=0, C_{ij,1}=440, \forall (i,j)\in \mathcal E$.
    \end{itemize}

{\em 2) Multiple resource levels:} the available capacity at each node and link is split into $10$ resource units; 
\ie $K=11, \forall i\in \mathcal V, (i,j)\in\mathcal E$.
The processing resource costs and capacities are
    \begin{itemize}
    \item $e_{i}=1,\forall i\in\mathcal V$; \\
    $[w_{i,0},w_{i,1},\cdots,w_{i,10},w_{i,11}]\!=\![0,11,\cdots,99,110]$, for $i=5, 6$; \\
    $[w_{i,0},w_{i,1},\cdots,w_{i,10},w_{i,11}]\!=\![0,44,\cdots,396,440], \forall i\in \mathcal V\backslash\{5,6\}$;
    \item $[C_{i,0},C_{i,1},\cdots,C_{i,10},C_{i,11}]\!=\![0,44,\cdots,396,440], \forall i$.
    \end{itemize}
    The transmission resource costs and capacities are
    \begin{itemize}
    \item $e_{ij}=1, \forall (i,j)\in \mathcal E$; \\
    $[w_{ij,0},w_{ij,1},\cdots,w_{ij,10},w_{ij,11}]=[0,44,\cdots,396,440]$, $\forall (i,j)\in \mathcal E$.
    \item $[C_{ij,0},C_{ij,1},\cdots,C_{ij,10},C_{ij,11}]=[0,44,\cdots,396,440]$, $\forall (i,j)\in \mathcal E$.
    \end{itemize}

Note that, for both ON/OFF and multi-level resource settings, the processing resource set-up costs at node $5$ and $6$ are $4$ times cheaper than at the other cloud network nodes.

We consider $2$ service chains, each composed of $2$ virtual network functions: VNF (1,1) (Service $1$, Function $1$) with flow scaling factor $\xi^{(1,1)}=1$; VNF $(1,2)$ with $\xi^{(1,2)}=3$ (expansion function); VNF $(2,1)$ with $\xi^{(2,1)}=0.25$
(compression function); and VNF $(2,2)$ with $\xi^{(2,2)}=1$.
All functions have processing-transmission flow ratio $r^{(\phi,m)}=1$, 
and can be implemented at all cloud network nodes.
Finally, we assume $110$ clients per service, corresponding to all the source-destination pairs in the Abilene network.

\begin{figure}
        \includegraphics[height=3.8cm]{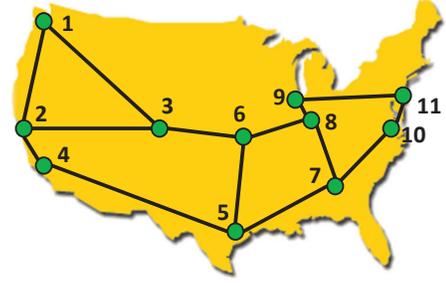}
        \centering{
        \caption{Abilene US Continental Network. Nodes are indexed as: 1) Seattle, 2) Sunnyvale, 3) Denver, 4) Los Angeles, 5) Houston, 6) Kansas City, 7) Atlanta, 8) Indianapolis, 9) Chicago, 10) Washington, 11) New York. }
        \label{Abilene}}
        \vspace{-0.5cm}
\end{figure}

\begin{figure*}[ht]
\centering
\subfigure[]{
\centering \includegraphics[width=5.8cm,]{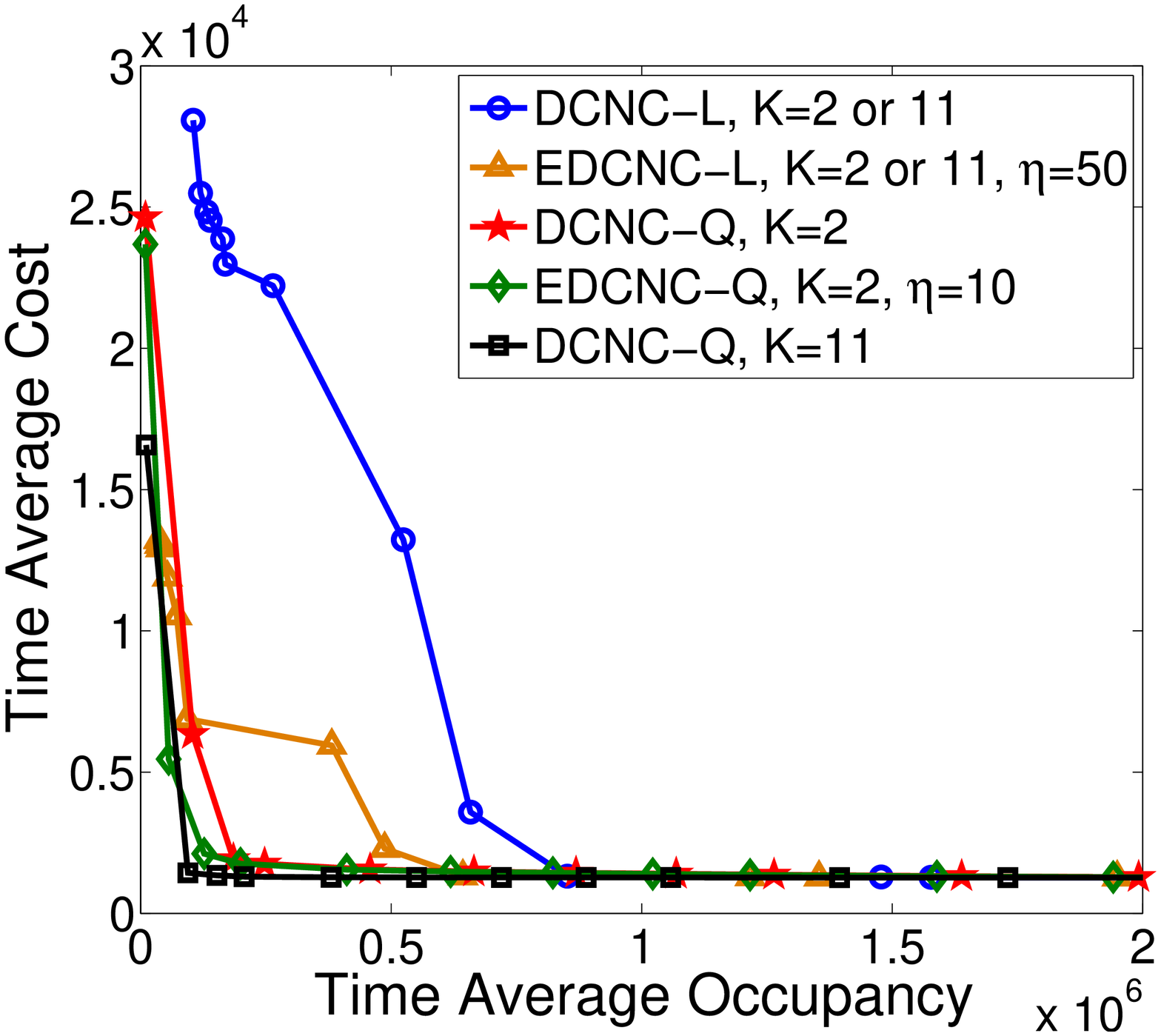}
\label{result1_1}
}
\hspace{-0.7cm}
\subfigure[]{
\centering \includegraphics[width=5.8cm]{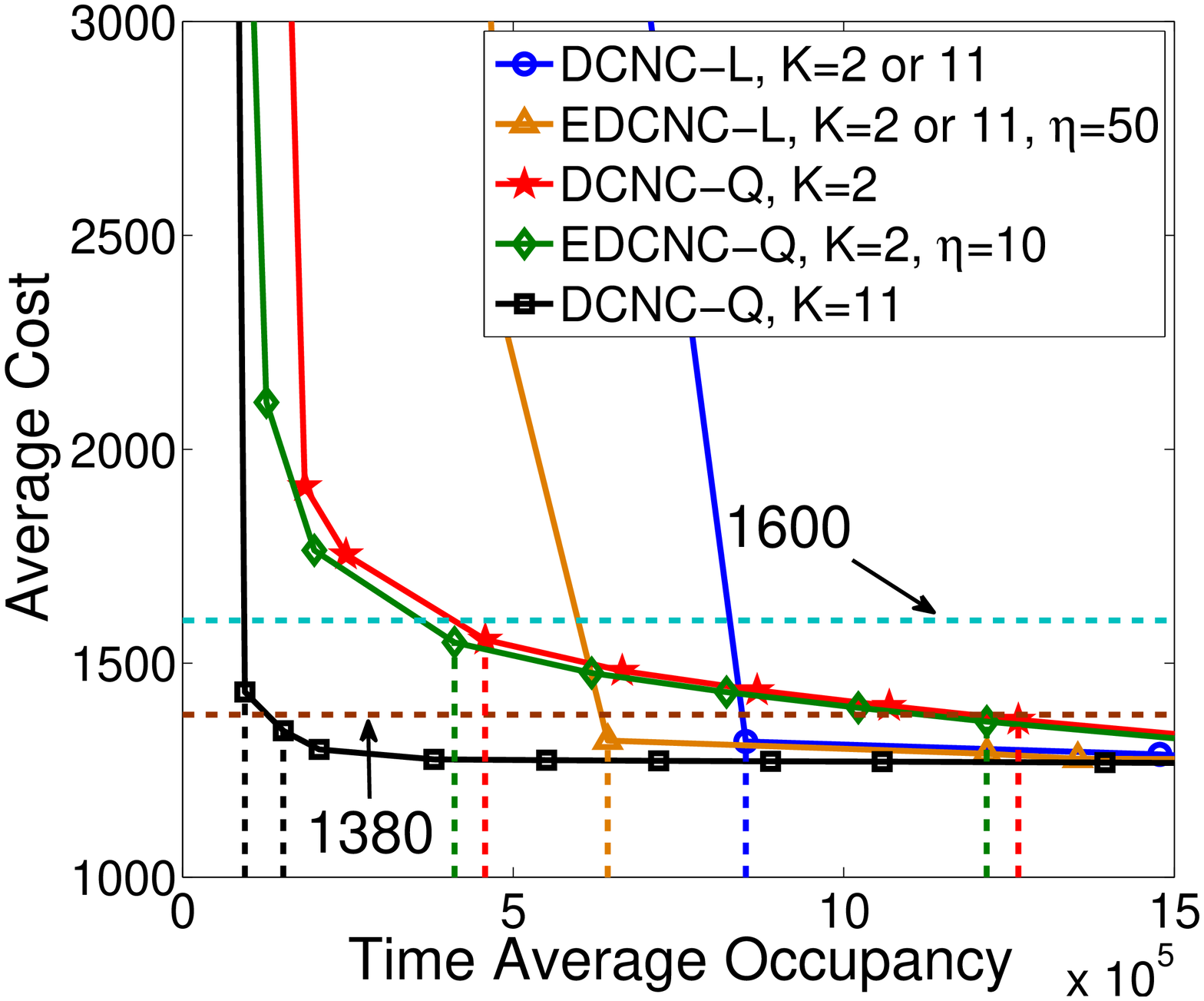}
\label{result1_2}
}
\hspace{-0.7cm}
\subfigure[]{
\centering \includegraphics[width=5.8cm]{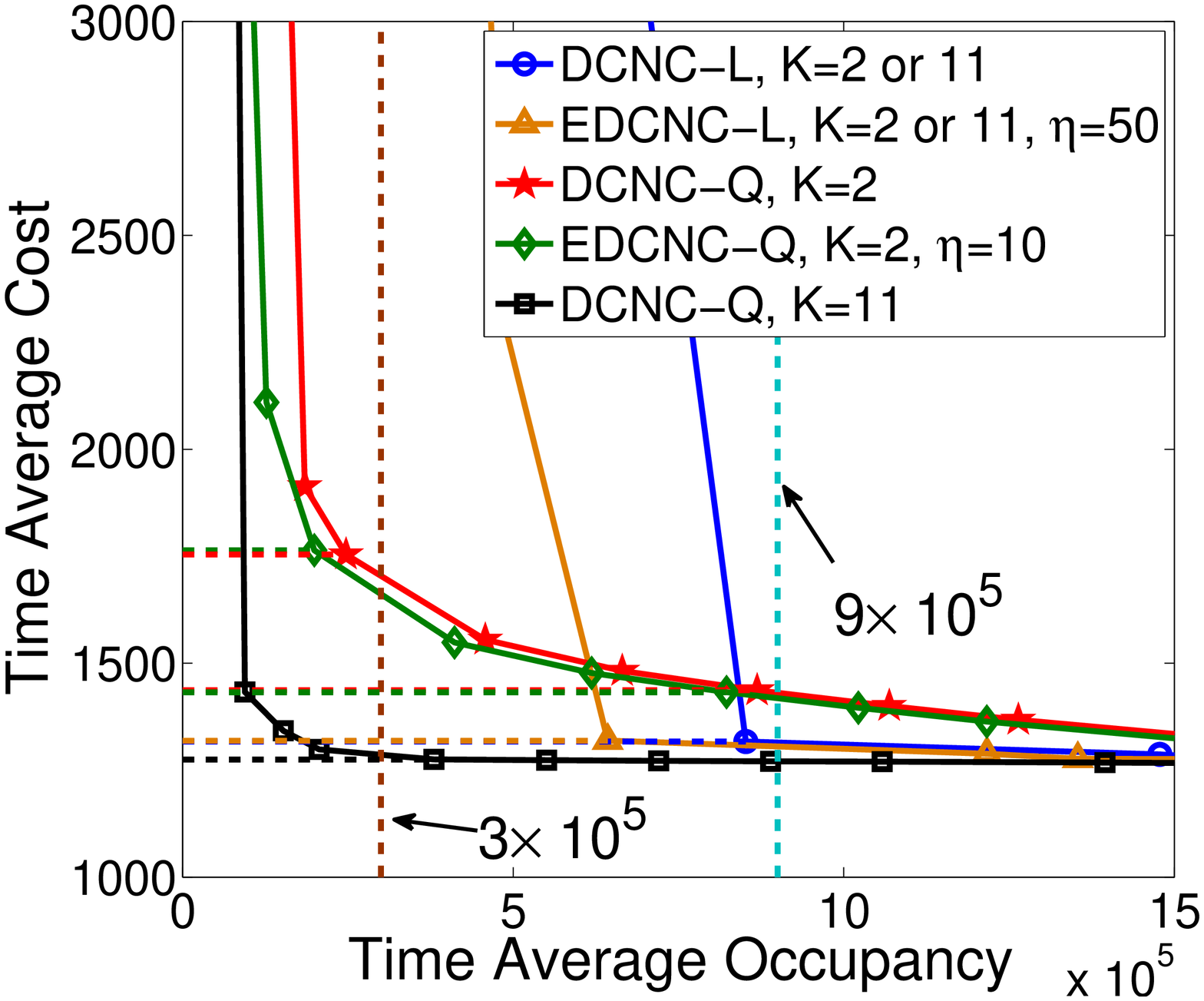}
\label{result1_3}
}
\hspace{-0.7cm}
\subfigure[]{
\centering \includegraphics[width=5.8cm]{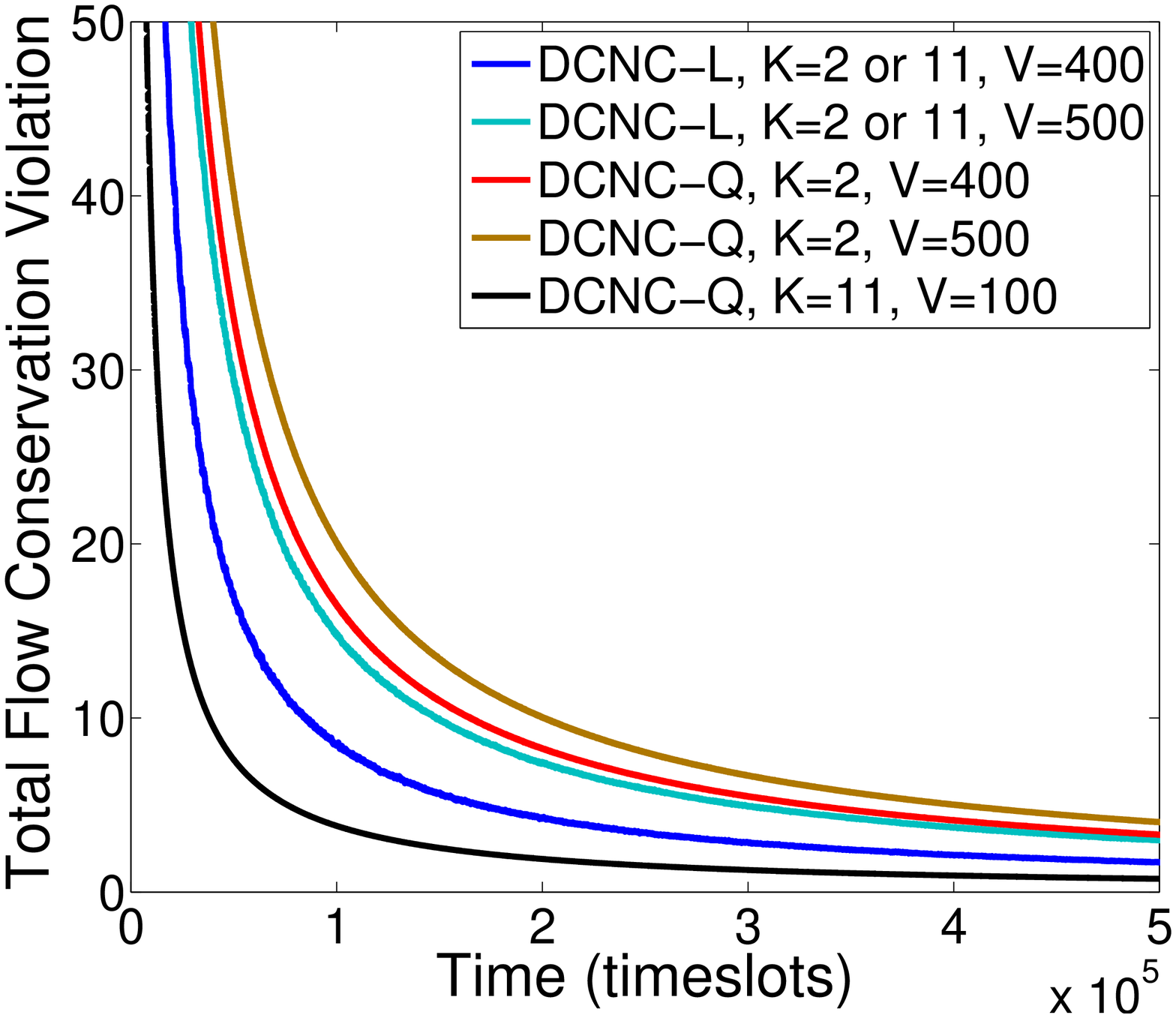}
\label{result1_4}
}
\hspace{-0.7cm}
\subfigure[]{
\centering \includegraphics[width=5.8cm]{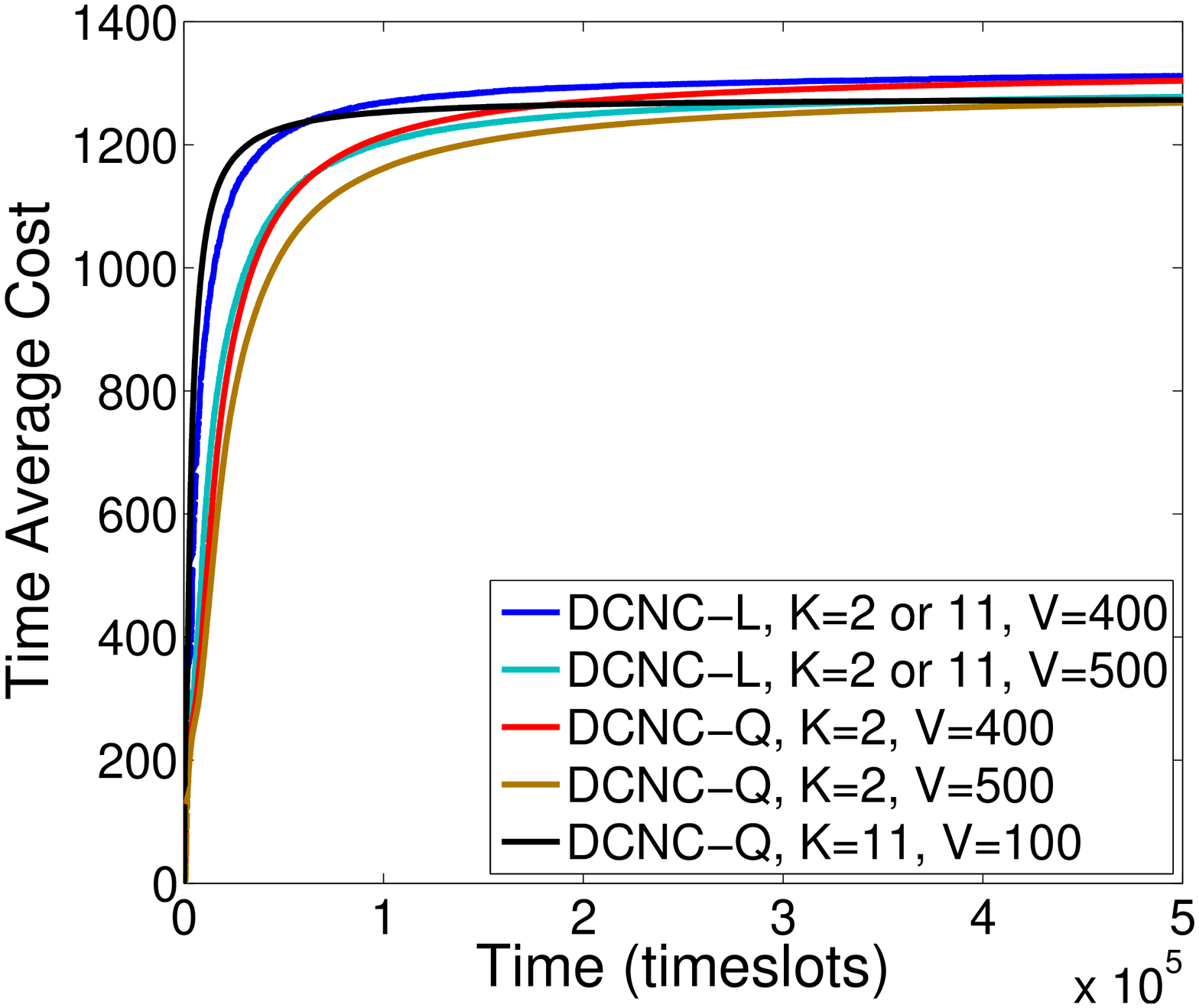}
\label{result1_5}
}
\hspace{-0.7cm}
\subfigure[]{
\centering \includegraphics[width=5.8cm]{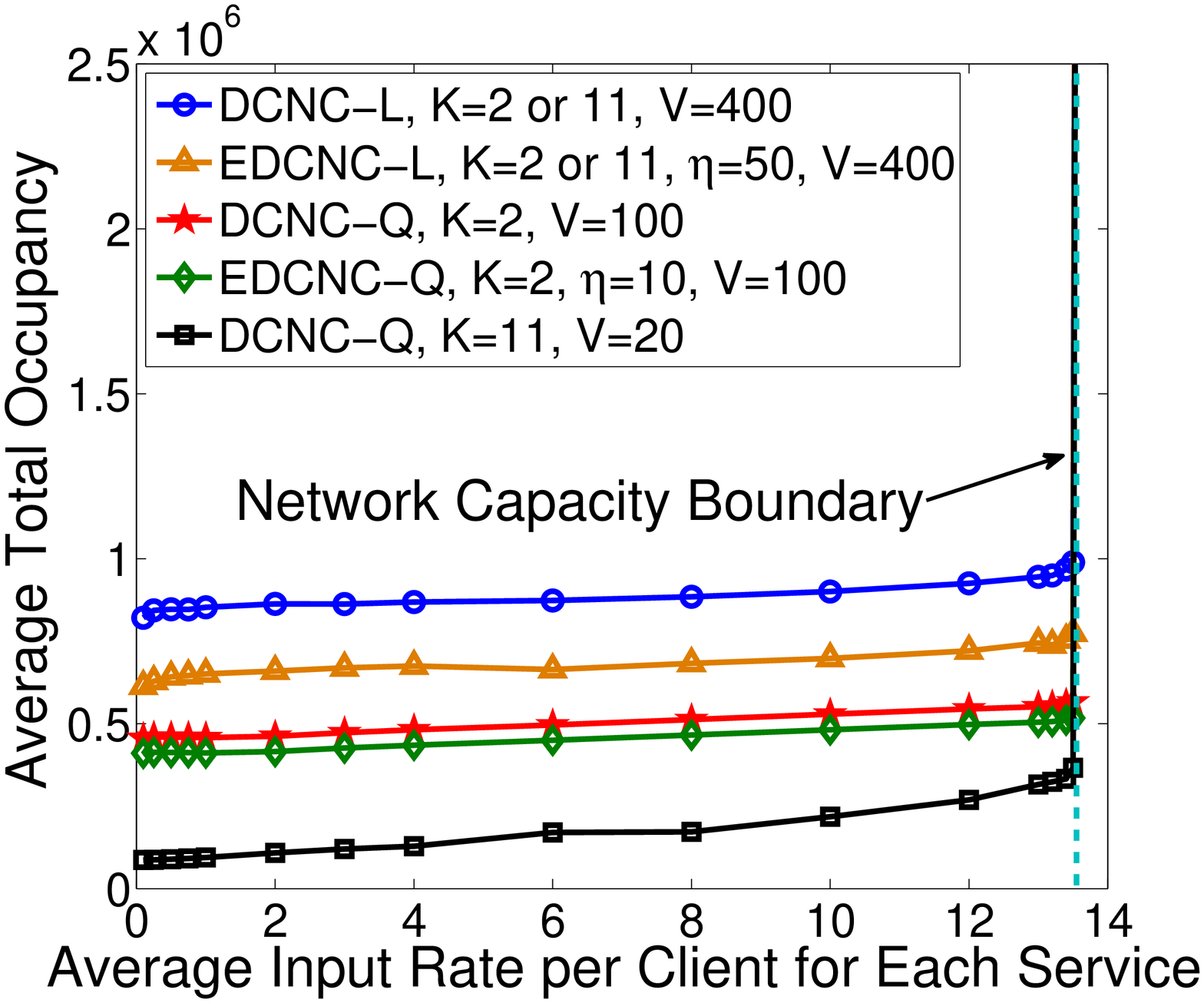}
\label{result1_6}
}
\caption{Performance of DCNC algorithms. a)~Time Average Occupancy v.s. Time Average Cost: a general view; b)~Time average Occupancy v.s. Time Average Cost: given a target average cost c) ~Time average Occupancy v.s. Time Average Cost: given a target average occupancy;  d)~Total flow conservation violation evolution over time: effect of the $V$ value; e) ~Time average cost evolution over time: effect of the $V$ value; f)~Time Average Occupancies with varying service input rate: throughput optimality}
\vspace{-0.6cm}
\label{fig: result 1}
\end{figure*}


\vspace{-0.3cm}
\subsection{Cost-Delay Tradeoff}
\label{subsec: simu_cost_vs_delay}

Figs. \ref{result1_1}-\ref{result1_3} show the tradeoff between the time average cost and the time average end-to-end delay (represented by the total time average occupancy or queue backlog), under the different DCNC algorithms. The input rate of all source commodities is set to $1$ and the the cost/delay values are obtained after simulating each algorithm for $10^6$ timeslots. 
Each tradeoff curve is obtained by varying the control parameter $V$ between $0$ and $1000$ for each algorithm. Small values of $V$ favor low delay at the expense of high cost, while large values of $V$ lead to points in the tradeoff curves with lower cost and higher delay.

It is important to note that since the two resource settings considered, \ie ON/OFF ($K=2$) vs. multi-level ($K=11$), are characterized by the same maximum capacity and the same constant ratios $\left.C_{i,k}\right/w_{i,k}$ 
and $\left.C_{ij,k}\right/w_{ij,k}$, the performance of the linear DCNC algorithms (DCNC-L and EDCNC-L) does not change under the two resource settings. 
On the other hand, the quadratic algorithms (DCNC-Q and EDCNC-Q) can exploit the finer resource granularity of the multi-level resource setting to improve the cost-delay tradeoff.
We also note that for the enhanced versions of the algorithms that use the STPD bias (EDCNC-L and EDCNC-Q), we choose the bias coefficient $\eta$ among the values of multiples of $10$ that leads to the best performance for each algorithm.\footnote{Simulation results for different values of $\eta$ can be found in 
\cite{icc_2016}.}

\vspace{-0.2cm}
Fig. \ref{result1_1} shows how the average cost under all DCNC algorithms reduces at the expense of network delay, and converges to the same minimum value. 
While all the tradeoff curves follow the same $[O(1/V),O(V)]$ relationship established in Theorem \ref{thm: stability},
the specific trade-off ratios can be significantly different.
The general trends observed in Fig. \ref{result1_1} are as follows. DCNC-L exhibits the worst cost-delay tradeoff. Recall that DCNC-L assigns either zero or full capacity to a single commodity in each timeslot, and hence the finer resource granularity of $K=11$ does not improve its performance. However, adding the SDTP bias results in a substantial performance improvement, as shown by the EDCNC-L curve. Now let's focus on the quadratic algorithms. DCNC-Q with $K=2$ further improves the cost delay-tradeoff, at the expense of increased computational complexity. In this case, adding the SDTP bias provides a much smaller improvement (see EDCNC-Q curve), showing the advantage of the more ``balanced" scheduling decisions of DCNC-Q. Finally, DCNC-Q with $K=11$ exhibits the best cost-delay tradeoff, illustrating the ability of DCNC-Q to exploit the finer resource granularity to make ``smoother" resource allocation decisions. In this setting, adding the SDTP bias does not provide further improvement and it is not shown in the figure.

\vspace{-0.05cm}
While Fig. \ref{result1_1} illustrates the general trends in  improvements obtained using the quadratic metric and the SDTP bias, there are regimes in which the lower complexity DCNC-L and EDCNC-L algorithms can outperform their quadratic counterparts. We illustrate these regimes In Figs. \ref{result1_2} and \ref{result1_3}, by zooming into the lower left of Fig. \ref{result1_1}. 
As  shown in Fig. \ref{result1_2}, for the case of $K=2$, the cost-delay curves of DCNC-L and EDCNC-L cross with the curves of DCNC-Q and EDCNC-Q.
For example, for a target cost of $1380$, 
DCNC-L and EDCNC-L result in lower average occupancies ($8.52\times10^5$ and $6.43\times10^5$) than DCNC-Q ($1.26\times10^6$) and EDCNC-Q ($1.21\times10^6$).
On the other hand, if we increase the target cost to $1600$, DCNC-Q and
EDCNC-Q achieve lower occupancy values ($4.58\times10^5$ and $4.11\times10^5$) than DCNC-L ($8.52\times10^5$) and EDCNC-L ($6.43\times10^5$).
Hence, depending on the cost budget, there may be a regime in which the simpler DCNC-L and EDCNC-L algorithms become a better choice. However, this regime does 
not exist for $K=11$, where the average occupancies under DCNC-Q ($1342$ and $1433$ respectively for the two target costs) are much lower than (E)DCNC-L.
In Fig. \ref{result1_3}, we compare cost values for given target occupancies. With $K=2$ and a target average occupancy of $9\times10^5$, the average costs achieved by DCNC-L ($1317$) and EDCNC-L ($1319$) are lower than those achieved by DCNC-Q ($1437$) and EDCNC-Q ($1432$). In contrast, if we reduce the target occupancy to  $3\times10^5$, DCNC-Q and EDCNC-Q (achieving average costs $1754$ and $1764$) outperform DCNC-L and EDCNC-L (with cost values $2.64\times10^4$ and $6879$ beyond the scope of Fig. \ref{result1_3}).
With $K=11$, DCNC-Q achieves average costs of $1286$ and $1271$ for the two target occupancies, outperforming all other algorithms.

\begin{figure*}[ht]
\centering
\begin{minipage}[t]{0.1\textwidth}
\centering \includegraphics[height=3.5cm,]{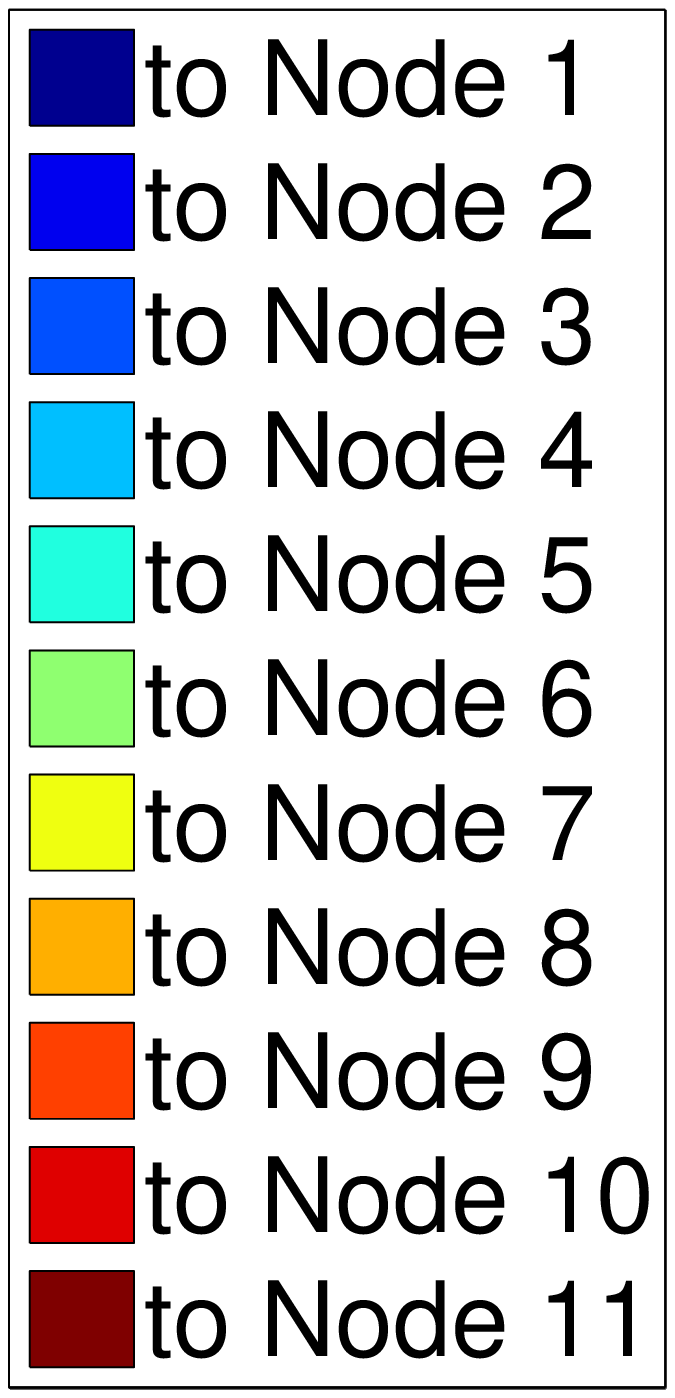}
\end{minipage}
\subfigure[]{
\centering \includegraphics[height=3.4cm,]{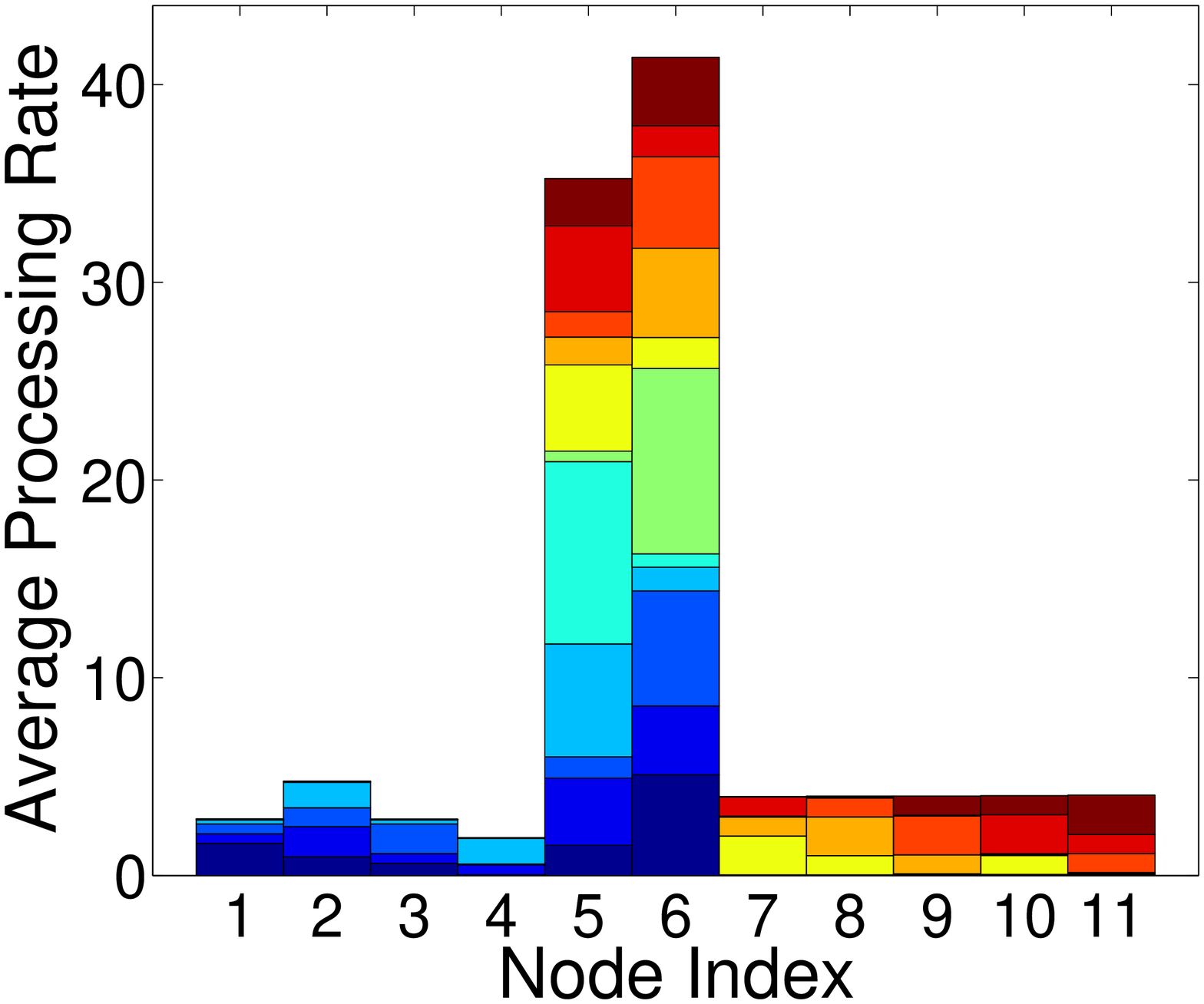}
\label{result2_1}
}
\hspace{-0.8cm}
\subfigure[]{
\centering \includegraphics[height=3.4cm]{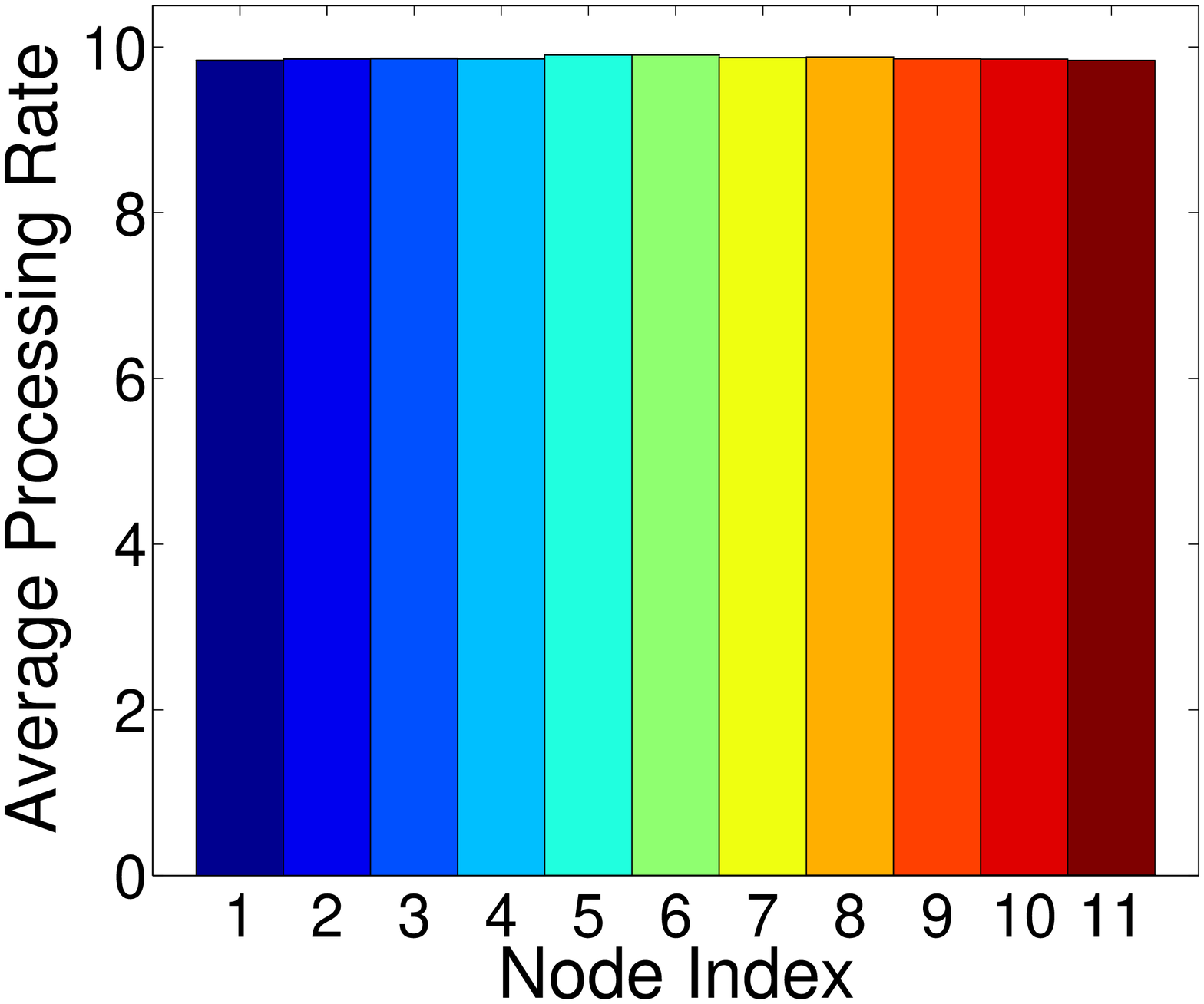}
\label{result2_2}
}
\hspace{-0.8cm}
\subfigure[]{
\centering \includegraphics[height=3.4cm]{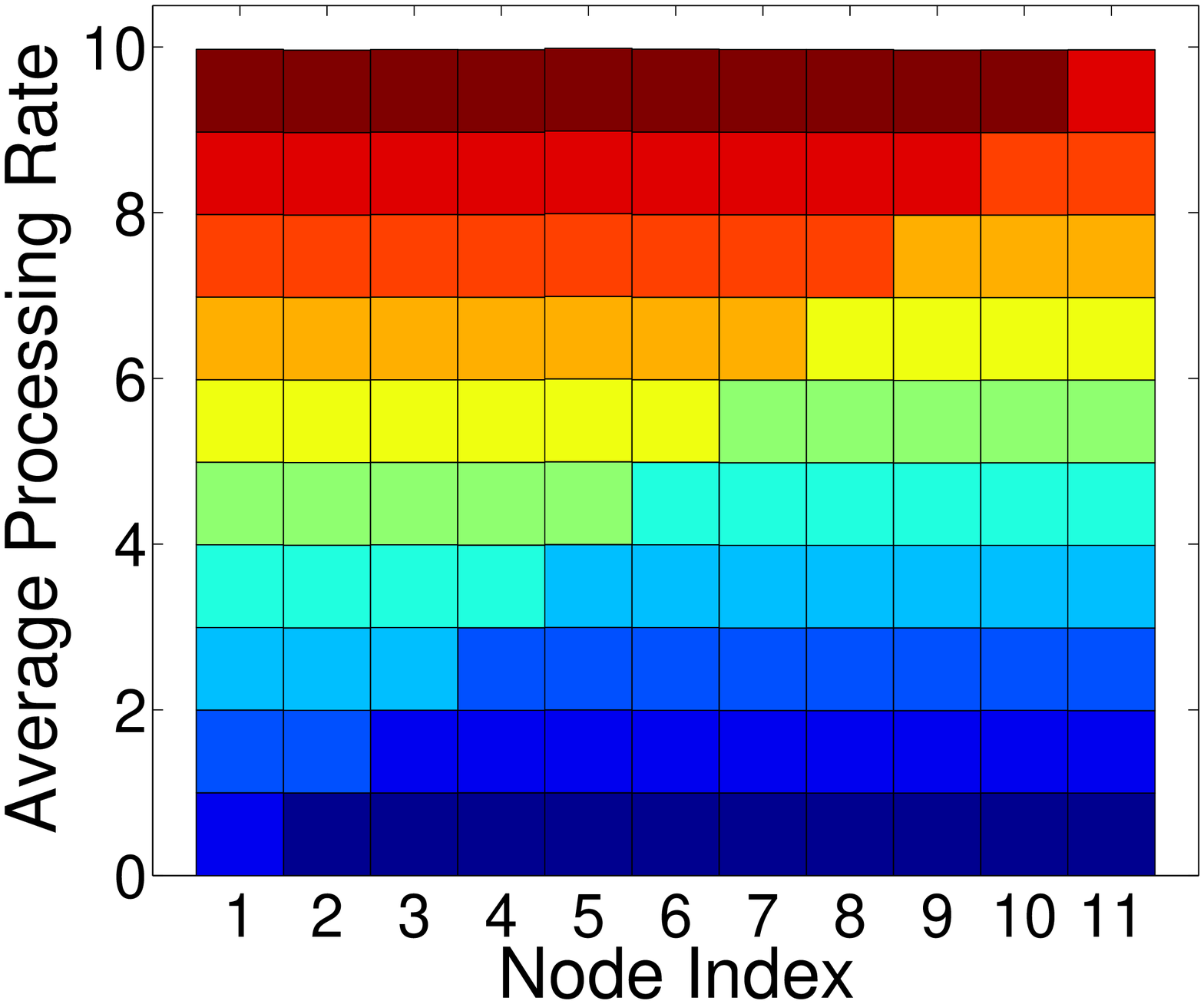}
\label{result2_3}
}
\hspace{-0.8cm}
\subfigure[]{
\centering \includegraphics[height=3.4cm]{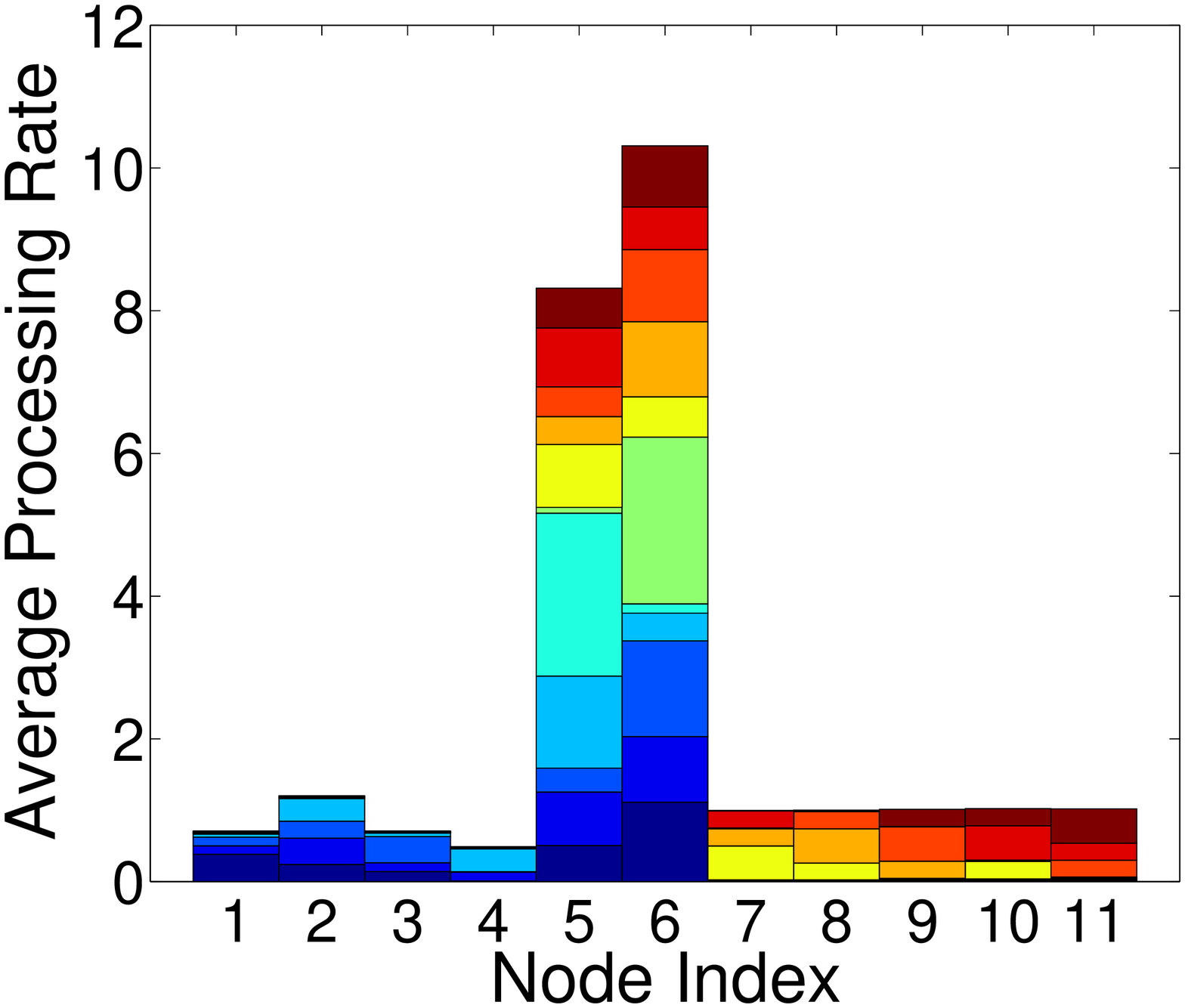}
\label{result2_4}
}
\caption{Average Processing Flow Rate Distribution. a)~Service 1, Function 1; b)~Service 1, Function 2; c)~Service 2, Function 1; d)~Service 2, Function 2. }
\vspace{-0.5cm}
\label{fig: result 2}
\end{figure*}

\vspace{-0.2cm}
\subsection{Convergence Time}

\label{subsec: simu_converge}

In Figs. \ref{result1_4} and \ref{result1_5}, we show the time evolution of the total flow conservation violation (obtained by summing over all nodes and commodities, 
the absolute value of the flow conservation violation) and the total time average cost, respectively. The average input rate of each source commodity is again set to $1$.
As expected, observe how decreasing the value of $V$ speeds up the convergence of all DCNC algorithms.
However, note from Fig. \ref{result1_5} that the converged time average cost is higher with a smaller value of $V$, 
consistent with the tradeoff established 
in Theorem \ref{thm: stability}.
Note that the slower convergence of DCNC-Q with respect to DCNC-L with the same value of $V$
does not necessarily imply a disadvantage of DCNC-Q.
In fact, due to its more balanced scheduling decisions, DCNC-Q can be designed with a smaller $V$ than DCNC-L, in order to enhance convergence speed while achieving no worse cost/delay performance.
This effect is obvious in the case of $K=11$.
As shown in Fig. \ref{result1_4} and Fig. \ref{result1_5}, with $K=11$, DCNC-Q with $V=100$ achieves faster convergence than DCNC-L with $V=400$, while their converged average cost values are similar.




\subsection{Capacity Region}
\label{subsec: simu_throughput}

Fig. \ref{result1_6} illustrates the throughput performance of the DCNC algorithms by showing the time average occupancy as a function to the input rate (kept the same for all source commodities). The simulation time is $10^6$ timeslots and
the values of $V$ used for each algorithm are chosen according to Fig. \ref{result1_2} in order to guarantee that the average cost is lower than the target value $1600$.
As the average input rate increases to $13.5$, the average occupancy under all the DCNC algorithms exhibits a sharp raise, illustrating the boundary of the cloud network capacity region (see \eqref{eq_que} and let $\kappa\rightarrow 0$).

Observe, once more, the improved delay performance achieved via the use of the STPD bias and the quadratic metric in the proposed control algorithms.


\subsection{Processing Distribution}
\label{subsec: simu_processing_distribution}

Fig. \ref{fig: result 2} shows the optimal average processing rate distribution across the cloud network nodes for each service function under the ON/OFF resource setting ($K=2$). We obtain this solution, for example, by running DCNC-L with $V=1000$ for $10^6$ timeslots.
The processing rate of function $(\phi,m)$ refers to the processing rate of its input commodity $(d,\phi,m-1)$.

Observe how the implementation of VNF $(1,1)$ mostly concentrates at node $5$ and $6$, which are the cheapest processing locations 
However, note that part of VNF $(1,1)$ for destinations in the west coast (nodes $1$ through $4$) takes place at the west coast nodes, illustrating the fact that while processing is cheaper at nodes $5$ and $6$, shorter routes can compensate the extra processing cost at the more expensive nodes. 
A similar effect can be observed for destinations in the east coast, where part of VNF$(1,1)$ takes place at east coast nodes.

Fig. \ref{result2_2} shows the average processing rate distribution for VNF $(1,2)$. Note that VNF $(1,2)$ is an expansion function. This results in the processing of commodity $(d,1,1)$ completely concentrating at the destination nodes, in order to minimize the impact of the extra cost incurred by the transmission of the larger-size commodity $(d,1,2)$ resulted from the execution of VNF $(1,2)$.

For Service $2$, note that VNF $(2,1)$ is a compression function. As expected, the implementation of VNF $(2,1)$ takes place at the source nodes, in order to reduce the transmission cost of Service $2$ by compressing commodity $(d,2,0)$ into the smaller-size commodity $(d,2,1)$ even before commodity $(d,2,0)$ flows into the network. As a result, as shown in Fig. \ref{result2_3}, for all $1\le d\le 11$, commodity $(d,2,0)$ is processed at all the nodes except node $d$, and the average processing rate of commodity $(d,2,0)$ at each node $i\neq d$ is equal to $1$, which is the average input rate per client.

Fig. \ref{result2_4} shows the average processing rate distribution for VNF $(2,2)$, which exhibits a similar distribution to VNF $(1,1)$, except for having different rate values, due to the compression effect of VNF $(2,1)$. 

\section{Extensions}
\label{sec: extensions}

In this section, we discuss interesting extensions of the DCNC algorithms presented in this paper that can be easily captured via simple modifications to our model.

\subsection{Location-Dependent Service Functions}
\label{sec: extensions_subset}

For ease of notation and presentation, throughout this paper, we have implicitly assumed that every cloud network node can implement all network functions. 
In practice, each cloud network node may only host a subset of functions $\widetilde {\mathcal M}_{\phi,i}\subseteq \mathcal M_\phi, \forall \phi\in \Phi$.
In this case, the local processing decisions at each node would be made by considering only those commodities that can be processed by the locally available functions. 
In addition, the STPD bias $Y_i^{(d,\phi,m)}$ would need to be updated as, for all $i,d,\phi$,
\begin{equation}
Y_i^{\left( {d,\phi ,m} \right)} \!\triangleq\!
\begin{cases}
\mathop {\min }\limits_{j: j\in \mathcal V, (m+1) \in \widetilde{\mathcal M}_{\phi,j}} \left\{ {H_{i,j}+1}  \right\},\ \   {\text{if}}\  m \!<\! {M_\phi },\\
H_{i,d},\quad {\text{if}}\ m = {M_\phi }.\notag
\end{cases}
\end{equation}

\subsection{Propagation Delay}

In this work, we have assumed that network delay is dominated by queueing delay, and ignored propagation delay.
However, in large-scale cloud networks, where communication links can have large distances, the propagation of data across two neighbor nodes may incur non-negligible delays.
In addition, while much smaller, the propagation delay incurred when forwarding packets for processing in a large data center may also be  non-negligible.
In order to capture propagation delays, let
$D_i^{\text{pg}}$ and $D_{ij}^{\text{pg}}$ denote the propagation delay (in timeslots) for reaching the processing unit at node $i$ and for reaching neighbor $j$ from node $i$, respectively. We then have the following queuing dynamics and service chaining constraints:
\begin{eqnarray}
&\hspace{-0.3cm}\Scale[0.97]{ Q_i^{\!(d,\phi,m)}\!(t\!+\!1) \! \leq\!\! \left[\! Q_i^{\!(d,\phi,m)}\!(t) \!- \!\! \displaystyle{\sum_{j\in\delta^{\!+\!}(i)}} \mu_{ij}^{\!(d,\phi,m)}\!(t) \!-\! \mu_{i,\text{pr}}^{\!(d,\phi,m)}\!(t) \! \right]^{\!\!+}} \notag \\
&  \Scale[0.97]{\ + \displaystyle{\sum_{j\in\delta^{\!-\!}(i)}} \mu_{ji}^{\!(d,\phi,m)}\!(t\!-\!D_{ji}^{\text{pg}}) + \mu_{\text{pr},i}^{\!(d,\phi,m)}\!(t) } + a_i^{\!(d,\phi,m)}\!(t), \label{eq_queueing_dynamic_extend}\\
&\hspace{-2.5cm}\Scale[0.97]{\mu_{\text{pr},i}^{(d,\phi,m)}(t) = \xi^{(\phi,m)}\mu_{i,\text{pr}}^{(d,\phi,m-1)}(t-D_i^{\text{pg}}).}
\label{eq_chain_extend}
\end{eqnarray}

Moreover, due to propagation delay, queue backlog observations become outdated. Specifically, the queue backlog of commodity $(d,\phi,m)$ at node $j\in \delta(i)$ observed by node $i$ at time $t$ is $Q_j^{(d,\phi,m)}(t-D_{ji}^{\text{pg}})$.

Furthermore, for EDCNC-L and EDCNC-Q, the STPD bias $Y_i^{(d,\phi,m)}$, for all $i,d,\phi$, would be updated as
\begin{equation}
Y_i^{\left( {d,\phi ,m} \right)} \!\triangleq\!
\begin{cases}
\mathop {\min }\limits_{j\in \mathcal V} \left\{{\tilde H_{i,j} + D_i^{\text{pg}}}  \right\},\ \   {\text{if}}\  m \!<\! {M_\phi }.\\
\tilde H_{i,d},\quad {\text{if}}\ m = {M_\phi },\notag
\end{cases}
\end{equation}
where $\tilde H_{i,j}$ is the length of the shortest path from node $i$ to node $j$, with link $(u,v)\in \mathcal E$ having length $D_{uv}^{\text{pg}}$.

With \eqref{eq_queueing_dynamic_extend}, \eqref{eq_chain_extend}, and the outdated backlog state observations, the proposed DCNC algorithms can still be applied and be proven to retain the established throughput, average cost, and convergence performance guarantees, while suffering from increased average delay.


\subsection{Service Tree Structure}

While most of today's network services can be described via a chain of network functions, next generation digital services may contain functions with multiple inputs.
Such services can be described via a \emph{service tree},
as shown in Fig. \ref{service_tree}.

In order to capture these type of services, we let
$\mathcal I(\phi,m)$ denote the set of commodities 
that act as input to function $(\phi,m)$, generating commodity $(d,\phi,m)$. 
The service chaining constraints are then updated as
\begin{align}
\mu_{\text{pr},i}^{(d,\phi,m)}\!(t) \!=\! \xi^{(\phi,n)}\!\mu_{i,\text{pr}}^{(d,\phi,n)}\!(t),\ \ \ \forall t,i,d,\phi,m,n\in \mathcal I(\phi,m). \notag
\end{align}
where $\xi^{(\phi,n)}, \forall n\in\mathcal I(\phi,m)$ denotes the flow size ratio between the output commodity $(d,\phi,m)$ and each of its input commodities $n\in\mathcal I(\phi,m)$.
In addition, the processing capacity constraints are updated as 
\begin{align}
&\sum_{(d,\phi,n)}\!\mu_{i,\text{pr}}^{(d,\phi,n)}\!(t)r^{(\phi,n)}\!\le\! \sum_{k\in \mathcal K_i}\!\!{C_{i,k}y_{i,k}(t)}, \quad \forall t,i, \notag
\end{align}
where $r^{(\phi,n)}$ now denotes the computation requirement of processing a unit flow of commodity $(d,\phi,n)$.

Using the above updated constraints in the LDP bound minimizations performed by the DCNC algorithms, we can provide analogous throughput, cost, and convergence time guarantees for the dynamic control of service trees in cloud networks.

\begin{figure}
\centering
\includegraphics[width=8.8cm]{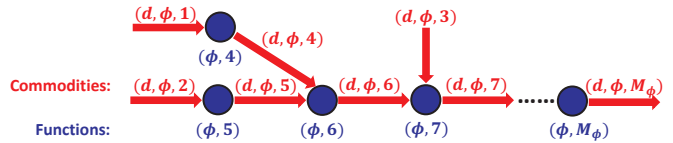}
\caption{A network service tree $\phi\in\Phi$. VNF $(\phi,m)$ takes input commodities $(d,\phi,n)$, $n\in \mathcal I(\phi,m)$, and generates commodity $(d,\phi,m)$.}
\label{service_tree}
\vspace{-0.5cm}
\end{figure}

\section{Conclusions}
\label{sec: conclusions}

We addressed the problem of dynamic control of network service chains in distributed cloud networks, in which demands are unknown and time varying.
For a given set of services, we characterized the cloud network capacity region and
designed online dynamic control algorithms
that jointly schedule flow processing and transmission decisions, along with the corresponding allocation of network and cloud resources. The proposed algorithms stabilize the underling cloud network queuing system, as long as the average input rates are within the cloud network capacity region.
The achieved average cloud network costs can be pushed arbitrarily close to minimum with probability 1, while trading off average network delay.  
Our algorithms converge to within $O(\epsilon)$ of the optimal solutions in time $O(1/\epsilon^2)$. 
DCNC-L makes local transmission and processing decisions with linear complexity with respect to the number of commodities and resource allocation choices. In comparison, DCNC-Q makes local decisions by minimizing a quadratic metric obtained from an upper bound expression of the LDP function, and we show via simulations that the cost-delay tradeoff can be significantly improved. Furthermore, both DCNC-L and DCNC-Q are enhanced by introducing a STPD bias into the scheduling decisions, yielding the EDCNC-L and EDCNC-Q algorithms, which exhibit further improved delay performance.


%

\appendices

\section{Proof of Theorem \ref{thm: network_capacity_region}}
\label{Proof of capacity_region_nessesary}

We prove Theorem \ref{thm: network_capacity_region} by separately proving necessary and sufficient conditions.

\subsection{Proof of Necessity}

We prove that constraints \eqref{eq_thm1_stability}-\eqref{betas} are required for cloud network stability and that $\overline h^*$ given in \eqref{eq_thm1_minimum_cost} is the minimum achievable cost by any stabilizing policy.

Consider an input rate matrix ${\bm \lambda}\in \Lambda(\mathcal G,{\Phi})$.
Then, there exists a stabilizing policy that supports $\bm \lambda$.
We define the following quantities for this stabilizing policy:
\begin{itemize}
\item $X_i^{(d,\phi,m)}(t)$: the number of packets of commodity $(d,\phi ,m)$ exogenously arriving at node $i$, that got delivered within the first $t$ timeslots
\item  $F_{i,\text{pr}}^{\left( {d,\phi ,m} \right)}(t)$ and $ F_{\text{pr},i}^{\left( {d,\phi ,m} \right)}(t)$: the number of packets of commodity $(d,\phi ,m)$ input to and output from the processing unit of node $i$, that got delivered within the first $t$ timeslots, respectively;
\item $F_{ij}^{\left( {d,\phi ,m} \right)}(t)$: the number of packets of commodity $(d,\phi ,m)$ transmitted through link $(i,j)$, that got delivered within the first $t$ timeslots
\end{itemize}
where we say that a packet of commodity $(d,\phi,m)$ got {\em delivered} within the first $t$ timeslots, if it got processed by functions $\{(\phi,m+1),\dots,(\phi,M_{\phi})\}$ and the resulting packet of the final commodity $(d,\phi,M_{\phi})$ exited the network at destination $d$ within the first $t$ timeslots.

The above quantities satisfy the following conservation law: 
\begin{align}
\label{eq_thm1_stabilityproof}
&\sum\nolimits_{j \in \delta^{\!-\!} \left( i \right)} {F_{ji}^{\left( {d,\phi ,m} \right)}}(t)  + F_{\text{pr},i}^{\left( {d,\phi ,m} \right)}(t)+ X_i^{(d,\phi,m)}(t)
= \nonumber\\
&\sum\nolimits_{j \in \delta^{\!+\!} \left( i \right)} {F_{ij}^{\left( {d,\phi ,m} \right)}}(t)  + F_{i,\text{pr}}^{\left( {d,\phi ,m} \right)}(t),
\end{align}
for all nodes and commodities, except for the final commodities at their respective destinations.

Furthermore, we define: 
\begin{itemize}
\item $\alpha_{i,k}(t)$: the number of timeslots within the first $t$ timelots, in which $k$ processing resource units were allocated at node $i$
\item $\beta_{i,k}^{(d,\phi,m)}(t)$: the number of packets of commodity $(d,\phi,m)$ processed by node $i$ during the $\alpha_{i,k}(t)$ timeslots in which $k$ processing resource units were allocated
\item $\alpha_{ij,k}(t)$: the number of timeslots within the first $t$ timeslots, in which $k$ transmission resource units were allocated at link $(i,j)$
\item $\beta_{ij,k}^{(d,\phi,m)}(t)$: the number of packets of commodity $(d,\phi,m)$ transmitted over link $(i,j)$ during the $\alpha_{ij,k}(t)$ timeslots in which $k$ transmission resource units were allocated
\end{itemize}

It then follows that
\begin{align}
\label{eq_process_average}
&\frac{F_{i,\text{pr}}^{(d,\phi,m)}(t)}{t}\le\frac{\alpha_{i,k}(t)}{t}\frac{\beta_{i,k}^{(d,\phi,m)}(t)r^{(\phi,m+1)}}{\alpha_{i,k}(t)C_{i,k}}\frac{C_{i,k}}{r^{(\phi,m+1)}}, \notag\\
&\qquad\qquad\qquad\qquad\qquad\qquad\qquad\quad\forall i,d,\phi,m<M_\phi,\\
\label{eq_transmission_average}
&\frac{F_{ij}^{(d,\phi,m)}(t)}{t}\le\frac{\alpha_{ij,k}(t)}{t}\frac{\beta_{ij,k}^{(d,\phi,m)}(t)}{\alpha_{ij,k}(t)C_{ij,k}}C_{ij,k},\quad \forall (i,j),d,\phi,m,
\end{align}
where we define $0/0=1$ in case of zero denominator terms.

Note that, for all $t$, we have
\begin{align}
\label{eq_process_bound}
&0\le \frac{\alpha_{i,k}(t)}{t}\le 1,\ 0\le \frac{\beta_{i,k}^{(d,\phi,m)}(t)r^{(\phi,m+1)}}{\alpha_{i,k}(t)C_{i,k}}\le 1,\\
\label{eq_transmission_bound}
&0\le \frac{\alpha_{ij,k}(t)}{t} \le 1,\ 0 \le \frac{\beta_{ij,k}^{(d,\phi,m)}(t)}{\alpha_{ij,k}(t)C_{ij,k}} \le 1.
\end{align}
In addition, let $\overline h$ represent the liminf of the average cost achieved by this policy:
\begin{equation}
\overline h \triangleq \mathop {\lim \inf }\limits_{t \to \infty } \frac{1}{t}\sum\nolimits_{\tau  = 0}^{t - 1} {h\left( \tau  \right)}.
\end{equation}
Then, due to Boltzano-Weierstrass theorem \cite{math_analysis} on a compact set, there exists an infinite subsequence $\{t_u\} \! \subseteq \! \{t\}$ such that
\begin{align}
\mathop {\lim }\limits_{t_u \to \infty } \frac{1}{t_u}\sum\nolimits_{\tau  = 0}^{t_u - 1} {h\left( \tau  \right)}=\overline h,
\end{align}
the left hand of \eqref{eq_process_average} and \eqref{eq_transmission_average} converge to $f_{i,\text{pr}}^{(d,\phi,m)}$ and $f_{ij}^{(d,\phi,m)}$:
\begin{align}
\label{eq_process_bound22}
&\mathop {\lim }\limits_{t_u \to \infty } \!\! \frac{F_{i,\text{pr}}^{(d,\phi,m)}\!(t_u)}{t_u}=f_{i,\text{pr}}^{(d,\phi,m)}\!, \mathop {\lim }\limits_{t_u \to \infty } \!\! \frac{F_{ij}^{(d,\phi,m)}\!(t_u)}{t_u}=f_{ij}^{(d,\phi,m)},
\end{align}
and the terms in \eqref{eq_process_bound} and \eqref{eq_transmission_bound} converge to $\alpha_{i,k}$, $\beta_{i,k}$, $\alpha_{ij,k}$, and $\beta_{ij,k}$:
\begin{align}
\label{eq_process_bound2}
&\mathop {\lim }\limits_{t_u \to \infty } \!\frac{\alpha_{i,k}(t_u)}{t_u}=\alpha_{i,k},\ \mathop {\lim }\limits_{t_u \to \infty } \!\frac{\beta_{i,k}^{(d,\phi,m)}(t_u)r^{(\phi,m+1)}}{\alpha_{i,k}(t_u)C_{i,k}}=\beta_{i,k},\\
\label{eq_transmission_bound2}
&\mathop {\lim }\limits_{t_u \to \infty }\frac{\alpha_{ij,k}(t_u)}{t_u}=\alpha_{ij,k},\ \mathop {\lim }\limits_{t_u \to \infty }\frac{\beta_{ij,k}^{(d,\phi,m)}(t_u)}{\alpha_{ij,k}(t)C_{ij,k}}=\beta_{ij,k}.
\end{align}
from which \eqref{alphas} and \eqref{betas} follow.

Plugging \eqref{eq_process_bound22}, \eqref{eq_process_bound2}, and \eqref{eq_transmission_bound2} respectively back into \eqref{eq_process_average} and \eqref{eq_transmission_average}, letting $t_u\rightarrow \infty$ yields
\begin{align}
\label{hao2}
& f_{i,\text{pr}}^{(d,\phi,m)} \leq \frac{1}{r^{(\phi,m+1)}}\alpha_{i,k}\beta_{i,k}^{(d,\phi,m)}C_{i,k},\\
& f_{ij}^{(d,\phi,m)} \leq \alpha_{ij,k}\beta_{ij,k}^{(d,\phi,m)}C_{ij,k},
\end{align}
from which \eqref{eq_thm1_processing_capacity_constraint} and \eqref{eq_thm1_flow_capacity_constraint} follow.

Furthermore, due to cloud network stability, we have
\begin{align}
\label{eq_lambda}
&\lim_{t \rightarrow \infty} \frac {\sum_{\tau=0}^t a_i^{(d,\phi,m)}(t)} {t} =  \lim_{t \rightarrow \infty}  \frac {X_i^{(d,\phi,m)}(t)} {t} =\lambda_i^{(d,\phi ,m)},\notag\\
&\qquad\qquad\qquad\qquad\qquad\qquad\qquad \text{w.p.1}, \ \forall i,d,\phi,m,\\
& \text{and} \notag\\
& \mathop {\lim }\limits_{{{  t}_u} \to \infty } \frac{F_{\text{pr},i}^{(d,\phi ,m)} \! (t_u)}{t_u} = \mathop {\lim }\limits_{{{  t}_u} \to \infty } \frac{\xi^{(\phi,m)} F_{i,\text{pr}}^{(d,\phi,m-1)}\!(t_u)}{t_u} \notag\\
&\qquad\qquad\qquad\qquad = {\xi ^{\left( {\phi ,m} \right)}}f_{i,\text{pr}}^{(d,\phi ,m - 1)} \notag\\
&\qquad\qquad\qquad\qquad \triangleq f_{\text{pr},i}^{(d,\phi ,m)}, \quad\, \text{w.p.1},\ \forall i,d,\phi,m,
\label{finiteD}
\end{align}
from which \eqref{eq_thm1_processing_conservation} follows.

Evaluating \eqref{eq_thm1_stabilityproof} in $\{t_u\}$, dividing by $t_u$, sending $t_u$ to $\infty$, and using \eqref{eq_process_bound22}, \eqref{eq_lambda}, and \eqref{finiteD},
Eq. \eqref{eq_thm1_stability} follows.

\pagenumbering{arabic}


Finally, from \eqref{obj}, and using the quantities defined at the beginning of this section,  we have
\begin{align}
&\frac{1}{t_u}\sum\nolimits_{\tau  = 0}^{t_u - 1} {h\left( \tau  \right)} \notag\\
& = \sum\limits_{i}\!\sum\limits_{k\in \mathcal K_i} \! \left[\!\frac{\alpha_{i,k}(t_u)w_{i,k}}{t_u} + \!\sum\limits_{(d,\phi,m)} \!\! \frac{r^{(\phi,m+1)}\beta_{i,k}^{(d,\phi,m)}\!(t_u)e_i}{t_u}\!\right] \notag\\
& + \sum\limits_{(i,j)}\!\sum\limits_{k\in \mathcal K_{ij}}\!\!\left[\frac{\alpha_{ij,k}(t_u)w_{ij,k}}{t_u} +\! \sum\limits_{(d,\phi,m)} \!\! \frac{\beta_{ij,k}^{(d,\phi,m)}\!(t_u) e_{ij}}{t_u}\right]\notag\\
& = \!\!\sum\limits_{i}\!\sum\limits_{k\in \mathcal K_i}\!\! \left[\Scale[1.04]{\!\frac{\alpha_{i,\!k}\!(t_u)w_{i,\!k}}{t_u} +  \frac{\alpha_{i,\!k}\!(t_u)}{t_u}} \!\!\!\!\sum\limits_{(d,\phi,m)} \!\!\!\!  \Scale[1.04]{\frac{r^{(\phi,m+1)}\beta_{i,k}^{(d,\phi,m)}\!(t_u) C_{i,k}{e_i}}{\alpha_{i,k}\!(t_u) C_{i,\!k}} } \!\right] \notag\\
& + \!\!\sum\limits_{(i,j)}\!\sum\limits_{k\in \mathcal K_{ij}}\!\!\left[ \Scale[1.05]{\frac{\alpha_{ij,k}\!(t_u)w_{ij,k}}{t_u} + \frac{\alpha_{ij,k}(t_u)}{t_u}} \!\!\!\sum\limits_{(d,\phi,m)}\!\!\!\!\Scale[1.04]{\frac{\beta_{ij,k}^{(d,\phi,m)}\!(t_u) C_{ij,k}e_{ij}}{\alpha_{ij,k}\!(t_u) C_{ij,k}} } \!\right] \!\! .
\end{align}
Letting $t_u\rightarrow \infty$, we obtain \eqref{eq:underline}. Finally, \eqref{eq_thm1_minimum_cost} follows from taking the minimum over all stabilizing policies.


\subsection{Proof of Sufficiency}

Given an input rate matrix $\bm\lambda \triangleq \{\lambda_i^{(d,\phi,m)} \}$, if there exits a constant $\kappa>0$ such that input rate $\{\lambda_i^{(d,\phi,m)}+\kappa \}$, together with probability values $\alpha_{ij,k}$, $\alpha_{i,k}$, $\beta_{ij,k}^{(d,\phi,m)}$, $\beta_{i,k}^{(d,\phi,m)}$, and flow variables $f_{ij}^{(d,\phi,m)}$,  $f_{i,\text{pr}}^{(d,\phi,m)}$, satisfy
(\ref{eq_thm1_stability})-(\ref{betas}),
we can construct a stationary randomized policy that uses these probabilities to make scheduling decisions, which yields the mean rates:
\begin{align}
\!\!\!\E \left\{ \mu_{i,\text{pr}}^{\left( {d,\phi ,m} \right)}\! (t)\right\}\! =\!f_{i,\text{pr}}^{(d,\phi,m)},\ \E \left\{ \mu_{ij}^{\left( {d,\phi ,m} \right)} \!(t)\right\} \!=\!  f_{ij}^{(d,\phi,m)}. \label{eq1}
\end{align}
Plugging \eqref{eq1} and $\{\lambda_i^{(d,\phi,m)}  + \kappa\}$ in  (\ref{eq_thm1_stability}), we have
\begin{eqnarray}
\label{suf1}
& \hspace{-0.3cm} \E \left \{\sum_{j \in \delta^{\!+\!}(i)}\! \mu_{ij}^{\left( {d,\phi ,m} \right)}\! (t) \!+\!   \mu_{i,\text{pr}}^{\left( {d,\phi ,m} \right)} \!(t)  \!-\!\sum_{j \in \delta^{\!-\!}(i)}\! \mu_{ji}^{\left( {d,\phi ,m} \right)} \!(t)\right.   \notag \\
&\hspace{-0.5cm}\left. -  \xi^{(d,\phi,m+1)}\mu_{i,\text{pr}}^{\left( {d,\phi ,m} \right)} (t) -a_i^{(d,\phi,m)(t)} \right\}
\geq \! \kappa .
\end{eqnarray}
By applying standard Lyapunov drift manipulations \cite{Neely_book}, we upper bound the Lyapunov drift $\Delta \left( {{\bf{Q}}\left( t \right)} \right)$ (see Sec. \ref{sec: LDP}) as
\begin{align}
\label{eq_stationary_LDP}
&\Scale[0.99]{\Delta\! \left( {{\bf{Q}}\!\left( t \right)} \right) \!\le \!N{B_0} \!+\! \sum\nolimits_{\left( {d,\phi ,m} \right),i} \!{Q_i^{\left( {d,\phi ,m} \right)}\!\!\left( t \right)\!\E\!\left\{\! {\sum\nolimits_{j\in \delta^{\!-\!}(i)}\! {\mu_{ji}^{(d,\phi ,m)}\!(t)} } \right.\! } }\nonumber\\
&\Scale[0.99]{+ \mu_{\text{pr},i}^{(d,\phi ,m)}\!(t)\! -\! \left. {\sum\nolimits_{j\in \delta^{\!+\!}(i)} {\mu_{ij}^{(d,\phi ,m)}\!(t)}  \!+\! \mu_{\text{pr},i}^{(d,\phi ,m)}\!(t) \!+\! a _i^{(d,\phi ,m)}\!(t)}\! \right\}}\nonumber\\
&\Scale[0.99]{ \le N{B_0} -\kappa\sum\nolimits_{\left( {d,\phi ,m} \right),i} {Q_i^{\left( {d,\phi ,m} \right)}\left( t \right)},}
\end{align}
where $B_0$ is a constant that depends on the system parameters.
With some additional manipulations, it follows from \eqref{eq_stationary_LDP} that the cloud network is strongly stable, \ie the total mean average
backlog is upper bounded. Therefore, $\{\lambda_i^{(d,\phi,m)} \}$ is interior to $\Lambda({\mathcal G},{ \Phi})$ (due to the existence of $\kappa$).

\section{Proof of Theorem \ref{thm: stability}}
\label{appendix_stability}

We prove Theorem \ref{thm: stability} for each DCNC algorithm by manipulating the linear term $Z(t)$ and the quadratic term $\Gamma(t)$ in the LDP upper bound expression given in \eqref{eq_lypunov_bound1}.

\subsection{DCNC-L}

We upper bound $\Gamma(t)$ in \eqref{eq_lypunov_bound1} as follows:
\begin{align}
\label{eq_L_second_moment_bound}
&\Gamma(t)\le \frac{1}{2}N\left[ {{{\left( {{\delta _{\max }}C_{\text{tr}}^{\max } + \left.C_{\text{pr}}^{\max }\right/r_{\min}} \right)}^2} + } \right.\nonumber\\
&\left. {{{\left( {{\delta _{\max }}C_{\text{tr}}^{\max } + \left.{\xi _{\max }}C_{\text{pr}}^{\max }\right/r_{\min} + {A_{\max }}} \right)}^2}} \right] \triangleq {NB_0}.
\end{align}
Plugging \eqref{eq_L_second_moment_bound}
into \eqref{eq_lypunov_bound1} yields
\begin{eqnarray}
\label{eq_lypunov_bound2}
&\hspace{-0.3cm}\Delta \left( {{\bf{Q}}\left( t \right)} \right) + V\mathbb{E}\left\{ {\left. {{h}(t)} \right|{\bf{Q}}\left( t \right)} \right\}\le \!N{B_{0}} \qquad\qquad\qquad\qquad\notag\\
&\hspace{-0.6cm}+ \mathbb{E}\left\{\! {\left. {V{h}(t)\!+\!Z(t)} \right|{\bf{Q}}\left( t \right)} \!\right\} \!+\!\!\sum\nolimits_{\left( {d,\phi ,m} \right),i} \!{\lambda _i^{\left( {d,\phi ,m} \right)}
Q_i^{\left( {d,\phi ,m} \right)}\!\!\left( t \right)}. 
\end{eqnarray}

Since ${\bm \lambda}\triangleq\{\lambda_i^{(d,\phi,m)}\}$ is interior to $\Lambda(\mathcal G,\Phi)$, there exists a positive number $\kappa$ such that ${\bm \lambda} + \kappa{\bf 1} \in \Lambda$.
According to \eqref{algor}, DCNC-L minimizes $Vh(t)+Z(t)$ among all policies subject to \eqref{cappr2}-\eqref{ys}.
We use $*$ to identify the stationary randomized policy that supports ${\bm \lambda} + \kappa{\bf 1}$ and achieves average cost ${\overline h}^*({\bm \lambda} + \kappa\bf 1)$, characterized by Theorem \ref{thm: network_capacity_region}.
The LDP function under DCNC-L can be further upper bounded as
\begin{align}
\label{eq_lypunov_bound3}
&\Scale[0.99]{\Delta \left( {{\bf{Q}}\left( t \right)} \right) + V\mathbb{E}\left\{ {\left. {{h}(t)} \right|{\bf{Q}}\left( t \right)} \right\} \le \!N{B_0}} \nonumber\\
&\Scale[0.99]{\ \ + \mathbb{E}\left\{\! {\left. {V{h^*}\!+\!Z^*\!(t)} \right|{\bf{Q}}\!\left( t \right)} \!\right\} \!+\!\!\! \sum\nolimits_{\left( {d,\phi ,m} \right),i} \!{\lambda _i^{\left( {d,\phi ,m} \right)}Q_i^{\left( {d,\phi ,m} \right)}\!\!\left( t \right)}}\nonumber
\end{align}
\begin{align}
&\Scale[0.99]{= N{B_0} + V{{\overline h}^*}\left( {\bm \lambda  + \kappa {\bf{1}}} \right)}\nonumber\\
&\Scale[0.99]{\ \ \ + \sum\nolimits_{\left( {d,\phi ,m} \right),i} {Q_i^{\left( {d,\phi ,m} \right)}\left( t \right)\left[ {\sum\nolimits_{j\in \delta^{\!-\!}(i)} {f_{ji}^{*(d,\phi ,m)}} } \right.\! +\! f_{\text{pr},i}^{*(d,\phi ,m)}} }\nonumber\\
&\Scale[0.99]{\ \ \ \qquad\qquad- \left. {\sum\nolimits_{j\in \delta^{\!+\!}(i)} {f_{ij}^{*(d,\phi ,m)}}  + f_{\text{pr},i}^{*(d,\phi ,m)} + \lambda _i^{(d,\phi ,m)}} \right]}\nonumber\\
&\Scale[0.99]{ \le N{B_0} + V{{\overline h}^*}\left( {\bm \lambda  + \kappa {\bf{1}}} \right)-\kappa\sum\nolimits_{\left( {d,\phi ,m} \right),i} {Q_i^{\left( {d,\phi ,m} \right)}\left( t \right)}.}
\end{align}
where the last inequality holds true due to \eqref{stab2}.

\vspace{-0.3cm}
\subsection{DCNC-Q}

We extract the quadratic terms $(\mu_{ij}^{(d,\phi,m)}(t))^2$ and $(\mu_{i,\text{pr}}^{(d,\phi,m)}(t))^2$ by decomposing $\Gamma(t)$ as follows:
\begin{align}
\label{eq_Phi}
\Gamma(t) = \Gamma _{\text{tr}}\left( t \right) + \Gamma _{\text{pr}}\left( t \right) + \Gamma'(t),
\end{align}
\vspace{-0.3cm}
where
\begin{align}
&\Gamma_{\text{tr}}\left( t \right) \triangleq \sum\nolimits_{(i,j)}  {\sum\nolimits_{(d,\phi,m)} {{{\left( {\mu _{ij}^{(d,\phi ,m)}\left( t \right)} \right)}^2}} }\nonumber\\
&\Gamma_{\text{pr}}\left( t \right) \triangleq \frac{1}{2} {\sum\nolimits_{(d,\phi ,m),i} \!{\left[ {{{\left( {\mu _{i,\text{pr}}^{(d,\phi ,m)}\left( t \right)} \right)}^2} \!+\! {{\left( {\mu _{\text{pr},i}^{(d,\phi ,m)}\left( t \right)} \right)}^2}} \right]} };\nonumber\\
&\Gamma'(t)\triangleq \sum\nolimits_{\left( {d,\phi ,m} \right),i} {\left\{ {\mu _{i,\text{pr}}^{(d,\phi ,m)}\left( t \right)\sum\nolimits_{j \in \delta \left( i \right)} {\mu _{ij}^{(d,\phi ,m)}\left( t \right)} } \right.}+\nonumber\\
& \! \sum\limits_{\scriptstyle j,v:j,v \in \delta \left( i \right), v \ne j}\!\!\! {\mu _{ij}^{(d,\phi ,m)}\!\!\left( t \right)\mu _{iv}^{(d,\phi ,m)}\!\!\left( t \right)} \! +\!  \mu _{\text{pr},i}^{(d,\phi ,m)}\!\!\left( t \right)\!\!\!\sum\limits_{j \in \delta \left( i \right)}\!\! {\mu _{ji}^{(d,\phi ,m)}\!\!\left( t \right)} \nonumber\\
&+\sum\nolimits_{\scriptstyle j,v: j,v \in \delta \left( i \right), v \ne j}\! {\mu _{ji}^{(d,\phi ,m)}\left( t \right)\mu _{vi}^{(d,\phi ,m)}\left( t \right)}+\frac{1}{2}\left(a_i^{(d,\phi,m)}\right)^2\nonumber\\
&+\left. {a_i^{(d,\phi ,m)}\left( t \right)\left[ {\sum\nolimits_{j \in \delta \left( i \right)} {\mu _{ji}^{(d,\phi ,m)}\left( t \right) + } \mu _{\text{pr},i}^{(d,\phi ,m)}\left( t \right)} \right]} \right\}.\nonumber
\end{align}

According to \eqref{algor_Q}, DCNC-Q minimizes the metric $\Gamma_{\text{tr}}\left( t \right) + \Gamma_{\text{pr}}\left( t \right) + Z(t) + Vh(t)$ among all policies subject to \eqref{cappr2}-\eqref{ys}.
Hence, the LDP function under DCNC-Q can be further upper bounded as
\begin{align}
\label{eq_lypunov_bound9}
&\Scale[0.99]{\Delta \left( {{\bf{Q}}\left( t \right)} \right) + V\mathbb{E}\left\{ {\left. {{h}(t)} \right|{\bf{Q}}\left( t \right)} \right\} } \nonumber\\
&\Scale[0.99]{\le \mathbb{E}\left\{\! {\left. {\Gamma'(t)\!+\!\Gamma^*_{\text{tr}} + \Gamma^*_{\text{pr}} } \right|{\bf{Q}}\!\left( t \right)} \!\right\} \!+\!V{\overline h^*}\!({\bm \lambda + \kappa {\bf 1}})}\nonumber\\
&\Scale[0.99]{ \quad+\E\left\{\left.\!Z^*\!(t)\right|{\bf Q}(t)\right\}\!+\! \sum\nolimits_{\left( {d,\phi ,m} \right),i} \!{\lambda _i^{\left( {d,\phi ,m} \right)}Q_i^{\left( {d,\phi ,m} \right)}\!\!\left( t \right)}}.
\end{align}
\vspace{-0.1cm}
On the other hand, note that
\vspace{-0.1cm}
\begin{align}
\label{eq_Q_second_moment_bound}
&\Gamma'(t)+\Gamma_{\text{tr}}^* + {\Gamma}_{\text{pr}}^*\nonumber\\
&\Scale[0.9]{\le N\left[ {\left.\left( {1 + {\xi _{\max }}} \right)C_{\text{pr}}^{\max }{\delta _{\max }}C_{\text{tr}}^{\max }\right/r_{\min} + \left( {{\delta _{\max }} - 1} \right){\delta _{\max }}
{{\left( {C_{\text{tr}}^{\max }} \right)}^2}} \right.}\nonumber\\
&\Scale[0.9]{\qquad+\left. {{A_{\max }}\left( {{\delta _{\max }}C_{\text{tr}}^{\max } + \left.{\xi _{\max }}C_{\text{pr}}^{\max }\right/r_{\min}} \right) + \frac{1}{2}{{\left( {{A_{\max }}} \right)}^2}} \right]}\nonumber\\
&\Scale[0.9]{\ \ \ \ + N{\delta _{\max }}{\left( {C_{\text{tr}}^{\max }} \right)^2} + \frac{1}{2(r_{\min})^2}N{\left( {C_{\text{pr}}^{\max }} \right)^2}\left[ {1 + {{\left( {{\xi _{\max }}} \right)}^2}} \right]}\nonumber\\
&\Scale[0.9]{= \frac{1}{2}N{\left[ {{{\left( {{\delta _{\max }}C_{\text{tr}}^{\max } + \left.C_{\text{pr}}^{\max }\right/r_{\min}} \right)}^2}} \right.}}\nonumber\\
&\Scale[0.9]{\qquad\ \ {\left.{+\! {{\left( {{\delta _{\max }}C_{\text{tr}}^{\max } \!+\! \left.\xi_{\max}C_{\text{pr}}^{\max }\right/r_{\min} \!+\! {A_{\max }}} \right)}^2}} \right]} \!=\! NB_0.}
\end{align}
Plugging \eqref{eq_Q_second_moment_bound} into \eqref{eq_lypunov_bound9} yield
\begin{eqnarray}
\label{eq_lypunov_bound5}
&\hspace{-4.4cm}\Delta \left( {{\bf{Q}}\left( t \right)} \right) + V\mathbb{E}\left\{ {\left. {{h}(t)} \right|{\bf{Q}}\left( t \right)} \right\}\notag\\ 
&\hspace{-1.4cm}\le NB_{0} + \overline h^*\left(\bm \lambda + \kappa \bf 1\right) -\kappa\sum\nolimits_{\left( {d,\phi ,m} \right),i} {Q_i^{\left( {d,\phi ,m} \right)}\left( t \right)}.
\end{eqnarray}

\subsection{EDCNC-L and EDCNC-Q}
Using \eqref{eq_abstract_backlog} in \eqref{eq_queueing_dynamic}, and  following standard LDP manipulations (see Ref. \cite{Neely_book2}), the LDP function can be upper bounded as follows:
\begin{align}
\label{eq_lp_bound_enhanced}
&\Delta \left( {{\bf{Q}}\left( t \right)} \right) + V\mathbb{E}\left\{ {\left. {{h}(t)} \right|{\bf{Q}}\left( t \right)} \right\}  \le  - \eta\Upsilon(t) \nonumber\\
&+NB_{0} + \overline h^*\left(\bm \lambda + \kappa \bf 1\right) -\kappa\sum\nolimits_{\left( {d,\phi ,m} \right),i} {\hat Q_i^{\left( {d,\phi ,m} \right)}\left( t \right)} ,
\end{align}
where
\begin{align}
\label{eq_terms_with_bias}
&\!\!\!\Scale[1]{\Upsilon\!(t) \!\triangleq\! \sum\nolimits_{(d,\phi ,m),i} \!{Y_i^{(d,\phi ,m)}\mathbb{E}\!\left\{\! {\sum\nolimits_{j \in \delta^- (i)} \!{\mu _{ji}^{(d,\phi ,m)}\!\!\left( t \right)}\! +\! a_i^{(d,\phi,m)}\!(t)} \right.} } \nonumber\\
&\Scale[1]{\left. {\left. { \!+ \mu _{\text{pr},i}^{(d,\phi ,m)}\!\!\left( t \right) \!-\! \sum\nolimits_{j \in \delta^+ (i)}\! {\mu _{ij}^{(d,\phi ,m)}\!\!\left( t \right)}  \!-\! \mu _{i,\text{pr}}^{(d,\phi ,m)}\!\!\left( t \right)} \right|\!{\bf{Q}}\left( t \right)}\! \right\}\!.}
\end{align}


Denote $Y_{\max}\triangleq\max\nolimits_{i,(d,\phi,m)}{Y_i^{(d,\phi,m)}}$, which satisfies $Y_{\max}\le \max\nolimits_{i,j}{\{H_{i,j}\}}\le N-1$. 
Then, following \eqref{eq_terms_with_bias}, we lower bound $\Upsilon(t)$ as
\begin{align}
\label{eq_bias_bound}
\Upsilon(t)\ge \Scale[0.97]{- N\left[\delta_{\max}C_{\text{tr}}^{\max} + \left.C_{\text{pr}}^{\max}\right/r_{\min}\right] \triangleq -NB_{\Upsilon}.}
\end{align}

Plugging \eqref{eq_bias_bound} into \eqref{eq_lp_bound_enhanced} and using $\hat Q_i^{(d,\phi,m)}(t)\ge Q_i^{(d,\phi,m)}(t)$ yields
\begin{align}
\label{eq_lypunov_bound7}
&\Delta \left( {{\bf{Q}}\left( t \right)} \right) + V\mathbb{E}\left\{ {\left. {{h}(t)} \right|{\bf{Q}}\left( t \right)} \right\}\notag\\
&\le NB_1 + \overline h^*\left(\bm \lambda + \kappa \bf 1\right) -\kappa\sum\nolimits_{\left( {d,\phi ,m} \right),i} {Q_i^{\left( {d,\phi ,m} \right)}\left( t \right)},
\end{align}
where $B_1\triangleq B_0 + \eta B_\Upsilon$.


\subsection{Network Stability and Average Cost Convergence with Probability 1}
We can use the theoretical result in \cite{Neely_prob_1} for the proof of network stability and average cost convergence with probability 1. Note that the following bounding conditions are satisfied in the cloud network system:
\begin{itemize}
\item The second moment $\mathbb{E}\{(h(t))^2\}$ is upper bounded by $(\sum\nolimits_{ij}{w_{ij,K_{ij}}} + \sum\nolimits_i{w_{i,K_i}})^2$ and therefore satisfies
\begin{equation}
\label{eq_prob1_bound1}
\sum\nolimits_{\tau  = 0}^\infty  {\left.{{\mathbb{E}\left\{ {{{\left( {h\left( \tau  \right)} \right)}^2}} \right\}}}\right/{\tau }}  < \infty.
\end{equation}

\item $\mathbb{E}\{\left.h(t)\right|{\bf Q}(t)\}$ is lower bounded as
\label{eq_prob1_bound2}
\begin{equation}
\mathbb{E}\{\left.h(t)\right|{\bf Q}(t)\}\ge 0.
\end{equation}

\item For all $i$, $(d,\phi,m)$, and $t$, the conditional fourth moment of backlog dynamics satisfies
\begin{align}
\label{eq_prob1_bound3}
&\!\!\!\!\mathbb{E}\left\{ {\left. {{{\left[ {Q_i^{\left( {d,\phi ,m} \right)}\left( {t + 1} \right) - Q_i^{\left( {d,\phi ,m} \right)}\left( t \right)} \right]}^4}} \right|{\bf{Q}}(t)} \right\}\nonumber\\
&\!\!\!\!\!\!\le\!\! \left(\delta_{\max}C_{\text{tr}}^{\max} + \left.\xi_{\max}C_{\text{pr}}^{\max}\right/r_{\min}+ A_{\max}\right)^4\!<\!\infty.
\end{align}
\end{itemize}
With \eqref{eq_prob1_bound1}-\eqref{eq_prob1_bound3}, based on the derivations in \cite{Neely_prob_1}, Eq. \eqref{eq_lypunov_bound3}, \eqref{eq_lypunov_bound5}, and \eqref{eq_lypunov_bound7} lead to network stability \eqref{eq_que} and average cost \eqref{eq_average_cost} convergence with probability 1 under DCNC-L, DCNC-Q, EDCNC-L(Q), respectively.

\section{Proof of Theorem \ref{thm: convergence_rate}}
\label{appendix_converge_rate}
Let's first prove Eq. \eqref{eq_cost_converge_rate}. To this end  denote $\overline h (t)\triangleq \frac{1}{t}\sum\nolimits_{\tau =0}^{t-1} \E\{h(\tau)\}$. Then, under the DCNC policy and after some algebraic manipulations similar to the ones used for \eqref{eq_lypunov_bound3}, we upper bound the LDP function as follows:
\begin{equation}
\label{eq_ldp_upper_bound1}
\Delta({\bf Q}(t)) + V\E\{\left.h(t)\right|{\bf Q}(t)\} \le NB + V\overline h^*(\bm \lambda),
\end{equation}
where $\overline h^*(\bm \lambda)$ is the minimum average cost given $\bm \lambda$. Taking the expectation over ${\bf Q}(t)$ on both sides of \eqref{eq_ldp_upper_bound1} and summing over $\tau = 0,\cdots,t-1$ further yields
\begin{equation}
\label{eq_sum_LDP}
\frac{1}{{2t}}\!\left[\! {\mathbb{E}\left\{ {{{\left\| {{\bf{Q}}\left( t \right)} \right\|}^2}} \right\} \!-\! \mathbb{E}\left\{ {{{\left\| {{\bf{Q}}\left( 0 \right)} \right\|}^2}} \right\}} \!\right] \!\le\! NB \!+\! V\left[ {{{\overline h}^*}\!({\bm \lambda} ) \!-\! \overline h\!\left( t \right)} \right].
\end{equation}
Then it follows that, by setting $V=1/\epsilon$ and for all $t\ge 1$,
\begin{align}
\label{eq_average_cost2}
\overline h(t) - {{\overline h}^*}({\bm \lambda} ) &\le \frac{{NB}}{V} + \frac{1}{{2Vt}}{\mathbb{E}\left\{ {{{\left\| {{\bf{Q}}\left( 0 \right)} \right\|}^2}} \right\}}\notag\\
&\le \left[NB + \frac{1}{2}\E\left\{\left\|{\bf Q}(0)\right\|^2\right\}\right]\epsilon,
\end{align}
which proves \eqref{eq_cost_converge_rate}.

In order to prove \eqref{eq_flow_conserv_converge_rate}, we  first
introduce  the following quantities for an arbitrary policy:
\begin{itemize}
\item ${\bf y}(t)$: the vector of elements $y_i(t)$ and $y_{ij}(t)$;
\item $\tilde {\bm \mu}(t)$: the vector of actual flow rate elements $\tilde \mu_{i,\text{pr}}^{(d,\phi,m)}(t)$, $\tilde \mu_{\text{pr},i}^{(d,\phi,m)}(t)$, and $\tilde \mu_{ij}^{(d,\phi,m)}(t)$;
\item $\tilde {\bf x}(t)$: the vector $[{\bf y}(t); \tilde {\bm \mu}(t)]$;
\item $\Delta {\bf f}(t)$: the vector of elements $\Delta f_{i}^{(d,\phi,m)}(t)$  as in \eqref{eq_queueing_dynamic2}.
\end{itemize}
Summing both sides of \eqref{eq_queueing_dynamic2} over $\tau = 0,1,\cdots,t-1$ and then dividing them by $t$, for all $i,d,\phi,m,t\ge 1$, yield,
\begin{align}
\label{eq_sum_net_rate}
&\overline{\Delta {\bf f}}(t)\triangleq\frac{1}{t}\sum\nolimits_{\tau  = 0}^{t - 1}  \E\{\Delta {\bf f}(\tau)\}  \!=\! \frac{1}{t}\E\left\{{\bf Q}(t) \!-\! {\bf Q}\left( 0 \right)\right\}.
\end{align}
\begin{lem}
\label{lem: lemma}
If $\bm \lambda$ is interior to $\Lambda(\mathcal G,\Phi)$, there exits a constant vector $\bm \rho$ such that
\begin{equation}
\label{eq_supporting_hyper_plane}
\overline h^*(\bm \lambda) \!-\! \overline h(t) \!\le\! {\bm \rho}^\dag \overline{\Delta {\bf f}}(t).
\end{equation}
\end{lem}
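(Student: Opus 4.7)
The strategy is to exploit convexity of the optimal-cost function $\overline h^*(\cdot)$ viewed as a function of the exogenous input-rate vector, and to apply a subgradient inequality at the interior point $\bm\lambda$. The key bridge between the stochastic trajectory and the static LP of Theorem~\ref{thm: network_capacity_region} is the observation that the policy's time-averaged behavior over the first $t$ timeslots constitutes a feasibility certificate of that LP, not at the true input rate $\bm\lambda$, but at the shifted rate $\bm\lambda-\overline{\Delta{\bf f}}(t)$, while still incurring cost $\overline h(t)$. Once this is in place, convexity plus a subgradient at $\bm\lambda$ delivers the claimed linear inequality.

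Convexity of $\overline h^*(\cdot)$ on $\Lambda(\mathcal G,\Phi)$ follows from a standard time-sharing argument: for any $\bm\lambda_1,\bm\lambda_2\in\Lambda$ and $\theta\in[0,1]$, mixing the two stationary randomized policies guaranteed by Theorem~\ref{thm: network_capacity_region} with probabilities $\theta$ and $1-\theta$ yields a valid stationary randomized policy that supports $\theta\bm\lambda_1+(1-\theta)\bm\lambda_2$ at expected cost $\theta\overline h^*(\bm\lambda_1)+(1-\theta)\overline h^*(\bm\lambda_2)$. Since $\bm\lambda$ is interior to $\Lambda$, standard convex analysis (e.g.\ \cite{Bertsekas_convex_opt}) ensures that $\overline h^*$ admits a subgradient $\bm\rho$ at $\bm\lambda$, so that $\overline h^*(\bm\lambda')\ge \overline h^*(\bm\lambda)+\bm\rho^\dag(\bm\lambda'-\bm\lambda)$ for all $\bm\lambda'$ in the effective domain. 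This $\bm\rho$, which depends on $\bm\lambda$ but not on $t$, will play the role of the constant vector in the lemma.

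Next, I would turn the trajectory into the required feasibility certificate, mirroring the necessity argument of Theorem~\ref{thm: network_capacity_region} but at the finite horizon $t$. Define $f_{ij}^{*(d,\phi,m)} \triangleq \frac{1}{t}\sum_{\tau=0}^{t-1}\E\{\tilde\mu_{ij}^{(d,\phi,m)}(\tau)\}$ and analogously $f_{i,\text{pr}}^{*(d,\phi,m)}$, $f_{\text{pr},i}^{*(d,\phi,m)}$, together with empirical probabilities $\alpha_{ij,k}^*$, $\alpha_{i,k}^*$, $\beta_{ij,k}^{*(d,\phi,m)}$, $\beta_{i,k}^{*(d,\phi,m)}$ obtained by time-averaging the scheduling indicators. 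Using the definition \eqref{eq_delta_f_def} and relation \eqref{eq_sum_net_rate}, these averages satisfy the generalized flow-conservation constraint \eqref{eq_thm1_stability} with equality when $\bm\lambda$ is replaced by $\bm\lambda-\overline{\Delta{\bf f}}(t)$; the service-chaining constraint \eqref{eq_thm1_processing_conservation} is inherited from \eqref{chaining} by linearity of expectation; and the capacity and normalization constraints \eqref{eq_thm1_processing_capacity_constraint}--\eqref{betas} follow from \eqref{cappr2}--\eqref{ys}. By \eqref{obj}, the cost associated with this randomized policy is exactly $\overline h(t)$. Hence $\overline h^*(\bm\lambda-\overline{\Delta{\bf f}}(t))\le \overline h(t)$, and applying the subgradient inequality with $\bm\lambda'=\bm\lambda-\overline{\Delta{\bf f}}(t)$ yields
$\overline h(t)\ge \overline h^*(\bm\lambda)-\bm\rho^\dag\overline{\Delta{\bf f}}(t)$, which is precisely \eqref{eq_supporting_hyper_plane}.

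The main obstacle is the bookkeeping in this trajectory-to-policy conversion: one must verify that the time-averaged $\{f^*,\alpha^*,\beta^*\}$ genuinely form a feasible point of the LP of Theorem~\ref{thm: network_capacity_region} at the shifted input rate, in particular that the capacity constraints \eqref{eq_thm1_processing_capacity_constraint}--\eqref{eq_thm1_flow_capacity_constraint} survive the averaging (so that the empirical $\beta^*$ can be interpreted as the conditional fraction of capacity devoted to each commodity) and that the normalization bounds $\sum_k\alpha^*\le 1$ and $\sum_{(d,\phi,m)}\beta^{*(d,\phi,m)}\le 1$ hold by virtue of being empirical distributions. Note that no quantitative control on $\|\bm\rho\|$ is needed here; the lemma asserts only existence of such a vector, and the constant $\bm\rho$ will be absorbed into the $O(\cdot)$ estimates in the subsequent use of the lemma within the proof of Theorem~\ref{thm: convergence_rate}.
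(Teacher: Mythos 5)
Your argument is correct, but it reaches \eqref{eq_supporting_hyper_plane} by a genuinely different (dual) route from the paper. The paper works in the \emph{action space}: it forms the convex hull $\overline{\mathcal X}$ of feasible action vectors, considers the static program of minimizing $h(\mathbf{x})$ subject to $\Delta\mathbf{f}(\mathbf{x})\preceq 0$ over $\overline{\mathcal X}$ (whose optimal value is $\overline h^*(\bm\lambda)$ and which admits a Slater point precisely because $\bm\lambda$ is interior), invokes Farkas' lemma to obtain a multiplier $\bm\rho\succeq 0$ with $h(\mathbf{x})-\overline h^*(\bm\lambda)+\bm\rho^\dag\Delta\mathbf{f}(\mathbf{x})\ge 0$ for every $\mathbf{x}\in\overline{\mathcal X}$, and then simply evaluates this pointwise inequality at each realized $\tilde{\mathbf{x}}(\tau)$, averages over $\tau$, and uses $h(\tilde{\mathbf{x}}(\tau))\le h(\mathbf{x}(\tau))$. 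You instead work in the \emph{rate space}: convexity of the value function $\overline h^*(\cdot)$, a subgradient at the interior point $\bm\lambda$, and the observation that the trajectory's expected time averages certify LP feasibility at the perturbed rate $\bm\lambda-\overline{\Delta\mathbf{f}}(t)$ at cost at most $\overline h(t)$. These are two faces of the same duality --- the Farkas multiplier \emph{is} a subgradient of the perturbation function --- but the mechanics differ: the paper's slotwise evaluation needs no finite-horizon feasibility certificate, whereas your route avoids Farkas at the price of the trajectory-to-policy bookkeeping you correctly flag (which does go through, exactly as in the necessity half of Theorem~\ref{thm: network_capacity_region}, with the queue growth $(\mathbf{Q}(t)-\mathbf{Q}(0))/t$ absorbing the conservation slack). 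Two small points to tidy up: (i) the shifted rate $\bm\lambda-\overline{\Delta\mathbf{f}}(t)$ may have negative components, so you should take $\overline h^*$ to be the value function of the LP of Theorem~\ref{thm: network_capacity_region} on its full (convex) feasibility domain rather than only on nonnegative rates --- your certificate shows the shifted point lies in that domain, and the subgradient inequality at $\bm\lambda$ applies there; (ii) your sign convention for $\Delta\mathbf{f}$ follows \eqref{eq_queueing_dynamic2} rather than the literal \eqref{eq_delta_f_def}, which is the consistent choice and is immaterial since only $\|\bm\rho\|$ is used downstream.
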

\begin{proof}
The proof is given in Appendix \ref{appendix_proof_lemma} 
\end{proof}

Plugging \eqref{eq_sum_net_rate} into \eqref{eq_supporting_hyper_plane} yields
\begin{align}
\label{eq_farkas_result}
&\overline h^*(\bm \lambda) - \overline h(t) \le  \frac{1}{t}{\bm \rho}^\dag \mathbb{E}\left\{{\bf Q}(t)-{\bf Q}(0)\right\}\nonumber\\
&\le \frac{1}{t}\left\| {\bm \rho}  \right\|\cdot\left( {\left\| \mathbb{E}\{{\bf Q}\left( t \right)\} \right\| + \left\| \mathbb{E}\{{\bf Q}\left( 0 \right)\} \right\|} \right),
\end{align}
Under the DCNC policy, by further plugging \eqref{eq_farkas_result}  into the right hand side of \eqref{eq_sum_LDP}, we have
\begin{align}
\label{eq_sum_LDP2}
&\frac{1}{{2t}}\!\left(\! {\mathbb{E}\left\{ {{{\left\| {{\bf{Q}}\left( t \right)} \right\|}^2}} \right\} \!-\! \mathbb{E}\left\{ {{{\left\| {{\bf{Q}}\left( 0 \right)} \right\|}^2}} \right\}} \!\right)\nonumber\\
&\le NB + V { \frac{\left\| {\bm \rho}  \right\|}{t}\left( {\left\| {\mathbb{E}\left\{ {{\bf Q}\left( t \right)} \right\}} \right\| + \left\| {\mathbb{E}\left\{ {{\bf Q}\left( 0 \right)} \right\}} \right\|} \right)}.
\end{align}
By using the fact ${\mathbb{E}\{ {{{\left\| {{\bf{Q}}\left( t \right)} \right\|}^2}} \}} \ge \left\| \mathbb{E}\{{\bf{Q}}\left( t \right)\} \right\|^2$ in \eqref{eq_sum_LDP2}, it follows that
\begin{align}
\label{eq_second_order_inequality}
&{\left\| \mathbb{E}\{{\bf{Q}}\left( t \right)\} \right\|^2} - 2V\left\| {\bm \rho}  \right\|\cdot\left\| \mathbb{E}\{{\bf{Q}}\left( t \right)\} \right\| - 2NBt \notag\\
& \quad- {\mathbb{E}\{ {{{\left\| {{\bf{Q}}\left( 0 \right)} \right\|}^2}} \}} - 2V\left\| {\bm \rho}  \right\|\cdot\left\| \mathbb{E}\{{\bf{Q}}\left( 0 \right)\} \right\|\le 0.
\end{align}
The largest value of $\|\E\{{\bf Q}(t)\}\|$ that satisfies \eqref{eq_second_order_inequality} is given by
\begin{align}
\label{eq_queue_upper_bound_second_order}
&\left\| \mathbb{E}\{{\bf{Q}}\left( t \right)\} \right\| \le V\left\| {\bm \rho}  \right\| + \notag\\
&\sqrt {{V^2}{{\left\| {\bm \rho}  \right\|}^2} + 2NBt + {\mathbb{E}\{ {{{\left\| {{\bf{Q}}\left( 0 \right)} \right\|}^2}} \}} + 2V\left\| {\bm \rho}  \right\|\cdot\left\| \mathbb{E}\{{\bf{Q}}\left( 0 \right)\} \right\|}\notag\\
&\le\!V\left\| {\bm \rho}  \right\| \!+\! \sqrt{\left(V\left\| {\bm \rho}  \right\| \!+\! \E\{\left\|{\bf Q}(0)\right\|\}\right)^2 \!+\! 2NBt \!+\! {\text{var}}\{\left\|{\bf Q}(0)\right\|\}}.
\end{align}

Finally, by setting $V=1/\epsilon$ and $t=1/\epsilon^2$, we plug \eqref{eq_queue_upper_bound_second_order} back into the right hand side of \eqref{eq_sum_net_rate} and obtain
\begin{align}
&\frac{1}{t}\sum\nolimits_{\tau  = 0}^{t - 1}  {\mathbb{E}\left\{\Delta f_i^{(d,\phi ,m)}\left( \tau  \right)\right\}}  \le \frac{1}{t}\left(\left\|\mathbb{E}\{{\bf Q}(t)\}\right\| + \left\|\mathbb{E}\{{\bf Q}(0)\}\right\|\right)\nonumber\\
&\le \frac{V\left\| {\bm \rho}  \right\|}{t} + \frac{\left\|\mathbb{E}\{{\bf Q}(0)\}\right\|}{t}\nonumber\\
&\qquad +\frac{1}{t}\sqrt{\left(V\left\| {\bm \rho}  \right\| \!+\! \E\{\left\|{\bf Q}(0)\right\|\}\right)^2 \!+\! 2NBt \!+\! {\text{var}}\{\left\|{\bf Q}(0)\right\|\}}\notag\\
&\le\left( 2\mathbb{E}\{\left\|{\bf Q}(0)\right\|\}\!+\!\sqrt{{\text{var}}\{\left\|{\bf Q}(0)\right\|\}}\right)\!\epsilon^2 \!+\! \left(\sqrt{2NB}\!+\!2\left\| {\bm \rho}  \right\|\right)\!\epsilon,
\end{align}
which proves \eqref{eq_flow_conserv_converge_rate}.


\section{Proof of Lemma \ref{lem: lemma}}
\label{appendix_proof_lemma}

Given an arbitrary policy, define
\begin{itemize}
\item $\bm \mu(t)$: the vector of assigned flow rate elements $ \mu_{i,\text{pr}}^{(d,\phi,m)}(t)$, $ \mu_{\text{pr},i}^{(d,\phi,m)}(t)$, and $ \mu_{ij}^{(d,\phi,m)}(t)$;
\item ${\bf x}(t)$: the vector $[{\bf y}(t);{\bf x}(t)]$.
\end{itemize}
With a little bit abuse of notation, denote $h({\bf x}(t))\triangleq h(t)$; $ \Delta {\bf f}({\tilde {\bf x}(t)})\triangleq\Delta {\bf f}(t)$. In addition, let  ${\mathcal X}$ represent the set of all possible vectors ${\bf x}(t)$ that satisfy the
constraints \eqref{chain2}-\eqref{ys}. Note that $\tilde {\bf x}(t)$ also belongs to $\mathcal X$. Furthermore, let  $\overline {\mathcal X}$ represent the convex hull of ${\mathcal X}$.
Then, for all vectors ${\bf x}\in \overline {\mathcal X}$, the following conditions are satisfied:
\begin{enumerate}
\item $h({\bf x})-\overline h^*({\bm \lambda})$ and $\Delta {\bf f}({\bf x})$ are convex for all ${\bf x}\in \overline {\mathcal X}$;
\item $h({\bf x})-\overline h^*({\bm \lambda})\ge 0$ for all ${\bf x}\in \overline {\mathcal X}$ with $\Delta {\bf f}({\bf x})\preceq 0$;
\item $\exists\hat {\bf x}\in \overline{\mathcal X}$ with $\Delta {\bf f}(\hat{\bf x})\prec 0$, given $\bm \lambda$ interior to $\Lambda({\mathcal G},\Phi)$.
\end{enumerate}
Item 2) above results immediately from Theorem \ref{thm: network_capacity_region}, where any ${\bf x}\in \overline {\mathcal X}$ with $\Delta {\bf f}({\bf x})\preceq 0$ can be treated as the $\E\{{\bf x}(t)\}$ under a stabilizing stationary randomized policy.
Hence, according to Farkas' Lemma \cite{Bertsekas_convex_opt},
there exists a constant vector $\bm \rho\succeq 0$ such that
\begin{equation}
\label{eq_seperating_hyperplane2}
h({\bf x})-\overline h^* ({\bm \lambda})+ {\bm \rho}^\dag \Delta{\bf f}({\bf x}) \ge 0,\quad \forall {\bf x}\in \overline {\mathcal X}.
\end{equation}

Evaluating \eqref{eq_seperating_hyperplane2} in $\tilde {\bf x}(\tau)$ with $\tau=0, \ldots, t-1$, we have
\begin{equation}
\label{eq_seperating_hyperplane3}
\frac{1}{t}\sum\nolimits_{\tau=0}^{t-1}h(\tilde {\bf x}(\tau))-\overline h^* ({\bm \lambda})+ \frac{1}{t}{\bm \rho}^\dag \sum\nolimits_{\tau=0}^{t-1} \Delta{\bf f}(\tilde {\bf x}(\tau)) \!\ge\! 0,
\end{equation}
from which it follows that
\begin{align}
\!{\bm \rho}^\dag \overline{\Delta {\bf f}}(t) \!&=\! \frac{1}{t} {\bm \rho}^\dag \sum\limits_{\tau=0}^{t-1} \E\{\Delta{\bf f}(\tilde {\bf x}(t))\} \!\ge\! \overline h^* ({\bm \lambda}) \!-\! \frac{1}{t} \sum\limits_{\tau=0}^{t-1}\E\{h(\tilde {\bf x}(\tau))\}\notag\\
&\buildrel (a) \over \ge\! \overline h^* ({\bm \lambda}) \!-\! \frac{1}{t} \sum\limits_{\tau=0}^{t-1}\E\{h( {\bf x}(\tau))\} \!=\! \overline h^* ({\bm \lambda}) \!-\! \overline h(t),
\end{align}
where the inequality $(a)$ is due to $h(\tilde {\bf x}(t))\le h( {\bf x}(t))$ that results from the fact $\tilde {\bm \mu}(t)\preceq {\bm \mu}(t)$.

\IEEEpeerreviewmaketitle

\end{document}